\documentclass[11pt,letterpaper]{article}

\usepackage{booktabs}
\usepackage{enumerate}
\usepackage{authblk}
\usepackage{amsmath,amsthm,amsfonts,amssymb,color,hyperref,anysize,enumerate,graphicx,epstopdf,algorithm,algpseudocode,cleveref,multirow}
\usepackage[usenames,dvipsnames]{xcolor}
\usepackage[margin=1in]{geometry}

\newcommand{\rrplusn}{\mathbb{R}_+^n}
\newcommand{\bps}{\mathbf{p}^*}
\newcommand{\ps}{p^*}

\newcommand{\rhoi}{{\rho_i}}

\newcommand{\rint}{\mathsf{rint}}
\newcommand{\inner}[2]{\left\langle ~#1 ~,~ #2~\right\rangle}
\newcommand{\KL}{\mathrm{KL}}

\newcommand{\hide}[1]{}

\newcommand{\rr}{\mathbb{R}}

\newcommand{\la}{\leftarrow}
\newcommand{\ra}{\rightarrow}

\renewcommand{\bold}[1]{\textbf{#1}}

\newcommand{\PPAD}{\textsf{PPAD}}

\DeclareMathOperator*{\argmin}{arg\,min}
\DeclareMathOperator*{\argmax}{arg\,max}

\newtheorem{defn}{Definition}

\newcommand{\bbb}{\mathbf{b}}

\newcommand{\bbe}{\mathbf{e}}

\newcommand{\bbp}{\mathbf{p}}

\newcommand{\bbx}{\mathbf{x}}
\newcommand{\bby}{\mathbf{y}}

\newcommand{\aij}{a_{ij}}
\newcommand{\aik}{a_{ik}}
\newcommand{\bij}{b_{ij}}

\newcommand{\bijt}{b_{ij}^t}
\newcommand{\bikt}{b_{ik}^t}
\newcommand{\cij}{c_{ij}}
\newcommand{\dg}{d_g}

\newcommand{\ei}{e_{i}}
\newcommand{\pj}{p_{j}}
\newcommand{\pjt}{p_{j}^t}
\newcommand{\pkt}{p_{k}^t}
\newcommand{\xij}{x_{ij}}

\newcommand{\xijt}{x_{ij}^t}
\newcommand{\xikt}{x_{ik}^t}

\newcommand{\zj}{z_j}

\newcommand{\hsp}{\hspace*{0.1in}}
\newcommand{\hhsp}{\hspace*{0.05in}}

\newcommand{\innerprod}[2]{\langle #1,#2\rangle}

\setlength{\parindent}{0.2in}
\setlength{\parskip}{0.05in}

\newcommand\numberthis{\addtocounter{equation}{1}\tag{\theequation}}
\newtheorem{theorem}{Theorem}[section]
\newtheorem{lemma}{Lemma}[section]

\newtheorem{corollary}{Corollary}[section]

\newcommand*\samethanks[1][\value{footnote}]{\footnotemark[#1]}

\begin{document}
\title{Dynamics of Distributed Updating in Fisher Markets}  
\author[1]{Yun Kuen Cheung}
\author[2]{Richard Cole\thanks{The work of Richard Cole and Yixin Tao was supported in part by NSF grant CCF-1527568.}}
\author[2]{Yixin Tao\samethanks}
\affil[1]{Max Planck Institute for Informatics, Saarland Informatics Campus}
\affil[2]{Courant Institute, NYU}
\maketitle

\begin{abstract}
A major goal in Algorithmic Game Theory is to justify equilibrium concepts from an algorithmic and complexity perspective.
One appealing approach is to identify natural distributed algorithms that converge quickly to an equilibrium.
This paper established new convergence results for two generalizations of Proportional Response in Fisher markets with buyers having CES utility functions.
The starting points are respectively a new convex and a new convex-concave formulation of such markets.
The two generalizations correspond to suitable mirror descent algorithms applied to these formulations.
Several of our new results are a consequence of new notions of strong Bregman convexity
and of strong Bregman convex-concave functions, and associated linear rates of convergence, which may be of independent interest.

Among other results, we analyze a damped generalized  Proportional Response and show a linear rate of convergence in
a Fisher market with buyers whose utility functions cover the full spectrum of CES utilities aside the extremes of linear and Leontief utilities;
when these utilities are included, we obtain an empirical $O(1/T)$ rate of convergence.

\smallskip

\noindent\bold{Keywords.} Proportional Response, mirror descent, Bregman divergence, Fisher market
\end{abstract}

\section{Introduction}\label{sec:intro}

One of the most important results in Algorithmic Game Theory is the \PPAD-hardness of finding a Nash Equilibrium
in finite games~\cite{DaskalakisGP09,ChenDT2009},
which serves as a strong evidence that there is no efficient algorithm to compute Nash Equilibria.
Similar hardness results have been established for markets~\cite{CSVY2006,CDDT2009,CT2009,VY2011,CPY2017}.
By viewing the players and the environment collectively as implicitly performing a computation, 
these hardness results indicate that, in general, a market cannot reach an equilibrium quickly. 
In Kamal Jain's words: ``If your laptop cannot find it, neither can the market''~\cite[Chapter 2.1]{NRTV07}.

As a result, a lot of attention has been given to the design of polynomial-time algorithms
to find equilibria, either exactly or approximately, for specific families of games and markets.
Most of these algorithms can be categorized as either simplex-like (e.g., Lemke-Howson), numerical methods
(e.g., the interior-point method or the ellipsoid method), 
or some carefully-crafted combinatorial algorithm (e.g., flow-based algorithms
for computing a market equilibrium for linear utility functions).

However, it seems implausible that these algorithms describe the implicit computations in games or markets.
For many markets would appear to have a highly distributed environment, or need to make rapid decisions on an ongoing basis.
These features would appear to preclude computations which require centralized coordination,
which is essential for the three categories of algorithms above.
Consequently, in order to justify equilibrium concepts, we want natural algorithms
which could plausibly be running (in an implicit form) in the associated distributed environments.

This paper will focus on Fisher markets 
(or economies\footnote{In the CS literature the term market has been widely used to refer to economies;
we follow this practice.}).
In a Fisher market there are buyers who start with money which they have no desire to keep,
and sellers who have goods to sell, which they wish to sell in their entirety for money.
This is a modest generalization of the notion of 
Competitive Equilibrium from Equal Incomes (CEEI)~\cite{ZH79,Varian74}.
In prior work on computing equilibria for these markets,
there has been a particular focus on Eisenberg-Gale markets, a term
coined by Jain and Vazirani, and their generalizations~\cite{JainVaz07};
Eisenberg-Gale markets are Fisher markets in which demands are determined by homothetic utility functions.
The latter markets have been seen to capture the notion of proportional fairness,
as defined in the networking community~\cite{Kelly-PF},
which is also equivalent to the optimum Nash Social Welfare~\cite{Nash50,KN79}.

Two natural dynamics have been studied in the context of Fisher markets.
The first, which is perhaps the most intuitive candidate for a natural algorithm, is tatonnement,
in which the price of a good is raised if the demand exceeds the supply of the good, and decreased if the demand is too small.
Implicitly, buyers' demands are assumed to be a best-response to the current prices.
This highly intuitive algorithm was proposed by Walras well over a century ago~\cite{Walras1874}.

Proportional Response, in contrast, is a buyer-oriented update,
originally analyzed in an effort to explain the behavior of peer-to-peer networks~\cite{WZ2007,LLSB08}.
Here, buyers update their spending in proportion to the contribution each good makes to its current utility.
While its meaning is clear for linear and other separable utilities,
for other classes of utilities this needs more interpretation, which we provide
in Section~\ref{sec:prelim}. Here prices are assumed to equal the current spending.
An $O(1/T)$ rate of convergence was shown in~\cite{BDX2011} for Fisher markets with buyers having
linear utilities, and  for the substitutes domain excluding linear utilities,
a faster linear rate (i.e., $\exp(-\Omega(T))$ rate) of convergence was shown in~\cite{Zhang2011}.

This paper continues the exploration of the connection between distributed dynamic processes and
convex optimization, and more specifically the relation of Proportional Response to mirror descent.

Our first set of results starts by rederiving Zhang's bounds for CES substitutes utilities,
by showing that for this setting Proportional Response amounts to mirror descent on a suitable convex function.
To achieve the linear rate of convergence he obtained, we need to go beyond the standard $O(1/T)$ rate of convergence
for mirror descent with a Bregman divergence. 
We proceed by analogy with gradient descent.
Gradient descent with a Lipschitz constraint on the gradients
guarantees only an $O(1/T)$ rate of convergence, but
a faster linear rate of convergence is obtained
when the objective function $f$ is strongly convex.
For mirror descent with Bregman Divergences we introduce the notion of 
\emph{strong Bregman convexity} and show that it also leads to a linear convergence rate.
It turns out that the convex function associated with the CES substitutes utilities
satisfies strong Bregman convexity, thereby obtaining Zhang's bound anew.
In addition, for complementary CES utilities, the same now concave function satisfies an analogous
strong Bregman concavity property, which also yields a linear rate of convergence for these utilities.
In addition, if we include linear utilities in the substitutes utilities, we obtain an $O(1/T)$ rate of convergence;
likewise, including Leontief utilities in the complementary utilities also yields an $O(1/T)$ rate.

Next, we seek to handle substitute and complementary CES utilities simultaneously.
The challenge we face is that the objective function used for the first set of results becomes a mixed
concave-convex function in this setting, and the equilibrium corresponds to a saddle point of this function.
We introduce the further notion of strongly-Bregman convex-concave functions,
and for these functions we obtain a linear rate of convergence to the saddle point.
Again, our objective function of the mixed CES utilities satisfies this property, 
thereby yielding linear convergence, albeit now for a \emph{Damped} Proportional Response,
rather than the undamped Proportional Response analyzed in the first set of results.
Here, including linear utilities and Leontief utilities yields an \emph{empirical} $O(1/T)$ rate of convergence.

We note that our results are not a straightforward application of the existing mirror descent toolbox.
The Bregman notions and the related convergence results in this paper are new.
While the results for strong Bregman convex (resp.~concave) functions are natural generalizations of gradient descent (resp.~ascent)
on standard strong convex (resp.~concave) functions,
the technique for demonstrating convergence for strong Bregman convex-concave functions appears to be new.
It is not evident that suitable damping (i.e., reducing the step-size) permits a clean convergence analysis.
Indeed, results showing linear point-wise convergence on convex-concave functions are rare; in fact,
the only such work we are aware of is~\cite{gidel2017frank}.
We believe the new notions and convergence results for optimization problems may be of wider interest.

\paragraph{Roadmap}
In Section~\ref{sec:prelim} we give necessary definitions and notation,
and in Section~\ref{sec:results} we state our results precisely.
Then, in Section~\ref{sec:related-work}, we describe related work.
In Section~\ref{sec:strong-Breg}  we carry out the analysis of mirror descent
when Strong Bregman convexity holds, and then derive the convergence behavior of
Proportional Response in each of the substitutes and complements domains.
In Section~\ref{sec:saddle-point}, we perform an analogous analysis
for strong Bregman convex-concave functions, and deduce the convergence behavior
 of a Damped Proportional Response in combined substitutes and complements domains.
Finally, in Section~\ref{sec:discussion} we 
discuss the rates of convergence under some alternate measures.
\section{Notation and Definitions}\label{sec:prelim}

We use bold symbols, e.g., $\bbp,\bbx,\bbe$, to denote vectors.

\paragraph{Fisher Market}
In a Fisher market, there are $n$ perfectly divisible goods and $m$ buyers.
Without loss of generality, the supply of each good $j$ is normalized to be one unit.
Each buyer $i$ has a utility function $u_i:\rrplusn \ra \rr$, and a budget of size $e_i$.
At any given price vector $\bbp\in \rrplusn$, each buyer purchases a maximum utility
affordable collection of goods.
More precisely, $\bbx_i\in \rrplusn$ is said to be a demand of buyer $i$ if $~\bbx_i~\in~\argmax_{\bbx':~\bbx'\cdot \bbp \le e_i}~u_i(\bbx')$.

A price vector $\bps \in \rrplusn$ is called a \emph{market equilibrium} if at $\bps$, there exists a demand $\bbx_i$ of each buyer $i$ such that
$$
\ps_j > 0~~\Rightarrow~~\sum_{i=1}^m x_{ij} ~=~ 1
\hsp\hsp\text{and}\hsp\hsp
\ps_j = 0~~\Rightarrow~~\sum_{i=1}^m x_{ij} ~\le~ 1.
$$
The collection of $\bbx_i$ is said to be an \emph{equilibrium allocation} to the buyers.

\paragraph{CES utilities}
In this paper, each buyer $i$'s utility function is of the form
$$
u_i(\bbx_i) ~=~ \left(\sum_{j=1}^n a_{ij} \cdot (x_{ij})^\rhoi\right)^{1/\rhoi},
$$
for some $-\infty \le \rhoi \le 1$.
$u_i(\bbx_i)$ is called a Constant Elasticity of Substitution (CES) utility function.
They are a class of utility functions often used in economic analysis.
The limit as $\rhoi \ra -\infty$ is called a Leontief utility, usually written as
$u_i(\bbx_i) = \min_j \frac{\xij}{c_{ij}}$ \footnote{Here, the utility function $u_i(\bbx) = \min_j \frac{\xij}{c_{ij}}$ can be seen as the limit of $u_i(\bbx) = \Big( \sum_j \left(\frac{\xij}{c_{ij}}\right)^{\rho_i}\Big)^{\frac{1}{\rho_i}}$ as $\rho_i$ tends to $-\infty$.}; and the limit as $\rhoi \ra 0$ is called a Cobb-Douglas utility, 
usually written as $\prod_j {\xij}^{a_{ij}}$, with $\sum_j \aij = 1$. 
The utilities with $\rhoi\ge 0$ capture goods that are substitutes, 
and those with $\rhoi \le 0$ goods that are
complements.

\paragraph{Notation} Buyer $i$'s spending on good $j$, denoted by $\bij$, is given by 
$\bij = \xij \cdot \pj$.
Also, $\zj = \sum_i \xij -1$ denotes the excess demand for good $j$.
We sometimes index prices, spending, and demands by $t$ to indicate the relevant value at time $t$.
Finally, we use a superscript of $^*$ to indicate an equilibrium value.

\paragraph{Bregman Divergence and Mirror Descent}
Let $C$ be a compact and convex set. Given a differentiable convex function $h(\bbx)$ with domain $C$,
the \emph{Bregman divergence} generated by kernel $h$ is denoted by $d_h$, 
and is defined as:
$$
d_h(\bbx,\bby) ~=~ h(\bbx) ~-~ \left[~h(\bby) ~+~ \inner{\nabla h(\bby)}{\bbx-\bby}~\right],~~~~\forall \bbx\in C\hhsp\text{and}\hhsp\bby \in \rint(C),
$$
where $\rint(C)$ is the relative interior of $C$.
We note that, in general, $d_h$ is asymmetric, i.e., possibly $d_h(\bbx,\bby) \neq d_h(\bby,\bbx)$.
In this paper, we use the Kullback-Leibler or KL divergence extensively;
it is the Bregman divergence generated by $h(\bbx) = \sum_j (x_j \cdot \ln x_j - x_j)$.
When $\sum_j x_j ~=~ \sum_j y_j$, the explicit formula is:
$$
\KL(\bbx || \bby) ~:=~ \sum_j x_j \cdot \ln \frac{x_j}{y_j}.
$$

For the problem of minimizing a convex function $f(\bbx)$ subject to $\bbx\in C$, the mirror descent method w.r.t.~Bregman divergence $d_h$
is given by the following update rule:
\begin{equation}\label{eq:ggd-update-rule}
\bbx^{t+1} ~=~ \argmin_{\bbx\in C} ~\left\{~f(\bbx^t) ~+~ \inner{\nabla f(\bbx^t)}{\bbx-\bbx^t} ~+~ \frac{1}{\Gamma_t} \cdot d_h(\bbx,\bbx^t)~\right\},
\end{equation}
where $\Gamma_t > 0$, and may be dependent on $t$.

\paragraph{Proportional Response}
For linear utility functions, Proportional Response
is the dynamic given by the spending update rule:
\[
b_{ij}^{t+1} = \ei \cdot \frac {\aij \xijt} {\sum_k \aik \xikt}
= \ei \cdot \frac {\aij \frac {\bijt}{\pjt}} {\sum_k \aik \frac {\bikt}{\pkt}}
\hsp \text{with } p_k^{t} = \sum_i b_{ik}^{t}.
\]
For substitutes CES utilities,~\cite{Zhang2011} generalized this rule to:
\begin{equation}
\label{eqn:subst-CES-PR-rule}
b_{ij}^{t+1} =  \ei\cdot  \frac {\aij (\xijt)^\rhoi} {\sum_k \aik (\xikt)^\rhoi}
= \ei\cdot  \frac {\aij \left(\frac {\bijt}{\pjt}\right)^\rhoi} 
              {\sum_k \aik \left(\frac {\bikt}{\pkt}\right)^\rhoi}
\end{equation}
obtaining a linear convergence rate for the resulting dynamic, assuming $0 < \rhoi < 1$.
The above rule has a natural distributed interpretation in the Fisher market setting:
in each round, each buyer splits her spending on different goods in proportion to the values of $\aik (\xikt)^\rhoi$.
The seller of good $j$ then allocates the good to buyers in proportion to the spending received from each buyer.
\section{Results}\label{sec:results}

\subsection{Proportional Response}

It is natural to seek to extend the Proportional Response rule~\eqref{eqn:subst-CES-PR-rule}
to the complementary domain, but this rule does not lead to convergent behavior in general.
Set $\rho = -1$.
Suppose there are two buyers and two items. 
Both buyers have the same preference for each item and the same budgets;
i.e.\ $a_{11}=a_{12}=a_{21}=a_{22}=\frac 12$, and $e_1 = e_2 = 1$.
Initially, at time $t=0$, suppose that $b_{11}^{(0)}= \frac 14$, $b_{12}^{(0)}= \frac 34$,
$b_{21}^{(0)}= \frac 34$, and $b_{22}^{(0)}= \frac 14$.
A simple calculation shows that applying update rule \eqref{eqn:subst-CES-PR-rule}
gives $b_{11}^{(1)}= \frac 34$, $b_{12}^{(1)}= \frac 14$,
$b_{21}^{(1)}= \frac 14$, and $b_{22}^{(1)}= \frac 34$.
So this simple example shows that in this setting, the spending will not converge to the market equilibrium;
rather, it will cycle among two states.

Instead, we observe that in the substitutes domain, excluding Cobb-Douglas utilities,
this rule is the mirror descent updating
rule using the KL divergence for the following optimization problem.

\begin{align*}
& \min_{\bbb} \hsp \Phi(\bbb) = - \sum_{ij} \frac{\bij}{\rhoi} 
 \log \frac{\aij (\bij)^{\rhoi - 1} } 
 {(\sum_{h}  b_{hj})^{\rho_i} } \\
&
\text{subject to}~\hhsp \textstyle{\sum_j} \bij = \ei \hsp \text{for all } i,
\hsp\text{and}~~\bij \geq 0~~~~\text{for all } i, j.
\end{align*}
We exclude Cobb-Douglas utilities, because as $\rhoi \ra 0$ the corresponding term
in $\Phi$ tends to $\infty$.
When restricted to linear utilities, i.e.\ $\rho_i = 1$ for all $i$, this is simply Shmyrev's convex program \cite{shmyrev2009algorithm} for these markets.

In the complementary domain, the mirror descent updating rule for this function is:
\begin{align}
& b_{ij}^{t+1} = \ei\cdot \frac { \Big(\frac{\aij} { {(\pjt)}^{\rhoi}}\Big)^\frac{1}{1 - \rhoi} }
 {\sum_k   \Big(\frac{\aik}{ {(\pkt)}^{\rhoi}}\Big)^\frac{1}{1 - \rhoi} }
\hsp\text{for}~-\infty < \rhoi < 0,
\hsp\hsp \text{and}~~~~b_{ij}^{t+1} = \ei \cdot \frac { \cij \pjt} {\sum_k c_{ik} \pkt}
\hsp\text{for}~ \rhoi = -\infty,\nonumber\\
&\text{where } \pkt = \textstyle{\sum_i} \bikt.\label{eqn:comp-CES-PR-rule}
\end{align}
Accordingly, we adopt this as the generalization of Proportional Response to the complementary domain.
This rule can be easily implemented in the distributed environment of Fisher markets.
In each round, each buyer only needs the prices computed in the previous round to compute its update.
Thus, to implement the update rule, it suffices to have the sellers broadcast their prices by the end of each round.

Interestingly, this update rule is also the best response action to the current prices for each buyer. 
We note that this rule can be viewed as a tatonnement update if we define
$x_{ij}^{t+1} = b_{ij}^{t+1}/\pjt$,
for then $p_j^{t+1} = \sum_i b_{ij}^{t+1} = \sum_i x_{ij}^{t+1} \pjt = \pjt(1 + z_{j}^{t+1})$. 
However, this is not the same rule as was used for
the tatonnement analyzed in recent works regarding Fisher markets~\cite{CF2008,CCD2013}.

For linear utilities, update rule~\eqref{eqn:subst-CES-PR-rule} was analyzed in~\cite{BDX2011}.
For the substitutes domain excluding linear utilities, a faster linear rate
of convergence was shown in~\cite{Zhang2011}, but not based
on considering the above optimization problem.
To obtain a linear rate of convergence for an analysis based on optimizing
$\Phi$ via a mirror descent
with a KL Divergence, we introduce the notion of \emph{strong Bregman convexity}.
We also coin the term Bregman convexity for an analogous notion introduced in~\cite{BDX2011}).

\begin{defn}
\label{def:LBreg}
The function $f$ is $L$-Bregman convex w.r.t.~Bregman divergence $d_h$ if, 
for any $\bby\in \rint(C)$ and $\bbx\in C$,
$$
f(\bby) + \inner{\nabla f(\bby)}{\bbx-\bby} ~~\le~~ f(\bbx) ~~\le~~ f(\bby) + \inner{\nabla f(\bby)}{\bbx-\bby} ~+~ L \cdot d_h(\bbx,\bby).
$$
The function $f$ is $(\sigma,L)$-strongly Bregman convex w.r.t.~Bregman divergence $d_h$ if,
$0<\sigma \le L$, and,
for any $\bby\in \rint(C)$, $\bbx\in C$,
$$
f(\bby) + \inner{\nabla f(\bby)}{\bbx-\bby} ~+~ \sigma \cdot d_h(\bbx,\bby) ~~\le~~ f(\bbx) ~~\le~~ f(\bby) + \inner{\nabla f(\bby)}{\bbx-\bby} ~+~ L \cdot d_h(\bbx,\bby).
$$
If the direction of the inequalities and the signs on the $ d_h(\bbx,\bby)$ terms are reversed, the function is said to be
Bregman concave (or strongly Bregman concave respectively).
($\rint(C)$ denotes the relative interior of $C$.)
\end{defn}
Consider the update rule:
\begin{equation}
\label{eq:md-update-rule}
\bbx^{t+1} ~~\la~~ \argmin_{\bby} \left\{ \inner{\nabla f(\bbx^t)}{\bby-\bbx^t} ~+~ L\cdot d_h(\bby,\bbx^t) \right\}.
\end{equation}

\noindent\begin{theorem}
\label{thm:md-convergence-rate}
Suppose that $f$ is $(\sigma,L)$-strongly Bregman convex w.r.t.~$d_h$.
If update rule \eqref{eq:md-update-rule} is applied, then,
for all $t\ge 1$,
$$
f(\bbx^t) - f(\bbx^*) ~~\le~~ \frac{\sigma}{\left(\frac{L}{L-\sigma}\right)^t - 1} \cdot d_h(\bbx^*,\bbx^0).
$$
\end{theorem}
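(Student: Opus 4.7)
The plan is to follow the standard mirror-descent pattern: use the upper Bregman bound to control $f(\bbx^{t+1})-f(\bbx^t)$, use the lower (strong) Bregman bound at $\bbx^*$ to extract a $\sigma\cdot d_h(\bbx^*,\bbx^t)$ gain, and use the optimality condition of the update rule (via the three-point identity for $d_h$) to trade inner products for divergences. Specifically, the upper bound in Definition~\ref{def:LBreg} gives
\[
f(\bbx^{t+1})-f(\bbx^t) \;\le\; \inner{\nabla f(\bbx^t)}{\bbx^{t+1}-\bbx^t} + L\cdot d_h(\bbx^{t+1},\bbx^t),
\]
while the lower bound at $\bby=\bbx^t$, $\bbx=\bbx^*$ gives
\[
\inner{\nabla f(\bbx^t)}{\bbx^t-\bbx^*} \;\ge\; f(\bbx^t)-f(\bbx^*) + \sigma\cdot d_h(\bbx^*,\bbx^t).
\]

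Next, I would invoke the first-order optimality condition for~\eqref{eq:md-update-rule}, namely $\inner{\nabla f(\bbx^t)+L(\nabla h(\bbx^{t+1})-\nabla h(\bbx^t))}{\bbx^*-\bbx^{t+1}}\ge 0$, together with the three-point identity
\[
\inner{\nabla h(\bbx^t)-\nabla h(\bbx^{t+1})}{\bbx^*-\bbx^{t+1}} \;=\; d_h(\bbx^*,\bbx^t)-d_h(\bbx^*,\bbx^{t+1})-d_h(\bbx^{t+1},\bbx^t),
\]
to obtain a bound on $\inner{\nabla f(\bbx^t)}{\bbx^{t+1}-\bbx^*}$. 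Adding the three inequalities, the $d_h(\bbx^{t+1},\bbx^t)$ terms cancel, and writing $D_t := d_h(\bbx^*,\bbx^t)$ and $\delta_t := f(\bbx^t)-f(\bbx^*)$, I should arrive at the clean one-step recurrence
\[
L\cdot D_t \;+\; \delta_t \;\le\; (L-\sigma)\cdot D_{t-1}.
\]

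Let $r:=(L-\sigma)/L\in(0,1)$. Dividing through by $r^t$ and telescoping from $t=1$ to $T$, the $D$-terms telescope and I obtain $\frac{1}{L}\sum_{t=1}^T r^{-t}\delta_t \le D_0$. To convert the weighted sum into a bound on $\delta_T$ alone, I would use that the iterates have monotone non-increasing objective: plugging $\bby=\bbx^t$ into the min in~\eqref{eq:md-update-rule} yields $\inner{\nabla f(\bbx^t)}{\bbx^{t+1}-\bbx^t}+L\cdot d_h(\bbx^{t+1},\bbx^t)\le 0$, which combined with the upper Bregman bound gives $f(\bbx^{t+1})\le f(\bbx^t)$, hence $\delta_t\ge \delta_T$ for all $t\le T$. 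The geometric sum evaluates to $\sum_{t=1}^T r^{-t}=\frac{(L/(L-\sigma))^T-1}{1-r}=\frac{L}{\sigma}\bigl((L/(L-\sigma))^T-1\bigr)$, and rearranging gives exactly the stated bound.

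The main obstacle, if any, is the bookkeeping in combining the three inequalities to cancel $d_h(\bbx^{t+1},\bbx^t)$ cleanly and obtain the coefficient $(L-\sigma)$ on $D_{t-1}$; one has to be careful because the upper Bregman bound uses the step $L$ while only the lower bound supplies the $\sigma$. The other subtle point is that a naive telescoping yields only the weaker bound $\delta_T \le L\,r^T D_0$ (differing by a factor $L/\sigma$ for large $T$), so using the monotonicity $\delta_t\ge \delta_T$ inside the weighted sum is what sharpens the constant to match the stated $\sigma/((L/(L-\sigma))^T-1)$.
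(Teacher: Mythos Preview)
Your proposal is correct and follows essentially the same approach as the paper: the paper packages your first-order optimality plus three-point identity as Lemma~\ref{ineq::basic::bregman} (the Chen--Teboulle inequality) and your monotonicity argument as Lemma~\ref{lem::next_smaller}, then combines them with the two Bregman bounds exactly as you do to obtain the recurrence $f(\bbx^{t+1})-f(\bbx^*)\le (L-\sigma)d_h(\bbx^*,\bbx^t)-L\,d_h(\bbx^*,\bbx^{t+1})$, multiplies by $(L/(L-\sigma))^t$, telescopes, and uses monotonicity to pull out $\delta_T$. (One minor typo: your three-point identity has a sign flip---the left-hand side should read $\inner{\nabla h(\bbx^{t+1})-\nabla h(\bbx^t)}{\bbx^*-\bbx^{t+1}}$---but the bound you derive from it and the subsequent recurrence are correct.)
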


An analogous Theorem for $L$-Bregman convex functions was given in \cite{BDX2011}:
\begin{theorem}{\cite{BDX2011}} \label{thm::plain::convex}
Suppose $f$ is an $L$-Bregman convex function w.r.t.~$d$, and $\bbx^T$ is the point reached after $T$ applications of the mirror descent update rule \eqref{eq:md-update-rule}. Then,
\begin{align*}
f(\bbx^T) - f(\bbx^*) \leq \frac{L \cdot d(\bbx^*, \bbx^0)}{T}.
\end{align*}
\end{theorem}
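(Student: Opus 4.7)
The plan is to combine three ingredients — the $L$-upper bound of Definition~\ref{def:LBreg}, ordinary convexity (the lower bound with $\sigma=0$), and the first-order optimality condition for the mirror-descent update \eqref{eq:md-update-rule} — to obtain a per-step inequality whose right-hand side is a telescoping difference of Bregman divergences. Summing one pass yields an averaged $O(1/T)$ bound, which the monotone descent property $f(\bbx^{t+1}) \leq f(\bbx^t)$ promotes to a pointwise bound on $f(\bbx^T)$.

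First I would record the three-point identity $\inner{\nabla h(\bby) - \nabla h(\bbz)}{\bbx - \bby} = d(\bbx,\bbz) - d(\bbx,\bby) - d(\bby,\bbz)$ and apply it with $\bbx = \bbx^*$, $\bby = \bbx^{t+1}$, $\bbz = \bbx^t$ to the variational inequality $\inner{\nabla f(\bbx^t) + L(\nabla h(\bbx^{t+1}) - \nabla h(\bbx^t))}{\bbx^* - \bbx^{t+1}} \geq 0$ coming from \eqref{eq:md-update-rule}. This yields
\begin{equation*}
\inner{\nabla f(\bbx^t)}{\bbx^{t+1} - \bbx^*} ~\leq~ L\bigl[d(\bbx^*,\bbx^t) - d(\bbx^*,\bbx^{t+1}) - d(\bbx^{t+1},\bbx^t)\bigr].
\end{equation*}
Next I would invoke the $L$-upper bound at $(\bbx^{t+1},\bbx^t)$, split the inner product $\inner{\nabla f(\bbx^t)}{\bbx^{t+1} - \bbx^t}$ as $\inner{\nabla f(\bbx^t)}{\bbx^{t+1} - \bbx^*} + \inner{\nabla f(\bbx^t)}{\bbx^* - \bbx^t}$, substitute the displayed bound into the first piece, and use plain convexity at $(\bbx^*,\bbx^t)$ to replace $f(\bbx^t) + \inner{\nabla f(\bbx^t)}{\bbx^* - \bbx^t}$ by $f(\bbx^*)$. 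The two $L\cdot d(\bbx^{t+1},\bbx^t)$ terms cancel and the key per-step recursion
\begin{equation*}
f(\bbx^{t+1}) - f(\bbx^*) ~+~ L\cdot d(\bbx^*,\bbx^{t+1}) ~\leq~ L\cdot d(\bbx^*,\bbx^t)
\end{equation*}
emerges.

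Finally, summing this recursion over $t = 0,\dots,T-1$ telescopes the divergence terms and discards the nonnegative $L\cdot d(\bbx^*,\bbx^T)$, giving $\sum_{t=1}^{T} (f(\bbx^t) - f(\bbx^*)) \leq L \cdot d(\bbx^*,\bbx^0)$. The monotone descent claim $f(\bbx^{t+1}) \leq f(\bbx^t)$ follows because the minimand in \eqref{eq:md-update-rule} vanishes at $\bby = \bbx^t$ and so is nonpositive at the minimizer $\bbx^{t+1}$, and the $L$-upper bound at $(\bbx^{t+1},\bbx^t)$ then bounds $f(\bbx^{t+1}) - f(\bbx^t)$ by this nonpositive value. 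Hence every summand is at least $f(\bbx^T) - f(\bbx^*)$, so $T(f(\bbx^T) - f(\bbx^*)) \leq L \cdot d(\bbx^*,\bbx^0)$, which is the claim. The only genuinely non-routine point is aligning the three-point identity with the $L$-upper bound so that the $d(\bbx^{t+1},\bbx^t)$ contributions cancel exactly; once that cancellation is in hand the rest is straightforward bookkeeping.
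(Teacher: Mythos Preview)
Your proposal is correct and matches the paper's approach. The paper does not give a separate proof of Theorem~\ref{thm::plain::convex} (it is cited from~\cite{BDX2011}), but its proof of the strong-Bregman version, Theorem~\ref{thm:md-convergence-rate}, specializes to exactly your argument when $\sigma=0$: the same per-step inequality $f(\bbx^{t+1}) - f(\bbx^*) \leq L\cdot d(\bbx^*,\bbx^t) - L\cdot d(\bbx^*,\bbx^{t+1})$ is derived, telescoped, and combined with the monotonicity $f(\bbx^{t+1})\le f(\bbx^t)$. The only cosmetic difference is that the paper invokes Lemma~\ref{ineq::basic::bregman} (the Chen--Teboulle three-point inequality) and Lemma~\ref{lem::next_smaller} as packaged statements, whereas you unwind the three-point identity and the descent property by hand; the content is identical.
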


We show that in the CES substitutes domain, excluding linear utilities,
$\Phi$ is a strongly Bregman convex function
w.r.t.~the KL-divergence on spending,
thereby providing an alternative derivation of Zhang's result.
In addition, in the CES complements domain, excluding Leontief utilities,
$\Phi$ is a strongly Bregman concave function
w.r.t.~the KL divergence on spending,
which yields a proof of linear convergence in this domain.
These analyses are readily modified to give a $1/T$ rate of convergence
if we include respectively the linear and Leontief utilities.
These results are made precise in the following theorem.

\begin{theorem}
\label{thm:PRconvergence}
Suppose buyers 
with substitutes utilities
repeatedly update their spending using  
Proportional Response rule~\eqref{eqn:subst-CES-PR-rule},
and those with complementary utilities use rule~\eqref{eqn:comp-CES-PR-rule}.
Then the potential function $\Phi$ converges to the market equilibrium as follows.
\begin{itemize}
\item
If all buyers have substitutes CES utilities, then
\[
\Phi(\mathbf{b}^{T}) - \Phi(\mathbf{b}^*) \leq \frac{1}{T} \sum_i \frac{1}{\rho_i} \KL(b_i^* || b_i^0).
\]
\item
Suppose that in addition no buyer has a linear utility.
Let $\sigma = \min_i\{1 - \rho_i\}$. Then,
\begin{align*}
\Phi(\mathbf{b}^{T}) - \Phi(\mathbf{b}^*) \leq \frac{\sigma (1 - \sigma)^T}{1 - (1 - \sigma)^T}  \sum_i \frac{1}{\rho_i} \KL(b_i^* || b_i^0).
\end{align*}
\item
If all buyers have complementary CES utilities, then\footnote{For $\rhoi = - \infty$, we define $\frac{\rhoi - 1} {\rhoi}$ to equal 1.}
\begin{align*}
\Phi(\mathbf{b}^*) - \Phi(\mathbf{b}^{T}) \leq \frac{1}{T} \sum_i \frac{\rho_i - 1}{\rho_i} \KL(b_i^* || b_i^0).
\end{align*}
\item
Suppose that in addition no buyer has a Leontief utility.
Let $\sigma = \min_i\left\{\frac{1}{1 - \rho_i}\right\}$. Then,
\begin{align*}
\Phi(\mathbf{b}^*) - \Phi(\mathbf{b}^{T})\leq \frac{\sigma (1 - \sigma)^T}{1 - (1 - \sigma)^T}  \sum_i \frac{\rho_i - 1}{\rho_i} \KL(b_i^* || b_i^0).
\end{align*}
\end{itemize}
\end{theorem}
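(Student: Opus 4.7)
The plan is to realize each Proportional Response rule as a mirror descent/ascent step on the potential $\Phi$ with an appropriately chosen kernel, compute the Bregman divergence of $\Phi$ itself in closed form, and verify the Bregman convexity/concavity constants needed to invoke Theorems~\ref{thm::plain::convex} and~\ref{thm:md-convergence-rate}. The substitutes and complements arguments run in parallel.

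First I would establish the mirror-descent correspondence. For substitutes take the kernel $d_h(\bbb,\bbb') = \sum_i \tfrac{1}{\rhoi}\KL(b_i \,||\, b_i')$ and for complements take $d_h(\bbb,\bbb') = \sum_i \tfrac{\rho_i-1}{\rhoi}\KL(b_i \,||\, b_i')$; both sets of weights are positive in their respective regimes. Writing out the KKT conditions for the mirror step~\eqref{eq:md-update-rule} with $L=1$ separately per buyer (both the budget constraints and the kernel split across buyers), and substituting the explicit gradient
\[
\tfrac{\partial\Phi}{\partial b_{ij}}(\bbb^t) \;=\; -\tfrac{1}{\rhoi}\log a_{ij} \;-\; \tfrac{\rho_i-1}{\rhoi}\log b_{ij}^t \;+\; \log p_j^t \;+\; \tfrac{1}{\rhoi},
\]
the $\log b_{ij}^t$ contributions combine in the substitutes case to give $b_{ij}^{t+1}\propto a_{ij}(b_{ij}^t/p_j^t)^{\rhoi}$, while in the complements case (mirror ascent on the concave $\Phi$) they exactly cancel, leaving $b_{ij}^{t+1}\propto (a_{ij}/(p_j^t)^{\rhoi})^{1/(1-\rhoi)}$. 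After budget normalization these are precisely~\eqref{eqn:subst-CES-PR-rule} and~\eqref{eqn:comp-CES-PR-rule}, confirming $L=1$ is the correct step size.

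Next I would compute $d_\Phi$ in closed form. Expanding the logarithm,
\[
\Phi(\bbb) \;=\; -\sum_{ij}\tfrac{b_{ij}}{\rhoi}\log a_{ij} \;+\; \sum_{ij}\tfrac{1-\rhoi}{\rhoi}\,b_{ij}\log b_{ij} \;+\; \sum_j p_j\log p_j,
\]
with $p_j = \sum_i b_{ij}$ linear in $\bbb$. Bregman divergences are additive across summands and vanish on linear terms, and the identities $\sum_j(b_{ij}-b_{ij}')=0$ and $\sum_j(p_j-p_j')=0$ (buyer budgets and total spending are preserved on the feasible set) eliminate all first-order remainders, giving
\[
d_\Phi(\bbb,\bbb') \;=\; \sum_i \tfrac{1-\rhoi}{\rhoi}\,\KL(b_i\,||\,b_i') \;+\; \KL(\bbp\,||\,\bbp').
\]
The only structural inequality needed is the log-sum inequality (equivalently, joint convexity of KL), yielding $\sum_i \KL(b_i\,||\,b_i') \ge \KL(\bbp\,||\,\bbp')$.

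Given this identity, the four bounds follow by arithmetic. For substitutes, $d_h - d_\Phi = \sum_i \KL(b_i\,||\,b_i') - \KL(\bbp\,||\,\bbp') \ge 0$ establishes $1$-Bregman convexity of $\Phi$; and $d_\Phi - \sigma\,d_h = \sum_i \tfrac{1-\rhoi-\sigma}{\rhoi}\KL(b_i\,||\,b_i') + \KL(\bbp\,||\,\bbp') \ge 0$ whenever $\sigma \le \min_i(1-\rhoi)$, giving $(\sigma,1)$-strong Bregman convexity absent linear utilities. The complements case is dual: $d_h - d_{-\Phi} = \KL(\bbp\,||\,\bbp')\ge 0$ yields $1$-Bregman convexity of $-\Phi$, and the check $(1-\sigma)\tfrac{\rho_i-1}{\rhoi} \ge 1 \iff \sigma \le \tfrac{1}{1-\rhoi}$ yields the strong version absent Leontief utilities. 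Substituting $L=1$ and the corresponding $\sigma$ into Theorems~\ref{thm::plain::convex} and~\ref{thm:md-convergence-rate}, and noting $\tfrac{\sigma}{(L/(L-\sigma))^t-1}$ simplifies to $\tfrac{\sigma(1-\sigma)^t}{1-(1-\sigma)^t}$, delivers all four stated bounds. The main technical nuisance---not a real obstacle---is boundary behaviour: Cobb-Douglas ($\rhoi\to 0$) is excluded since $\Phi$ diverges, Leontief ($\rhoi\to-\infty$) is handled by the footnote convention $(\rho_i-1)/\rho_i\to 1$, and the degradation from linear to $1/T$ convergence when linear or Leontief utilities appear reflects $\sigma\to 0$ at exactly those endpoints.
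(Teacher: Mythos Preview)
Your proposal is correct and follows essentially the same approach as the paper: identify each Proportional Response rule as mirror descent with the weighted KL kernel and $L=1$, compute the second-order remainder $\Phi(\bbb)-\Phi(\bbb')-\langle\nabla\Phi(\bbb'),\bbb-\bbb'\rangle = \sum_i\tfrac{1-\rho_i}{\rho_i}\KL(b_i\,||\,b_i')+\KL(\bbp\,||\,\bbp')$ in closed form, use the log-sum inequality $\sum_i\KL(b_i\,||\,b_i')\ge\KL(\bbp\,||\,\bbp')$ to read off the (strong) Bregman convexity/concavity constants, and invoke Theorems~\ref{thm::plain::convex} and~\ref{thm:md-convergence-rate}. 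The only item you leave implicit that the paper states separately is that the optimizer of $\Phi$ coincides with the market equilibrium spending, but this is orthogonal to the convergence argument itself.
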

The  results are shown in Corollaries~\ref{cor:subst-incl-linear},~\ref{cor:subst-no-linear},~\ref{cor:comp-incl-Leontief}, and~\ref{cor:comp-no-Leontief}, respectively.
We also note that as shown in Lemma~13 in~\cite{BDX2011}, if $\bij = e_i/m$ for all $i$ and $j$,
then $\KL(\mathbf{b}^* || \mathbf{b}) \le \log mn$, which provides a possibly more intuitive version of the above bounds.

Theorem~\ref{thm:PRconvergence} does not cover buyers with Cobb-Douglas utilities,
because, as already noted, the terms in $\Phi$ for such buyers are equal to $\infty$.
Note that these buyers always wish to allocate their spending in fixed proportions regardless
of the prices.
Thus, arguably, it would be natural for these buyers to always have the equilibrium spending.
But even if this were not true initially, after one update this property would hold, and remain true
henceforth. Thus the presence of these buyers would seem to have little effect on the convergence.
Indeed, the above bounds hold with $\KL(\bbb_i^*||\bbb_i^0)$ replaced by $\KL(\bbb_i^*||\bbb_i^1)$ and $T$ replaced by $T-1$ on the RHS.
But to obtain bounds in terms of  $\KL(\bbb_i^*||\bbb_i^0)$ appears to require substantially more effort;
this analysis is given in Appendix~\ref{sec::entire_range}(see Theorems~\ref{thm::pure::sub::cobb}--\ref{thm::linear::cobb::com}).
The rates of convergence are similar to those given in Theorem~\ref{thm:PRconvergence}.

\subsection{Damped Proportional Response}
\label{sec:dampedPR}

But what if we want to allow a mix of substitutes and complementary utilities?
The difficulty we face is that the objective function $\Phi$ is no longer either convex or
concave. Rather, if we fix the spending of the buyers with complementary utilities,
the resulting restricted $\Phi$ is convex, while if we fix the spending of
the buyers with substitutes utilities,
the resulting restricted $\Phi$ is concave.
As it happens, the equilibrium corresponds to a saddle point of the function $\Phi$.
Also, a suitable dynamic will converge to this saddle point.
To show this, we introduce a saddle-point convergence analysis.
To this end, we define the following notion.

\begin{defn}\label{defn::convex::concave}
Function $f$ is $(L_X, L_Y)$-convex-concave
w.r.t.\ the pair of Bregman divergences $(\dg,d_h)$,
if it satisfies the following constraints.
\begin{enumerate}
\item For fixed $\bby$, $f(\cdot,\bby)$ is a convex function;
\item For fixed $\bbx$, $f(\bbx,\cdot)$ is a concave function;
\item There exist parameters $L_X, L_Y >0$ such that
for any $\bbx \in X$, $\bbx' \in X$, $\bby \in Y$ and $\bby' \in Y$, 
\begin{align*} \label{defn::saddle}
- L_Y \cdot d_h(\bby,\bby') \overset{(a)}{\leq} f(\bbx,\bby) - f(\bbx',\bby') - \langle \nabla f(\bbx',\bby'), (\bbx,\bby) - (\bbx', \bby') \rangle \overset{(b)}{\leq} L_X \cdot d_g(\bbx,\bbx'). \numberthis
\end{align*}
\end{enumerate}
\end{defn}
The saddle point is the ``optimal'' point of the convex-concave function, which is the minimum point in the $x$-direction and the maximum point in the $y$-direction, defined formally as follows.
\begin{defn}
$(x^*,y^*)$ is a saddle point of $f$ if and only if 
\begin{align*}
f(x, y^*) \geq f(x^*, y^*) \geq f(x^*, y) ~~~~~ \mbox{for any $x \in X$ and $y \in Y$.}
\end{align*}
\end{defn}

Now consider the following update rule:
\begin{align}
\nonumber
\bbx^{t+1} &= \arg\min_\bbx \{\langle \nabla_x f(\bbx^t, \bby^t), \bbx - \bbx^t \rangle + 2 L_X \cdot d_g(\bbx, \bbx^t)\}; \\
\label{eqn:sand-update}
\bby^{t+1} &= \arg\min_\bby \{\langle -\nabla_\bby f(\bbx^t, \bby^t), \bby - \bby^t \rangle + 2 L_Y \cdot d_h(\bby, \bby^t)\}.
\end{align}
We can then show an $O(1/T)$ empirical rate of convergence, as stated in the next theorem.
\begin{theorem}
\label{thm:sadd::conv::sublinear}
Suppose that $f$ is $(L_X, L_Y)$-convex-concave, and there exists a saddle point $(\bbx^*, \bby^*)$.
In addition, suppose that $(\bbx,\bby)$ is updated according to \eqref{eqn:sand-update}.
Then:
\begin{align*}
\text{(i)}~~~~\sum_{t = 1}^T \Big(f(\bbx^{t},\bby^*) - f(\bbx^*, \bby^{t})\Big) \leq 2L_X \cdot d_g(\bbx^*, \bbx^0) + 2L_Y \cdot d_h(\bby^*, \bby^0).\hspace*{3in}
\end{align*}
Note that  $f(\bbx^t, \bby^*) - f(\bbx^*, \bby^t) \geq 0$ since $f(\bbx^t, \bby^*) \geq f(\bbx^*, \bby^*) \geq f(\bbx^*, \bby^t)$. 

\noindent
(ii) Also, if  $\bar{\bbx} = \frac{1}{T} \sum_{t = 1}^{T} \bbx^t$ and $\bar{\bby} = \frac{1}{T} \sum_{t = 1}^{T} \bby^t$, then:
\begin{align*}
f(\bar{\bbx}, \bby^*) - f(\bbx^*, \bar{\bby}) \leq \frac{1}{T} [2 L_X \cdot d_g(\bbx^*, \bbx^0) + 2 L_Y \cdot d_h(\bby^*, \bby^0)].
\end{align*}
\end{theorem}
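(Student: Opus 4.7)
The plan is to prove part (i) by establishing, for each $t \geq 0$, the per-step inequality
\begin{equation*}
f(\bbx^{t+1}, \bby^*) - f(\bbx^*, \bby^{t+1}) \leq 2L_X[d_g(\bbx^*, \bbx^t) - d_g(\bbx^*, \bbx^{t+1})] + 2L_Y[d_h(\bby^*, \bby^t) - d_h(\bby^*, \bby^{t+1})],
\end{equation*}
then telescope $t=0,\ldots,T-1$ and reindex $s=t+1$ to match the statement. Part (ii) then follows immediately from part (i) by Jensen's inequality applied separately on each side of the gap: $f(\bar{\bbx},\bby^*) \leq \tfrac{1}{T}\sum_t f(\bbx^t,\bby^*)$ by convexity of $f(\cdot,\bby^*)$, and $f(\bbx^*,\bar{\bby}) \geq \tfrac{1}{T}\sum_t f(\bbx^*,\bby^t)$ by concavity of $f(\bbx^*,\cdot)$.

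For the per-step inequality, the crucial first step is to apply Definition~\ref{defn::convex::concave} at \emph{mixed} evaluation points. Expanding inequality~(b) around $(\bbx',\bby')=(\bbx^t,\bby^t)$ at $(\bbx,\bby)=(\bbx^{t+1},\bby^*)$ gives an upper bound on $f(\bbx^{t+1},\bby^*)$, and expanding~(a) around the same base point at $(\bbx,\bby)=(\bbx^*,\bby^{t+1})$ gives a matching lower bound on $f(\bbx^*,\bby^{t+1})$. Subtracting, the $f(\bbx^t,\bby^t)$ terms cancel and the gradient inner products collapse to
\begin{equation*}
f(\bbx^{t+1},\bby^*) - f(\bbx^*,\bby^{t+1}) \leq \langle \nabla_x f(\bbx^t,\bby^t),\, \bbx^{t+1}-\bbx^*\rangle + \langle \nabla_y f(\bbx^t,\bby^t),\, \bby^*-\bby^{t+1}\rangle + L_X d_g(\bbx^{t+1},\bbx^t) + L_Y d_h(\bby^{t+1},\bby^t).
\end{equation*}

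Next, the first-order optimality of the two updates in~\eqref{eqn:sand-update}, combined with the standard three-point identity $\langle \nabla g(\bbx^{t+1})-\nabla g(\bbx^t),\bbx^*-\bbx^{t+1}\rangle = d_g(\bbx^*,\bbx^t)-d_g(\bbx^*,\bbx^{t+1})-d_g(\bbx^{t+1},\bbx^t)$ and its $d_h$-analog, yields
\begin{equation*}
\langle \nabla_x f(\bbx^t,\bby^t),\,\bbx^{t+1}-\bbx^*\rangle \leq 2L_X[d_g(\bbx^*,\bbx^t)-d_g(\bbx^*,\bbx^{t+1})-d_g(\bbx^{t+1},\bbx^t)],
\end{equation*}
and the symmetric bound for the $\bby$-term. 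Substituting, the factor of $2$ in the step sizes $2L_X,2L_Y$ is exactly what makes things work: the $-2L_X d_g(\bbx^{t+1},\bbx^t)$ contribution from the three-point identity absorbs the $+L_X d_g(\bbx^{t+1},\bbx^t)$ smoothness overhead with a \emph{nonpositive} residual $-L_X d_g(\bbx^{t+1},\bbx^t)\leq 0$, and symmetrically for the $\bby$-quantities. Dropping these nonpositive remainders yields the claimed per-step inequality.

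The main obstacle is recognizing the right points at which to invoke the convex-concave Bregman property. Bounding $f(\bbx^t,\bby^*) - f(\bbx^*,\bby^t)$ directly at $(\bbx^t,\bby^t)$ via convexity in $\bbx$ plus concavity in $\bby$, and then using the update optimality, leaves an uncontrolled cross-term $f(\bbx^t,\bby^{t+1}) - f(\bbx^{t+1},\bby^t)$ that cannot be bounded from above by the definitional inequalities alone. Similarly, naively chaining the one-sided analysis behind Theorem~\ref{thm:md-convergence-rate} on the $\bbx$ and $\bby$ components separately produces two bounds involving mismatched iterates that refuse to telescope jointly. Evaluating (a) and (b) at the \emph{cross}-points $(\bbx^{t+1},\bby^*)$ and $(\bbx^*,\bby^{t+1})$---mixing the updated iterate with the saddle point on opposite coordinates---sidesteps both issues simultaneously, and is, I expect, the novel ingredient the authors flag in the introduction as ``the technique for demonstrating convergence for strong Bregman convex-concave functions''.
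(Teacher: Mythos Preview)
Your proof is correct and follows essentially the same approach as the paper: apply the three-point inequality (Lemma~\ref{ineq::basic::bregman}) to each update, combine with Definition~\ref{defn::convex::concave}, obtain the identical per-step telescoping bound, and sum. The only cosmetic difference is that the paper applies inequality~(b) at $(\bbx^{t+1},\bby^{t+1})$ rather than at your $(\bbx^{t+1},\bby^*)$ (and symmetrically in the $\bby$-direction), routing through an intermediate $f(\bbx^{t+1},\bby^{t+1})$ term that cancels when the $\bbx$- and $\bby$-bounds are added; your direct choice of cross-points is slightly more economical but arrives at the same per-step inequality.
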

Note that the second part of the theorem follows immediately from the first part because $f(\cdot,\bby^*)$ is a convex function and  $f(\bbx^*,\cdot)$ is a concave function.

The objective function $\Phi$ is $(1, 1)$-convex-concave w.r.t.\ $\dg=\sum_{i: \rho_i > 0} \frac{1}{\rho_i} \KL(b_i || b_i')$ and $d_h= \sum_{i: \infty < \rho_i < 0} \frac{\rho_i - 1}{\rho_i} \KL(b_i || b_i') + \sum_{i: \rho_i = -\infty} \KL(b_i || b_i')$.
Consequently, we obtain an empirical $O(1/T)$ rate of convergence for the following \emph{Damped} Proportional Response update.
\begin{align*}
b_{ij}^{t+1} &= e_i \cdot \frac{\Big[b_{ij}^t \cdot a_{ij} \Big(\frac{b^t_{ij}}{p^t_j} \Big)^{\rho_i}\Big]^{\frac{1}{2}}}{\sum_{k}\Big[b_{ik}^t \cdot a_{ik} \Big( \frac{b^t_{ik}}{p^t_{k}} \Big)^{\rho_i}\Big]^{\frac{1}{2}}}, \hhsp\mbox{for $\rho_i > 0$;} 
\hsp b_{ij}^{t+1} = e_i \cdot \frac{\Big[b_{ij}^t \cdot \Big(\frac{a_{ij}}{(p^t_{j})^{\rho_i}}\Big)^{\frac{1}{1 - \rho_i}}\Big]^{\frac{1}{2}}}{\sum_{k}\Big[b_{ik}^t \cdot \Big(\frac{a_{ik}}{(p^t_{k})^{\rho_i}}\Big)^{\frac{1}{1 - \rho_i}}\Big]^{\frac{1}{2}}},\hhsp\mbox{for $-\infty < \rho_i < 0$;}\\
b_{ij}^{t+1} &= e_i \cdot \frac{\Big[b_{ij}^t \cdot \Big(\frac{c_{ij}}{p^t_{j}}\Big)^{-1}\Big]^{\frac{1}{2}}}{\sum_{k}\Big[b_{ik}^t \cdot \Big(\frac{c_{ik}}{p^t_{k}}\Big)^{-1}\Big]^{\frac{1}{2}}},\hhsp\mbox{for $\rho_i = -\infty$;}\numberthis \label{update::damped}
\end{align*}
We say it is damped because the update rule uses the geometric mean
of the current value and the standard Proportional Response update.

A natural question is whether a linear convergence rate is possible if the
linear and Leontief utilities are excluded.
The answer is yes, and to obtain this we need a stronger condition on the convex-concave
objective function, as given in the following definition.
\begin{defn}\label{defn::strong::saddle}
$f$ is a $(\sigma_X, \sigma_Y, L_X, L_Y)$-strongly Bregman convex-concave function,
w.r.t.\ Bregman divergences $\dg, d_h$, if,
for all $\bbx \in X$, $\bbx' \in X$, $\bby \in Y$, and $\bby' \in Y$, function $f$ satisfies:
\begin{align*}\label{defn::strong::saddle::eqn}
&-L_Y \cdot d_h(\bby,\bby')  +  \sigma_X \cdot d_g(\bbx, \bbx')
\leq f(\bbx,\bby) - f(\bbx',\bby') - \langle \nabla f(\bbx',\bby'), (\bbx,\bby) - (\bbx', \bby') \rangle \hspace*{1in} \\
&\hspace*{3.75in}\leq L_X \cdot  d_g(\bbx,\bbx') - \sigma_Y \cdot d_h(\bby, \bby'). \numberthis
\end{align*} 
\end{defn}

\noindent\begin{theorem}
\label{conv::saddle::linear}
If $f$ is a $(\sigma_X, \sigma_Y, L_X, L_Y)$-strongly Bregman convex-concave function w.r.t $\dg$ and $d_h$, 
and there exists a saddle point, then update rule~\eqref{eqn:sand-update}
converges to the saddle point with a linear convergence rate:
\begin{align*}
\Big(f(\bbx^{T}, \bby^*) - f(\bbx^*, \bby^{T})\Big) \leq \Bigg(1 - \frac{\min\left\{\frac{\sigma_X}{L_X}, \frac{\sigma_Y}{L_Y}\right\}}{2}\Bigg)^{T-1} \Big( (2 L_X - \sigma_X)  d_g(\bbx^*, \bbx^0) + (2 L_Y - \sigma_Y) d_h(\bby^*, \bby^0)\Big).
\end{align*}
\end{theorem}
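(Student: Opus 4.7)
The plan is to mimic the telescoping proof of Theorem~\ref{thm:sadd::conv::sublinear} but to exploit the extra $\sigma_X,\sigma_Y$ slack from Definition~\ref{defn::strong::saddle} so that the telescoping collapses into a geometric contraction. Throughout I will write
\[
V_t := 2L_X\,\dg(\bbx^*,\bbx^t) + 2L_Y\,d_h(\bby^*,\bby^t),\qquad X_t := (2L_X-\sigma_X)\,\dg(\bbx^*,\bbx^t) + (2L_Y-\sigma_Y)\,d_h(\bby^*,\bby^t),
\]
and $\alpha := \min\{\sigma_X/L_X,\sigma_Y/L_Y\}$. The target is the one-step bound $V_{t+1}\le X_t$ together with the elementary estimate $X_t \le (1-\alpha/2)V_t$; chaining these yields $X_{t+1}\le(1-\alpha/2)X_t$, so $X_{T-1}\le(1-\alpha/2)^{T-1}X_0$, and separately one obtains $f(\bbx^T,\bby^*)-f(\bbx^*,\bby^T)\le X_{T-1}$, which is the claimed inequality.

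First, I would write down the first-order optimality conditions for the two updates in \eqref{eqn:sand-update}, test them at $\bbx=\bbx^*$ and $\bby=\bby^*$, and apply the three-point Bregman identity to rewrite the resulting inner products; this gives
\begin{align*}
\langle \nabla_\bbx f(\bbx^t,\bby^t),\, \bbx^{t+1}-\bbx^*\rangle &\le 2L_X\bigl[\dg(\bbx^*,\bbx^t) - \dg(\bbx^*,\bbx^{t+1}) - \dg(\bbx^{t+1},\bbx^t)\bigr],\\
\langle \nabla_\bby f(\bbx^t,\bby^t),\, \bby^*-\bby^{t+1}\rangle &\le 2L_Y\bigl[d_h(\bby^*,\bby^t) - d_h(\bby^*,\bby^{t+1}) - d_h(\bby^{t+1},\bby^t)\bigr].
\end{align*}

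The crux is then to invoke the cross-form inequality of Definition~\ref{defn::strong::saddle} \emph{twice}: its upper bound at $((\bbx,\bby),(\bbx',\bby')) = ((\bbx^{t+1},\bby^*),(\bbx^t,\bby^t))$ and its lower bound at $((\bbx^*,\bby^{t+1}),(\bbx^t,\bby^t))$. Subtracting one from the other, the anchor value $f(\bbx^t,\bby^t)$ cancels and the two leftover gradient inner products are exactly the ones appearing on the left of the optimality estimates above, leaving
\[
f(\bbx^{t+1},\bby^*) - f(\bbx^*,\bby^{t+1}) \le \langle\nabla_\bbx f, \bbx^{t+1}-\bbx^*\rangle + \langle\nabla_\bby f, \bby^*-\bby^{t+1}\rangle + L_X\dg(\bbx^{t+1},\bbx^t) + L_Y d_h(\bby^{t+1},\bby^t) - \sigma_X\dg(\bbx^*,\bbx^t) - \sigma_Y d_h(\bby^*,\bby^t).
\]
Substituting the optimality bounds, the $+L_X,+L_Y$ coefficients on $\dg(\bbx^{t+1},\bbx^t),d_h(\bby^{t+1},\bby^t)$ combine with the $-2L_X,-2L_Y$ factors produced by the optimality estimates to leave non-positive remainders on the right; rearranging gives
\[
f(\bbx^{t+1},\bby^*) - f(\bbx^*,\bby^{t+1}) + V_{t+1} + L_X\dg(\bbx^{t+1},\bbx^t) + L_Y d_h(\bby^{t+1},\bby^t) \;\le\; X_t.
\]
The saddle-point property $f(\bbx^{t+1},\bby^*)\ge f(\bbx^*,\bby^*)\ge f(\bbx^*,\bby^{t+1})$ makes the first difference non-negative, so dropping it and the two Bregman remainders yields both $V_{t+1}\le X_t$ and $f(\bbx^{t+1},\bby^*)-f(\bbx^*,\bby^{t+1})\le X_t$.

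The closing step is the elementary inequality $c_1 a + c_2 b \le \max_i(c_i/d_i)\cdot(d_1 a + d_2 b)$, applied with $c_i = 2L_i-\sigma_i$, $d_i = 2L_i$, $a = \dg(\bbx^*,\bbx^t)$, $b = d_h(\bby^*,\bby^t)$, giving $X_t\le(1-\alpha/2)V_t$; combined with $V_{t+1}\le X_t$ this is $X_{t+1}\le(1-\alpha/2)X_t$, and the geometric chain finishes the proof. I expect the main obstacle to be identifying the right cross-form application of Definition~\ref{defn::strong::saddle}: invoking it at the mismatched target points $(\bbx^{t+1},\bby^*)$ and $(\bbx^*,\bby^{t+1})$, rather than at matched points (which would only reduce to its per-coordinate restrictions of strong Bregman convexity in $\bbx$ and strong Bregman concavity in $\bby$), is precisely what makes the gradient inner products align with the two optimality estimates and lets a single Lyapunov-like quantity $X_t$ absorb everything and contract by a factor $1-\alpha/2$ per step.
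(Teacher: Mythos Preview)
Your proposal is correct and follows essentially the same route as the paper: both arrive at the key one-step inequality $f(\bbx^{t+1},\bby^*)-f(\bbx^*,\bby^{t+1})+V_{t+1}\le X_t$ by combining the three-point (Lemma~\ref{ineq::basic::bregman}) optimality estimates with two applications of Definition~\ref{defn::strong::saddle}, and then extract the geometric rate. The only cosmetic differences are that the paper routes through the intermediate value $f(\bbx^{t+1},\bby^{t+1})$ (which cancels when the two inequalities are summed) rather than applying the definition directly at the mismatched points $(\bbx^{t+1},\bby^*)$ and $(\bbx^*,\bby^{t+1})$, and the paper finishes by multiplying by $(2/(2-\alpha))^t$ and telescoping rather than chaining $X_{t+1}\le(1-\alpha/2)X_t$ directly.
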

$\Phi$ is $(\min_{i:\rho_i > 0} \{1 - \rho_i\}, \min_{i: \rho_i < 0} \left\{\frac{1}{1 - \rho_i}\right\},1,1)$-strongly Bregman convex-concave, and thus we
can deduce that the Damped Proportional Response achieves a linear
convergence rate if linear and Leontief utilities are excluded.

As before, the above results exclude Cobb-Douglas utilities.

Arguably, the buyers with Cobb-Douglas utilities should always have the equilibrium spending, or failing that, should
immediately update to this spending.
But for mathematical consistency, we suppose they are performing the same type of damped update as the other buyers.
In this case, our previous potential function can't be used when we include Cobb-Douglas utility functions.
We now need to include a term in the potential function
for each buyer with a Cobb-Douglas utility as their spending keeps changing.

We will need the following notation. 
Let $\mathbf{b}_{>0}$, $\mathbf{b}_{=0}$, and $\mathbf{b}_{<0}$ denote the spending of those buyers with
 $\rho_i > 0$, $\rho_i = 0$, and $\rho_i < 0$, respectively. 
Accordingly, we will write $\Phi(\mathbf{b}) = \Phi(\mathbf{b}_{>0}, \mathbf{b}_{=0}, \mathbf{b}_{<0})$.
The resulting function is still convex in $\bbb_{>0}$ and concave in $\bbb_{<0}$.
The construction is given in Appendix~\ref{construction::cobb}.
We note that the update rule for the buyers with $\rhoi=0$ is given by:
\[
b_{ij}^{t+1} = e_i \cdot \frac{\Big[b_{ij}^t \cdot a_{ij}^t\Big]^{\frac{1}{2}}}{\sum_{k}\Big[b_{ik}^t \cdot a_{ik}^t\Big]^{\frac{1}{2}}},\hsp\hsp\mbox{for $\rho_i = 0$.}
\]

\begin{theorem}
Suppose buyers repeatedly update their spending using the 
Damped Proportional Response rule~\eqref{update::damped}.
Then 
\begin{align*}
\KL(\mathbf{b}^{T}_{=0} || \mathbf{b}^*_{=0}) \leq \frac{1}{2^T} \KL(\mathbf{b}^{0}_{=0} || \mathbf{b}^*_{=0}),
\end{align*}
and the potential function $\Phi$ converges to the market equilibrium as follows:

\begin{align*}
&\text{i.}\hsp \sum_{t = 1}^{T} \Bigg[\Phi(\mathbf{b}^{t}_{>0}, \mathbf{b}^{*}_{=0}, \mathbf{b}^{*}_{<0}) - \Phi (\mathbf{b}^{*}_{>0}, \mathbf{b}^{*}_{=0}, \mathbf{b}^{t}_{<0}) \Bigg] \hspace*{3in}\\
 &\hspace*{0.2in}\hsp\hsp \leq 4 \sum_{i : \rho_i = 0} \KL(\bbb^*_i || \bbb^0_i)  + \sum_{i: -\infty < \rho_i < 0} \frac{2(\rho_i - 1)}{\rho_i}  \KL(\bbb_i^* || \bbb_i^0) + \sum_{i: \rho_i > 0} \frac{2}{\rho_i}  \KL(\bbb_i^* || \bbb_i^0)+ \sum_{i: \rho_i = -\infty} \KL(\bbb_i^* || \bbb_i^0).\\
&\text{ii. If in addition no buyer has a linear or Leontief utility, }\\
&\hspace*{1.0in}\text{Let }~\sigma = \min\left\{\hhsp\min_{i:\rho_i > 0} \left\{\frac{2}{1+\rho_i}\right\}\hhsp,\hhsp\min_{i:\rho_i < 0} \left\{\frac{2(\rho_i - 1)}{2\rho_i - 1}\right\}\hhsp\right\}\hsp\hsp\mbox{(so $1 < \sigma < 2$).}\hhsp\text{Then,}
\end{align*}
\begin{align*}
 \Phi(\mathbf{b}^{T}_{>0}, \mathbf{b}^{*}_{=0}, \mathbf{b}^{*}_{<0}) - \Phi (\mathbf{b}^{*}_{>0}, \mathbf{b}^{*}_{=0}, \mathbf{b}^{T}_{<0}) &\leq \frac{1}{\sigma^{T-1}} \Bigg[ \frac{4}{2 - \sigma}   \sum_{i: \rho_i = 0} \KL(\bbb_i^* || \bbb_i^0) \\
&~~~~~~~~~+ \sum_{i: \rho_i < 0} \frac{2\rho_i - 1}{\rho_i}  \KL(\bbb_i^* || \bbb_i^0)  + \sum_{i: \rho_i > 0} \frac{1+\rho_i}{\rho_i}  \KL(\bbb_i^* || \bbb_i^0)\Bigg]. 
\end{align*} 
\end{theorem}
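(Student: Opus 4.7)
The plan is to combine a direct per-step contraction for the Cobb-Douglas block with the saddle-point convergence machinery of Theorems~\ref{thm:sadd::conv::sublinear} and~\ref{conv::saddle::linear}, applied to the extended potential $\widehat\Phi$ built in Appendix~\ref{construction::cobb}. First I would isolate the Cobb-Douglas buyers. For $\rho_i=0$ the equilibrium spending $\mathbf b^*_i=(e_i a_{i1},\ldots,e_i a_{in})$ is price-independent, so the damped update~\eqref{update::damped} reduces to the normalized geometric mean $b^{t+1}_{ij}\propto\sqrt{b^t_{ij}\,b^*_{ij}}$. Parametrizing the log-mixture $m_\lambda\propto(\mathbf b^t_i)^\lambda(\mathbf b^*_i)^{1-\lambda}$ (so $m_0=\mathbf b^*_i$, $m_1=\mathbf b^t_i$, $m_{1/2}\propto\mathbf b^{t+1}_i$), a direct computation on the partition function $Z_\lambda=\sum_k(b^t_{ik})^\lambda(b^*_{ik})^{1-\lambda}$ yields the one-step halving $\KL(\mathbf b^{t+1}_i\|\mathbf b^*_i)\le\tfrac12\KL(\mathbf b^t_i\|\mathbf b^*_i)$. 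Iterating and summing over Cobb-Douglas buyers gives the first inequality of the theorem.

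Next I would fold the Cobb-Douglas buyers into the saddle-point framework using the extended potential $\widehat\Phi(\mathbf b_{>0},\mathbf b_{=0},\mathbf b_{<0})$ from Appendix~\ref{construction::cobb}, which is finite on the whole spending simplex, agrees with $\Phi$ whenever $\mathbf b_{=0}=\mathbf b^*_{=0}$, and has the damped Proportional Response~\eqref{update::damped} as its mirror-descent iterate~\eqref{eqn:sand-update} for an augmented Bregman pair $(d_g,d_h)$ that adds a KL term per Cobb-Douglas buyer. I would then check that $\widehat\Phi$ is $(1,1)$-convex-concave (Definition~\ref{defn::convex::concave}) and, when no linear or Leontief utilities are present, $(\sigma_X,\sigma_Y,1,1)$-strongly Bregman convex-concave (Definition~\ref{defn::strong::saddle}) with $\sigma_X=\min_{i:\rho_i>0}(1-\rho_i)$ and $\sigma_Y=\min_{i:\rho_i<0}1/(1-\rho_i)$. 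The verification mirrors the $(1,1)$-check stated after Theorem~\ref{thm:sadd::conv::sublinear}; since $\mathbf b^*_{=0}$ does not depend on prices, the Cobb-Douglas block may be placed on either side of the saddle, and its contribution is absorbed as a non-negative Bregman piece in the augmented $d_g$ or $d_h$.

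Plugging $\widehat\Phi$ into Theorem~\ref{thm:sadd::conv::sublinear} gives the telescoped $O(1/T)$ bound of Part~(i); plugging it into Theorem~\ref{conv::saddle::linear} gives the linear rate $\bigl(1-\tfrac12\min\{\sigma_X,\sigma_Y\}\bigr)^{T-1}$, and a short manipulation rewrites $1-\tfrac{1-\rho_i}{2}=1\big/\bigl(2/(1+\rho_i)\bigr)$ and $1-\tfrac{1}{2(1-\rho_i)}=1\big/\bigl(2(\rho_i-1)/(2\rho_i-1)\bigr)$, matching the $1/\sigma^{T-1}$ form in the statement. Because $\widehat\Phi\equiv\Phi$ on the slice $\mathbf b_{=0}=\mathbf b^*_{=0}$, the $\widehat\Phi$-saddle-gap at $(\mathbf b^T_{>0},\mathbf b^*_{=0},\mathbf b^*_{<0})$ versus $(\mathbf b^*_{>0},\mathbf b^*_{=0},\mathbf b^T_{<0})$ is exactly the LHS of each bound. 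The Cobb-Douglas coefficients $4$ (Part~i) and $\tfrac{4}{2-\sigma}$ (Part~ii) then arise by combining the per-step halving from the first step with the geometric-series sums $\sum_{t\ge0}2^{-t}=2$ (Part~i) and $\sum_{t=0}^{T-1}(\sigma/2)^t\le 2/(2-\sigma)$ (Part~ii, valid since $\sigma<2$), together with an extra factor of $2$ accounting for the Cobb-Douglas contributions appearing on both sides of the saddle gap.

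The hardest step is the verification in the second paragraph: confirming that $\widehat\Phi$ has the correct $(L_X,L_Y)=(1,1)$ and $(\sigma_X,\sigma_Y,1,1)$ constants and that~\eqref{update::damped} really is its mirror-descent update. The Cobb-Douglas iterates $\mathbf b^t_{=0}$ enter the prices $p^t_j$ driving the $\mathbf b_{>0}$ and $\mathbf b_{<0}$ updates, so the three blocks are coupled and a naive block-separable analysis fails. A careful three-block Bregman three-point inequality is needed to show the Cobb-Douglas contributions only help---via the halving established in the first step---and do not degrade the $(L_X,L_Y)$ or $(\sigma_X,\sigma_Y)$ constants; this is where the factor of $2$ in the damping, together with the $1<\sigma<2$ condition in Part~(ii), has to be reconciled with the extra KL pieces contributed by the Cobb-Douglas buyers.
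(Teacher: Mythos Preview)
Your central claim in the second paragraph—that the damped update~\eqref{update::damped} is the mirror-descent iterate on the extended potential $\widehat\Phi$ for \emph{all} blocks, including Cobb--Douglas—is false, and this is where the proposal breaks. For $\rho_i=0$, the paper computes $\nabla_{b_{ij}}\widehat\Phi(\mathbf b)=1+\log p_j(\mathbf b)$, so any mirror-descent step with a KL kernel would produce $b^{t+1}_{ij}\propto b^t_{ij}\,(p^t_j)^{-1/c}$ for some $c$, which depends on prices. The actual Cobb--Douglas damped update $b^{t+1}_{ij}\propto\sqrt{b^t_{ij}\,a_{ij}}$ is price-independent, so no choice of augmented Bregman pair $(d_g,d_h)$ can realize it as the saddle iterate~\eqref{eqn:sand-update} on $\widehat\Phi$. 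The paper states this explicitly at the start of Appendix~\ref{sec::entire_range}.4. A related obstruction: even if the update matched, $\widehat\Phi$ is not $(\sigma_X,\sigma_Y,1,1)$-strongly Bregman convex--concave once the Cobb--Douglas KL is folded into $d_g$ or $d_h$, because in the sandwich~\eqref{ineq::saddle::upper::lower::2} the Cobb--Douglas KL appears only on the upper side with coefficient $+1$ and not at all on the lower side—so it carries no strong convexity and would force $\sigma_X=0$ (if placed on the convex side) or violate the upper bound (if placed on the concave side).

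The paper's route is accordingly not a black-box invocation of Theorems~\ref{thm:sadd::conv::sublinear} and~\ref{conv::saddle::linear}, but a re-derivation of their proofs with $\mathbf b^t_{=0}$ treated as an exogenous, independently-converging parameter. Concretely: apply Lemma~\ref{ineq::basic::bregman} only to the $\mathbf b_{>0}$ and $\mathbf b_{<0}$ blocks (which \emph{are} doing mirror descent/ascent on $\widehat\Phi$), then insert~\eqref{ineq::saddle::upper::lower::2} with the current iterate $(\mathbf b^t_{>0},\mathbf b^t_{=0},\mathbf b^t_{<0})$ against the hybrid points $(\mathbf b^{t+1}_{>0},\mathbf b^*_{=0},\mathbf b^{t+1}_{<0})$ and $(\mathbf b^*_{>0},\mathbf b^*_{=0},\mathbf b^{t+1}_{<0})$, etc. This produces the saddle-gap telescoping inequality \emph{plus} an additive error $2\sum_{i:\rho_i=0}\KL(\mathbf b^*_i\|\mathbf b^t_i)$ at each step. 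Your first-paragraph halving (which the paper proves in Lemma~\ref{lem::HGPRD::eq::0}, though with the KL arguments in the order $\KL(\mathbf b^*_i\|\mathbf b^t_i)$ rather than the order you wrote) is then used to sum these errors—geometrically for Part~(i), and against the weight $\sigma^t$ for Part~(ii), which is where $\sigma<2$ is essential. So your intuition in the final paragraph that ``a careful three-block Bregman three-point inequality is needed'' is right, but the mechanism is not an augmented $(d_g,d_h)$: it is opening up the saddle-point proof and carrying the Cobb--Douglas discrepancy through as a controlled error term.
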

Weaker bounds are shown in Corollaries~\ref{cor:full-range-no-constraints} and~\ref{cor:no-extremes}, respectively. The complete proof is given in  
Appendix~\ref{sec::entire_range}   (see Theorems~\ref{thm::complete::1T} and~\ref{thm::complete::linear}).
\section{Related Work}\label{sec:related-work}

The concept of a market equilibrium was first proposed by Walras~\cite{Walras1874}
along with a description of the tatonnement process.
Since then, studies of market equilibria and tatonnement have received much attention in
economics, operations research, and most recently in computer science.
A fairly recent account of the classic perspective in economics is given in~\cite{McK2002}.

Computer scientists, beginning with the work by Deng et al.~\cite{papadeng02},
showed that computing equilibria was a hard problem in general; see also~\cite{DengDu2008,PY2010}.
This led to much work on polynomial time algorithms for restricted classes of markets,
e.g.~\cite{DPSV08,DV04,CMV2005,GargKap2006}.

The Eisenberg-Gale program for the case of linear utilities
was formulated in~\cite{EisenbergGale} and then
generalized to homothetic functions in~\cite{Eisenberg61}; further generalizations were given in~\cite{JainVaz07}.
The maxima of these convex programs correspond to the equilibria of the corresponding markets.
In particular, when buyer or agent utilities are homothetic,
the optimum of the Eisenberg-Gale program corresponds to the optimum Nash Social Welfare;
interestingly, this optimum also appears to provide good outcomes when apportioning indivisible goods~\cite{Budisch11,CMP016}.
Recently, Cole et al.~\cite{CDGJMVY2017} identified another variant of the Eisenberg-Gale program that captured
the best currently-known polynomial-time approximate solution for the indivisible setting.

The analysis most similar to ours is the one in~\cite{BDX2011} which considers convex functions that obey
a constraint which we name $L$-Bregman convexity w.r.t.\ a Bregman divergence (see Definition~\ref{def:LBreg}).
Our work generalizes this notion substantially.

The earliest analyses of tatonnement showed convergence in exchange economies with gross substitutes utilities,
first for continuous updating~\cite{ABH1959} and then for discrete updates~\cite{Uzawa1960},
but it was shown to diverge in general~\cite{scarf60}. 
Recent works have analyzed its convergence properties in specific markets, primarily Fisher markets~\cite{CF2008,CCR2012,CCD2013}.
Cheung et al.~\cite{CCD2013} showed that tatonnement is equivalent to coordinate descent
on a convex function for several classes of Fisher markets, and consequently
that a suitable tatonnement converges toward the market equilibrium in two classes of markets:
complementary-CES Fisher markets and Leontief Fisher markets.

Other dynamics have been considered.
In particular, Dvijotham et al.~\cite{DRS2016} study sellers best responding in a setting in which they
form beliefs about other sellers' strategies.
They obtain linear convergence in Fisher markets for most of the CES domain, but not for linear utilities.
In the context of network flow control, Low and Lapsley~\cite{low1999optimization} adopted an optimization approach to derive a dynamic protocol where
both prices (of links) and flow demands of agents are updated, and showed that the protocol converges to a social-welfare maximizing state.
The update rules \eqref{eqn:subst-CES-PR-rule}, \eqref{eqn:comp-CES-PR-rule}
look quite similar to a game-learning dynamic called \emph{log-linear learning}~\cite{Blume1993,MS2012}
(by suitably viewing spendings as probability densities), but due to different contexts (games vs.~markets),
the actual behaviors and the analyses have significant qualitative differences.

Convex-concave saddle-point problems can be reduced to non-smooth convex minimization problems,
for which algorithms yielding $O(1/\sqrt{T})$ convergence rate exist.
Its wide applications (e.g., to two-person zero-sum game equilibria) 
have motivated exploration of properties of the underlying function
which support faster converging algorithms~\cite{Nemirovski2004,Nesterov2005a,Nesterov2005b,nesterov2007dual,RS2013}.
In this paper, we present a new property and a simple algorithm which yields an $O(1/T)$ empirical convergence rate.
In our opinion, its analysis is quite simple, which may well open the door to further exploration.
Indeed, we have taken such a step by presenting a variant of our new property for which the same algorithm yields
a linear \emph{point-wise} convergence rate.

\section{Linear Convergence with Strong Bregman Convexity}\label{sec:strong-Breg}

Our proof will use the following lemmas.
\begin{lemma}{\cite{CT1993}}\label{ineq::basic::bregman}
If $\bbx^{+}$ is the optimal point for the optimization problem:
\begin{align*}
&\mbox{\text{minimize}} \hspace*{0.2in}g(\bbx) + d(\bbx,\bby) \\\
&\mbox{\text{subject to}}\hspace*{0.2in}\bbx \in C,
\end{align*}
where $C$ is a compact convex set, then, for any $\bbx \in C$, 
\begin{align*}
g(\bbx^{+}) + d(\bbx^{+}, \bby) + d(\bbx, \bbx^{+}) \leq g(\bbx) + d(\bbx, \bby).
\end{align*}
\end{lemma}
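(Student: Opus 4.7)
The plan is to prove this classical three-point inequality by combining the first-order optimality condition at $\bbx^+$ with the well-known three-point identity for Bregman divergences. I assume that $g$ is convex and differentiable — which is the case in every application in the paper, since $g$ will always be the linear functional $\langle \nabla f(\bbx^t), \cdot - \bbx^t \rangle$ arising from the mirror-descent step~\eqref{eq:md-update-rule}, so convexity is automatic.

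First I would write down the first-order optimality condition. Since $\bbx^+$ minimizes the convex objective $g(\bbx) + d_h(\bbx, \bby)$ over the convex set $C$, and since $\nabla_{\bbx} d_h(\bbx,\bby) = \nabla h(\bbx) - \nabla h(\bby)$, we have for every $\bbx \in C$:
\begin{align*}
\langle \nabla g(\bbx^+) + \nabla h(\bbx^+) - \nabla h(\bby),\; \bbx - \bbx^+\rangle \;\geq\; 0.
\end{align*}
Next I would invoke the three-point identity, which follows immediately from expanding the definition of $d_h$ in terms of $h$ and its gradients:
\begin{align*}
d_h(\bbx, \bby) - d_h(\bbx, \bbx^+) - d_h(\bbx^+, \bby) \;=\; \langle \nabla h(\bbx^+) - \nabla h(\bby),\; \bbx - \bbx^+\rangle.
\end{align*}

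Substituting the three-point identity into the optimality inequality yields
\begin{align*}
d_h(\bbx, \bby) - d_h(\bbx, \bbx^+) - d_h(\bbx^+, \bby) \;\geq\; -\langle \nabla g(\bbx^+),\; \bbx - \bbx^+\rangle.
\end{align*}
Finally, convexity of $g$ gives $\langle \nabla g(\bbx^+), \bbx - \bbx^+\rangle \leq g(\bbx) - g(\bbx^+)$, so the right-hand side is at least $g(\bbx^+) - g(\bbx)$. Rearranging produces exactly the inequality claimed in the lemma.

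The only real obstacle is a regularity one: the optimality condition requires $\bbx^+$ to lie in the relative interior of $C$ so that $\nabla h(\bbx^+)$ is well-defined (the definition of $d_h$ in the preliminaries restricts the second argument to $\rint(C)$ for precisely this reason). If $\bbx^+$ could lie on the boundary of $C$ where $\nabla h$ blows up (as it does for the KL kernel near the axes), one would need either to argue that the optimizer is automatically interior because $d_h(\cdot, \bby)$ acts as a barrier, or to work with subgradients and a limiting argument. For the KL-based applications later in the paper this is essentially automatic, since the barrier property of $-\log$ prevents $\bbx^+$ from touching the boundary whenever $\bby$ is interior.
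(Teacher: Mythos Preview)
Your argument is correct and is exactly the standard Chen--Teboulle proof: first-order optimality of $\bbx^+$ combined with the Bregman three-point identity and convexity of $g$. The paper does not supply its own proof of this lemma --- it simply cites \cite{CT1993} --- so there is nothing to compare against beyond noting that your write-up matches the original source; your remark about the boundary/interior issue and the implicit convexity assumption on $g$ is also well-placed.
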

\begin{lemma}{\cite{BDX2011}}\label{lem::next_smaller}
Suppose that $f$ is an $L$-Bregman convex function w.r.t.~$d(\bbx,\bbx')$, and $\bbx^t$ and $\bbx^{t+1}$ are the points reached after $t$ and $t+1$ applications of the mirror descent update rule \eqref{eq:md-update-rule}. Then
\begin{align*}
f(\bbx^{t+1}) \leq f(\bbx^t).
\end{align*}
\end{lemma}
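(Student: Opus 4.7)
The plan is to combine the upper half of the $L$-Bregman convexity inequality with the three-point inequality of Lemma \ref{ineq::basic::bregman}. The upper half of Definition \ref{def:LBreg} gives a bound on $f(\bbx^{t+1})$ in terms of the linearization at $\bbx^t$ plus $L\cdot d_h(\bbx^{t+1},\bbx^t)$, while Lemma \ref{ineq::basic::bregman} converts the optimality of $\bbx^{t+1}$ for update rule \eqref{eq:md-update-rule} into a statement that this same linearization-plus-divergence quantity is nonpositive.

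First, I apply the $L$-Bregman convexity upper bound with $\bby = \bbx^t$ and $\bbx = \bbx^{t+1}$ to obtain
\begin{align*}
f(\bbx^{t+1}) ~\le~ f(\bbx^t) + \inner{\nabla f(\bbx^t)}{\bbx^{t+1}-\bbx^t} + L\cdot d_h(\bbx^{t+1},\bbx^t).
\end{align*}
It then suffices to show that the last two terms on the right-hand side sum to something nonpositive.

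Next, I invoke Lemma \ref{ineq::basic::bregman} applied to the optimization problem \eqref{eq:md-update-rule}. Set $g(\bby) = \inner{\nabla f(\bbx^t)}{\bby - \bbx^t}$ and use the Bregman divergence $L\cdot d_h(\cdot,\bbx^t)$ (noting that scaling a Bregman kernel by $L$ still produces a Bregman divergence, so Lemma \ref{ineq::basic::bregman} applies). The minimizer is $\bbx^{t+1}$ by definition of the update. Evaluating the conclusion of Lemma \ref{ineq::basic::bregman} at the test point $\bbx = \bbx^t$, and using $g(\bbx^t)=0$ and $d_h(\bbx^t,\bbx^t)=0$, gives
\begin{align*}
\inner{\nabla f(\bbx^t)}{\bbx^{t+1}-\bbx^t} + L\cdot d_h(\bbx^{t+1},\bbx^t) + L\cdot d_h(\bbx^t,\bbx^{t+1}) ~\le~ 0.
\end{align*}

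Finally, chaining the two displayed inequalities, and dropping the nonnegative term $L\cdot d_h(\bbx^t,\bbx^{t+1})\ge 0$, yields $f(\bbx^{t+1}) \le f(\bbx^t) - L\cdot d_h(\bbx^t,\bbx^{t+1}) \le f(\bbx^t)$, as required. There is no real obstacle here; the only mildly delicate point is correctly matching Lemma \ref{ineq::basic::bregman}'s template $g(\bbx)+d(\bbx,\bby)$ to the mirror descent subproblem, which is handled by absorbing the step size $L$ into the divergence. The argument uses only the upper half of $L$-Bregman convexity, so it applies equally well in the strongly Bregman convex setting of the paper.
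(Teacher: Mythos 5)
Your proof is correct. The paper itself imports this lemma from \cite{BDX2011} without reproving it, so there is no in-paper argument to match against; your two steps --- the upper half of $L$-Bregman convexity at $(\bby,\bbx)=(\bbx^t,\bbx^{t+1})$, plus nonpositivity of $\inner{\nabla f(\bbx^t)}{\bbx^{t+1}-\bbx^t} + L\, d_h(\bbx^{t+1},\bbx^t)$ --- are exactly what is needed, and the step-size absorption into the kernel is handled correctly. The one remark worth making is that invoking the three-point Lemma~\ref{ineq::basic::bregman} is slightly heavier machinery than required: since $\bbx^{t+1}$ minimizes the subproblem objective and $\bbx^t$ is feasible with objective value $\inner{\nabla f(\bbx^t)}{\bbx^t-\bbx^t} + L\, d_h(\bbx^t,\bbx^t) = 0$, the nonpositivity you need follows immediately from optimality alone, with no three-point identity. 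What your route buys in exchange is the extra term $L\, d_h(\bbx^t,\bbx^{t+1})$, i.e.\ the quantitatively stronger conclusion $f(\bbx^{t+1}) \le f(\bbx^t) - L\, d_h(\bbx^t,\bbx^{t+1})$, which is a genuine (if here unneeded) bonus. Your closing observation that only the upper half of the definition is used, so the lemma applies verbatim in the strongly Bregman convex setting, is also correct and is implicitly relied upon when the paper invokes this lemma inside the proof of Theorem~\ref{thm:md-convergence-rate}.
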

\begin{proof}[Proof of Theorem \ref{thm:md-convergence-rate}]

By Lemma~\ref{ineq::basic::bregman} with $\bby = \bbx^t$, $\bbx^+ = \bbx^{t+1}$, and $\bbx = \bbx^*$,
\begin{align*}\label{ineq::strong_convex::1}
\inner{\nabla f(\bbx^t)}{\bbx^{t+1} - \bbx^t} ~+~ L \cdot d_h(\bbx^{t+1},\bbx^t) ~~\le~~ \inner{\nabla f(\bbx^t)}{\bbx^* - \bbx^t}
~+~ L \cdot \left[ d_h(\bbx^*,\bbx^t) - d_h(\bbx^*,\bbx^{t+1}) \right]. \numberthis
\end{align*}
By strong Bregman-convexity, with $\bby = \bbx^t$, and $\bbx = \bbx^{t+1}$,
\begin{align*}\label{ineq::strong_convex::2}
\inner{\nabla f(\bbx^t)}{\bbx^{t+1} - \bbx^t} ~+~ L \cdot d_h(\bbx^{t+1},\bbx^t) ~~\ge~~ f(\bbx^{t+1}) - f(\bbx^t); \numberthis
\end{align*}
and with $\bby = \bbx^t$ and $\bbx = \bbx^*$,
\begin{align*}\label{ineq::strong_convex::3}
\nabla f(\bbx^t) \cdot (\bbx^* - \bbx^t) ~~\le~~ f(\bbx^*) - f(\bbx^t) - \sigma \cdot d_h(\bbx^*,\bbx^t). \numberthis
\end{align*}
Combining \eqref{ineq::strong_convex::1}, \eqref{ineq::strong_convex::2}, and \eqref{ineq::strong_convex::3}, gives, for $t\ge 0$,
\begin{align}
\label{eqn::comb-strong-convex}
f(\bbx^{t+1}) - f(\bbx^*) ~~\le~~ (L-\sigma) \cdot d_h(\bbx^*,\bbx^t) ~-~ L \cdot d_h(\bbx^*,\bbx^{t+1}).
\end{align}

On multiplying both sides of the above inequality by $\left( \frac{L}{L-\sigma} \right)^t$, and then summing over $0\le t < T$,
the RHS becomes a telescoping sum, and hence
$$
\sum_{t=0}^{T-1} \left( \frac{L}{L-\sigma} \right)^t \cdot \left[ f(\bbx^{t+1}) - f(\bbx^*) \right]
~~\le~~ (L-\sigma) \cdot d_h(\bbx^*,\bbx^0).
$$
By Lemma~\ref{lem::next_smaller}, $f(\bbx^{t+1})\le f(\bbx^t)$; thus:
$$
\frac{L-\sigma}{\sigma} \cdot \left[ \left(\frac{L}{L-\sigma}\right)^T - 1 \right] \cdot \left[ f(\bbx^T) - f(\bbx^*) \right] ~~=~~ \left( \sum_{t=0}^{T-1} \left( \frac{L}{L-\sigma} \right)^t \right) \cdot \left[ f(\bbx^T) - f(\bbx^*) \right] ~~\le~~ (L-\sigma) \cdot d_h(\bbx^*,\bbx^0),
$$
and the result follows.
\end{proof}

\section{Convergence of Proportional Response}

We consider the following potential function:
\begin{align*}
&p_j(\mathbf{b}) = \textstyle{\sum_{i}} b_{ij}, \\
&\Phi(\mathbf{b}) = - \sum_{i: \rho_i \neq \{0, -\infty\}} \frac{1}{\rho_i} \sum_{j} b_{ij} \log \frac{a_{ij} b_{ij}^{\rho_i - 1}}{[p_j(\mathbf{b})]^{\rho_i}} - \sum_{i: \rho_i = -\infty} \sum_j b_{ij} \log \frac{b_{ij}}{c_{ij}p_j(\mathbf{b})}.
\end{align*}

For those $i$ for which $\rho_i \neq -\infty$,
\begin{align*}
\nabla_{b_{ij}} \Phi(\mathbf{b}) &= - \frac{1}{\rho_i}\log a_{ij} - \frac{\rho_i - 1}{\rho_i}(\log b_{ij} + 1) + \log p_j(\mathbf{b}) + \sum_{h}  b_{hj} \frac{1}{p_j(\mathbf{b})} \\
&= \frac{1}{\rho_i}\Big(1 - \log \frac{a_{ij} b_{ij}^{\rho_i - 1}}{p_j^{\rho_i}(\mathbf{b})}\Big);
\end{align*}
\begin{align*}
\mbox{and for those $i$ for which $\rho_i = -\infty$,} \hspace*{0.2in}
\nabla_{b_{ij}} \Phi(\mathbf{b}) = - \log \frac{ b_{ij}}{c_{ij} p_j}. \hspace*{5in}
\end{align*}

We deduce:
\begin{lemma} \label{ineq::saddle::upper::lower}
\begin{align*}
\sum_{i: \rho_i \neq -\infty} \frac{1 - \rho_i}{\rho_i} \KL(\bbb_i || \bbb_i') - \sum_{i: \rho_i = -\infty} \KL(\bbb_i || \bbb_i') \leq \Phi(\bbb) - \Phi(\bbb') - \langle \nabla \Phi(\mathbf{b}'), \mathbf{b} - \mathbf{b}' \rangle \leq \sum_{i: \rho_i \neq  -\infty} \frac{1}{\rho_i} \KL(\bbb_i || \bbb_i'). 
\end{align*}
\end{lemma}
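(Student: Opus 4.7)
The plan is to split $\Phi$ into a sum of per-buyer pieces plus one joint piece that only sees the price vector, and then to evaluate the Bregman gap of each piece separately. Expanding the logarithms inside $\Phi(\bbb)$ and regrouping the $\log p_j(\bbb)$ contributions across all buyers, I would write $\Phi(\bbb) = \sum_i F_i(\bbb_i) + G(\bbb)$, where each $F_i$ consists of a linear functional of $\bbb_i$ plus a scalar multiple of the entropy $\sum_j b_{ij}\log b_{ij}$, and $G(\bbb) = \sum_j p_j(\bbb)\log p_j(\bbb)$. The regrouping is valid because the only $i$-coupling inside $\Phi$ lives in the $\sum_j b_{ij}\log p_j(\bbb)$ terms, which collapse to $\sum_j p_j(\bbb)\log p_j(\bbb)$ once summed over buyers using $p_j(\bbb) = \sum_i b_{ij}$.

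Because the quantity $d_\Phi(\bbb,\bbb') := \Phi(\bbb) - \Phi(\bbb') - \langle \nabla\Phi(\bbb'),\,\bbb-\bbb'\rangle$ is linear in the generating function, I would compute each piece's contribution independently. The linear parts of the $F_i$ contribute nothing. For the entropy parts, the budget identity $\sum_j b_{ij} = e_i = \sum_j b_{ij}'$ makes the Bregman divergence of $\sum_j x_j\log x_j$ reduce exactly to $\KL(\bbb_i\|\bbb_i')$. Reading off the coefficients from the definition of $\Phi$, this gives $d_{F_i}(\bbb_i,\bbb_i') = \frac{1-\rho_i}{\rho_i}\KL(\bbb_i\|\bbb_i')$ when $\rho_i \notin \{0,-\infty\}$, and $d_{F_i}(\bbb_i,\bbb_i') = -\KL(\bbb_i\|\bbb_i')$ when $\rho_i=-\infty$. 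The same calculation applied to $G$, now using $\sum_j p_j(\bbb) = \sum_i e_i$ for both $\bbb$ and $\bbb'$, yields $d_G(\bbb,\bbb') = \KL(\bbp\|\bbp')$ where $\bbp = \bbp(\bbb)$ and $\bbp' = \bbp(\bbb')$. Summing,
\[
d_\Phi(\bbb,\bbb') = \sum_{i:\rho_i \neq -\infty}\frac{1-\rho_i}{\rho_i}\KL(\bbb_i\|\bbb_i') - \sum_{i:\rho_i=-\infty}\KL(\bbb_i\|\bbb_i') + \KL(\bbp\|\bbp').
\]

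The two desired bounds then drop out. The lower bound is immediate from $\KL(\bbp\|\bbp')\geq 0$. For the upper bound, I would invoke the log-sum inequality good by good: since $p_j = \sum_i b_{ij}$ and $p_j' = \sum_i b_{ij}'$, one has $p_j\log(p_j/p_j') \leq \sum_i b_{ij}\log(b_{ij}/b_{ij}')$; summing over $j$ gives $\KL(\bbp\|\bbp') \leq \sum_i \KL(\bbb_i\|\bbb_i')$. Plugging this in collapses the coefficient of each $\KL(\bbb_i\|\bbb_i')$ to $\frac{1-\rho_i}{\rho_i}+1 = \frac{1}{\rho_i}$ and cancels the Leontief terms outright, yielding the claimed upper bound. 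The analytic content of the proof is essentially just log-sum (Jensen); the only delicate point is keeping the signs straight across the substitutes/complements divide, which I do not expect to be a genuine obstacle.
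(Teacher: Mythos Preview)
Your proposal is correct and is essentially the paper's own proof: the paper computes directly that $\Phi(\bbb)-\Phi(\bbb')-\langle\nabla\Phi(\bbb'),\bbb-\bbb'\rangle = \sum_{i:\rho_i\neq-\infty}\frac{1-\rho_i}{\rho_i}\KL(\bbb_i\|\bbb_i') - \sum_{i:\rho_i=-\infty}\KL(\bbb_i\|\bbb_i') + \KL(\bbp\|\bbp')$, which is exactly your identity (you just make the decomposition $\Phi=\sum_iF_i+G$ more explicit), and then invokes the same two facts, $\KL(\bbp\|\bbp')\geq 0$ and $\KL(\bbp\|\bbp')\leq\sum_i\KL(\bbb_i\|\bbb_i')$, for the lower and upper bounds respectively.
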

\begin{proof}
\begin{align*}
\Phi(\mathbf{b}) - \Phi(\mathbf{b}') - \langle \nabla \Phi(\mathbf{b}'), \mathbf{b} - \mathbf{b}' \rangle &= -\sum_{i: \rho_i \neq \{0, -\infty\}} \frac{\rho_i - 1}{\rho_i} \KL(\bbb_i || \bbb_i') - \sum_{i:\rho_i = -\infty} \KL(\bbb_i || \bbb_{i}')  + \KL(\bbp || \bbp') \\
&= \sum_{i: \rho_i \neq  -\infty} \frac{1}{\rho_i} \KL(\bbb_i || \bbb_i')  - \Big( \sum_i \KL(\bbb_i || \bbb_i') - \KL(\bbp || \bbp')\Big).
\end{align*}

Since $\sum_i \KL(\bbb_i || \bbb_i') \geq \KL(\bbp || \bbp')$, the result follows.
\end{proof}

\paragraph{The Substitutes Domain}
The following lemma states the equivalence between mirror descent and Proportional Response in the substitutes domain; it follow readily from the definition of $\bbb^{t+1}$ for Proportional Response (given by \eqref{eqn:subst-CES-PR-rule}).
\begin{lemma}
For buyers with CES substitutes utilities, the Proportional Response update is the same as the mirror descent update, given by:
\begin{align*}
\bbb_{i}^{t+1} = {\arg \min}_{\bbb_i : \sum_j b_{ij} = e_i} \left\{\langle \nabla_{\bbb_i}\Phi(\mathbf{b}^t), \bbb_i - \bbb_i^t\rangle + \frac{1}{\rho_i} \KL(\bbb_i || \bbb_i^t)\right\}.
\end{align*}
\end{lemma}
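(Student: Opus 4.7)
The plan is to solve the mirror descent subproblem in closed form via a single Lagrange multiplier and check that the resulting expression matches the Proportional Response rule \eqref{eqn:subst-CES-PR-rule}. The subproblem is the minimization of a strictly convex objective (the KL term is strictly convex on the positive orthant) subject to a linear equality constraint $\sum_j b_{ij} = e_i$ plus non-negativity, so a unique minimizer exists and is pinned down by first-order conditions.

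First I would note that, because both $\bbb_i$ and $\bbb_i^t$ have total mass $e_i$, the Bregman divergence generated by the kernel $h(\bbx) = \sum_j (x_j \log x_j - x_j)$ reduces on the feasible set to $\KL(\bbb_i \| \bbb_i^t) = \sum_j b_{ij} \log(b_{ij}/b_{ij}^t)$. Introducing a multiplier $\lambda_i \in \mathbb{R}$ for the budget constraint and using the gradient formula
\[
\nabla_{b_{ij}} \Phi(\bbb^t) \;=\; \frac{1}{\rho_i}\Bigl(1 - \log \tfrac{a_{ij}(b_{ij}^t)^{\rho_i - 1}}{(p_j^t)^{\rho_i}}\Bigr)
\]
computed just above the lemma, the stationarity condition in $b_{ij}$ becomes an explicit algebraic equation. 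Solving for $b_{ij}$ isolates a product of $a_{ij}(b_{ij}^t/p_j^t)^{\rho_i}$ and an $i$-dependent constant $C_i = \exp(\rho_i \lambda_i - 2)$ that does not depend on $j$.

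Second, I would determine $C_i$ by imposing the budget constraint $\sum_j b_{ij} = e_i$, which forces $C_i = e_i \big/ \sum_k a_{ik}(b_{ik}^t/p_k^t)^{\rho_i}$. Substituting back gives exactly the Proportional Response formula \eqref{eqn:subst-CES-PR-rule}. Non-negativity of the iterate is automatic because the closed-form expression is strictly positive whenever $a_{ij} > 0$ and $b_{ij}^t > 0$, so the constraint $b_{ij} \ge 0$ is inactive and no further KKT analysis is needed.

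There is no real obstacle here; this is essentially a bookkeeping verification, and the authors themselves describe the lemma as following readily from the definition of Proportional Response. The only points requiring care are tracking the signs inside the logarithms when combining $\nabla_{b_{ij}} \Phi(\bbb^t)$ with $\frac{1}{\rho_i}\nabla_{b_{ij}} \KL(\bbb_i \| \bbb_i^t) = \frac{1}{\rho_i}(\log(b_{ij}/b_{ij}^t) + 1)$, and remembering that the Bregman divergence agrees with $\KL$ only because the equality constraint equates the total masses of $\bbb_i$ and $\bbb_i^t$.
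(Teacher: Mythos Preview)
Your proposal is correct and matches the paper's approach: the paper simply states that the lemma ``follows readily from the definition of $\bbb^{t+1}$ for Proportional Response,'' and your Lagrangian computation is precisely the routine verification behind that remark. The algebra you outline (stationarity gives $b_{ij} \propto a_{ij}(b_{ij}^t/p_j^t)^{\rho_i}$, then normalize by the budget) is the standard and only natural way to check this, so there is no meaningful difference between your argument and the paper's.
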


The next lemma states several properties of the potential function in the substitutes domain.
\begin{lemma}
\begin{enumerate}[i.]
\item If $\rho_i > 0$ for all $i$, then $\Phi$ is a $1$-Bregman convex function w.r.t. $\sum_i \frac{1}{\rho_i} \KL(\bbb_i || \bbb'_i)$;
\item if $0 < \rho_i < 1$ for all $i$, then $\Phi$ is a $(\min_i\{1 - \rho_i\}, 1)$-strong Bregman convex function w.r.t. $\sum_i \frac{1}{\rho_i} \KL(\bbb_i || \bbb'_i)$;
\item $\mathbf{b}$ is the spending at the market equilibrium if and only if $\mathbf{b}$ is the minimum point of $\Phi$. 
\end{enumerate}
\end{lemma}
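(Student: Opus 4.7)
The plan is to derive all three parts directly from Lemma~\ref{ineq::saddle::upper::lower}, which already expresses $\Phi(\bbb) - \Phi(\bbb') - \langle \nabla \Phi(\mathbf{b}'), \mathbf{b}-\mathbf{b}' \rangle$ as a sum of KL terms sandwiched between the right expressions. Since we are in the pure substitutes domain, the $\rho_i = -\infty$ sums vanish, and the bound simplifies to
\[
\sum_i \tfrac{1-\rho_i}{\rho_i}\KL(\bbb_i\|\bbb_i') \;\le\; \Phi(\bbb)-\Phi(\bbb')-\langle \nabla\Phi(\bbb'),\bbb-\bbb'\rangle \;\le\; \sum_i \tfrac{1}{\rho_i}\KL(\bbb_i\|\bbb_i').
\]

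For part (i), with all $\rho_i > 0$ (allowing $\rho_i = 1$), the lower bound coefficient $\frac{1-\rho_i}{\rho_i}$ is nonnegative, so $\Phi$ is convex, and the upper bound is exactly $1 \cdot d_h(\bbb,\bbb')$ for $d_h := \sum_i \frac{1}{\rho_i}\KL(\bbb_i\|\bbb_i')$, yielding $1$-Bregman convexity. For part (ii), with $0 < \rho_i < 1$ the lower bound satisfies $\frac{1-\rho_i}{\rho_i} = (1-\rho_i)\cdot\frac{1}{\rho_i} \ge \min_i(1-\rho_i)\cdot\frac{1}{\rho_i}$, so factoring out $\sigma = \min_i\{1-\rho_i\}$ gives the required strong-convexity lower bound $\sigma \cdot d_h(\bbb,\bbb')$, and the upper bound is again $L=1$ times $d_h$. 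These two parts are therefore essentially bookkeeping on the statement of Lemma~\ref{ineq::saddle::upper::lower}; the only thing to check is that $\sigma \le L$, which holds since $1-\rho_i \le 1$.

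For part (iii), I would use the KKT conditions for the convex program $\min_{\bbb}\Phi(\bbb)$ subject to $\sum_j b_{ij}=e_i$, $b_{ij}\ge 0$. Since (i) shows $\Phi$ is convex on this domain, KKT is both necessary and sufficient. Using the gradient formula
\[
\nabla_{b_{ij}} \Phi(\bbb) \;=\; \tfrac{1}{\rho_i}\Bigl(1-\log \tfrac{a_{ij}b_{ij}^{\rho_i-1}}{p_j^{\rho_i}(\bbb)}\Bigr),
\]
stationarity requires $\nabla_{b_{ij}}\Phi$ to equal a buyer-dependent Lagrange multiplier $\mu_i$ on the support of $\bbb_i$ and to be at least $\mu_i$ off the support. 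Rearranging, this says $\frac{a_{ij}b_{ij}^{\rho_i-1}}{p_j^{\rho_i}(\bbb)}$ is constant in $j$ on the support and no larger off it. Substituting $b_{ij} = x_{ij} p_j$ turns this into $\frac{a_{ij}x_{ij}^{\rho_i-1}}{p_j}$ being constant on the support, which is exactly the optimality condition that $\bbx_i$ is a utility-maximizing demand at prices $\bbp$ for a CES buyer. Finally, the constraint $\sum_i b_{ij}=p_j$ holds by definition, so market clearing is automatic.

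The steps are all routine; the only mild subtlety I anticipate is handling the boundary: I would restrict to the interior on the support (where $\log$ is defined) and use the fact that $\Phi \to +\infty$ as any $b_{ij}\to 0$ when $\rho_i<1$ to argue the minimizer is interior (or otherwise invoke standard KKT on the full domain with the $\ge \mu_i$ complementary-slackness inequality). No step looks hard; essentially parts (i) and (ii) fall out of Lemma~\ref{ineq::saddle::upper::lower}, and part (iii) is a one-line KKT translation via the gradient formula.
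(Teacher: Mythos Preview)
Your approach matches the paper's: parts (i) and (ii) are exactly the ``little calculation'' from Lemma~\ref{ineq::saddle::upper::lower} that the paper invokes, and part (iii) is the same comparison of the first-order optimality conditions for $\Phi$ with the CES utility-maximization conditions, carried out in Appendix~\ref{correspondence}. One minor slip: $\Phi$ does \emph{not} blow up as $b_{ij}\to 0$ (the term $(\rho_i-1)\,b_{ij}\log b_{ij}\to 0$); what actually forces interiority for $\rho_i<1$ is that $\nabla_{b_{ij}}\Phi\to -\infty$ there, and in any case your fallback to full KKT with complementary slackness is correct and is effectively what the paper does.
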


\begin{proof}
The first two claims follow from  Lemma~\ref{ineq::saddle::upper::lower} with a little calculation.
The proof of the third claim is given in Appendix~\ref{correspondence}.
\end{proof}

Let $\mathbf{b}^*$ be the spending at some market equilibrium. Applying Theorem~\ref{thm::plain::convex} yields:

\begin{corollary}\label{cor:subst-incl-linear}
\begin{align*}
\Phi(\mathbf{b}^{T}) - \Phi(\mathbf{b}^*) \leq \frac{1}{T} \sum_i \frac{1}{\rho_i} \KL(\bbb_i^* || \bbb_i^0).
\end{align*}
\end{corollary}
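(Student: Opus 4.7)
The plan is to observe that the corollary is an immediate application of Theorem~\ref{thm::plain::convex} (the BDX rate for $L$-Bregman convex functions) to the potential $\Phi$, with kernel $d(\bbb,\bbb') = \sum_i \tfrac{1}{\rho_i}\KL(\bbb_i \| \bbb_i')$ and constant $L=1$. Both ingredients have been established in the two lemmas immediately preceding: one asserts that $\Phi$ is $1$-Bregman convex with respect to this $d$, and the other identifies the Proportional Response rule \eqref{eqn:subst-CES-PR-rule} as the mirror descent update for $\Phi$ under $d$. The corollary is then a direct substitution into the $\tfrac{L\cdot d(\bbx^*,\bbx^0)}{T}$ bound of Theorem~\ref{thm::plain::convex}.

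The one subtlety worth articulating is that the composite $d = \sum_i \tfrac{1}{\rho_i}\KL(\bbb_i \| \bbb_i')$ is a legitimate Bregman divergence on the product of simplices $\{\bbb : \sum_j b_{ij} = e_i,\ b_{ij} \geq 0\}$: it is generated by the kernel $\sum_i \tfrac{1}{\rho_i}\sum_j (b_{ij}\ln b_{ij} - b_{ij})$, which is convex and differentiable on the relative interior when each $\rho_i > 0$. Both the linearized objective $\langle \nabla \Phi(\bbb^t), \bbb - \bbb^t\rangle$ and the divergence $d(\bbb,\bbb^t)$ separate across buyers, and the feasibility constraints do as well; hence the global mirror descent step \eqref{eq:md-update-rule} decouples into the per-buyer updates displayed in the earlier lemma, which in turn coincide with \eqref{eqn:subst-CES-PR-rule}. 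So running Proportional Response is identical to running mirror descent on $\Phi$ with $L=1$ relative to $d$.

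Given all this, my proof would simply invoke Theorem~\ref{thm::plain::convex} with $f = \Phi$, $L = 1$, and $d_h = d$, instantiated at the iterates $\bbb^T$ produced by \eqref{eqn:subst-CES-PR-rule} from initial spending $\bbb^0$, and at the equilibrium spending $\bbb^*$ (which by the third clause of the preceding lemma is a minimizer of $\Phi$). This immediately yields
\[
\Phi(\bbb^T) - \Phi(\bbb^*) \;\leq\; \frac{1\cdot d(\bbb^*,\bbb^0)}{T} \;=\; \frac{1}{T}\sum_i \frac{1}{\rho_i}\KL(\bbb_i^* \| \bbb_i^0),
\]
as claimed. The main obstacle in this corollary is not the application itself but rather the ingredients (Bregman convexity and mirror-descent equivalence), which have already been handled; essentially no additional work is required beyond verifying that the constant produced by the theorem matches the stated right-hand side.
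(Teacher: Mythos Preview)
Your proposal is correct and matches the paper's approach exactly: the paper also simply states that the corollary follows by applying Theorem~\ref{thm::plain::convex} with $L=1$ and $d=\sum_i \tfrac{1}{\rho_i}\KL(\bbb_i\|\bbb_i')$, using the two preceding lemmas for the Bregman-convexity and mirror-descent-equivalence ingredients. Your additional remarks verifying that the composite KL is a bona fide Bregman divergence and that the update decouples across buyers are helpful elaborations, but the core argument is identical.
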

Furthermore, if there is no buyer with a linear utility function, then, applying Theorem~\ref{thm:md-convergence-rate} yields:
\begin{corollary} 
\label{cor:subst-no-linear}
Let $\sigma = \min_i\{1 - \rho_i\} > 0$. Then
\begin{align*}
\Phi(\mathbf{b}^{T}) - \Phi(\mathbf{b}^*) \leq \frac{\sigma (1 - \sigma)^T}{1 - (1 - \sigma)^T}  \sum_i \frac{1}{\rho_i} \KL(\bbb_i^* || \bbb_i^0).
\end{align*}
\end{corollary}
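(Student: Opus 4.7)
The plan is to obtain Corollary~\ref{cor:subst-no-linear} as a direct specialization of Theorem~\ref{thm:md-convergence-rate}, using the structural facts about $\Phi$ that were just established. First I would invoke part (ii) of the preceding lemma, which tells us that when no buyer has a linear utility (so $0 < \rho_i < 1$ for all $i$), the potential $\Phi$ is $(\sigma, 1)$-strongly Bregman convex with respect to the Bregman divergence $d_h(\bbb,\bbb') = \sum_i \tfrac{1}{\rho_i} \KL(\bbb_i \| \bbb_i')$, where $\sigma = \min_i\{1 - \rho_i\}$. Then I would use the equivalence lemma above it to identify the Proportional Response update~\eqref{eqn:subst-CES-PR-rule} with the mirror descent step~\eqref{eq:md-update-rule} associated to this kernel, so that the hypotheses of Theorem~\ref{thm:md-convergence-rate} are met with $L = 1$.

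Next I would plug $L = 1$ and the stated $\sigma$ into the conclusion of Theorem~\ref{thm:md-convergence-rate}, which gives
\[
\Phi(\bbb^T) - \Phi(\bbb^*) \;\le\; \frac{\sigma}{\left(\tfrac{1}{1-\sigma}\right)^T - 1} \cdot d_h(\bbb^*, \bbb^0).
\]
A short algebraic simplification — multiplying numerator and denominator by $(1-\sigma)^T$ — rewrites the prefactor as $\frac{\sigma(1-\sigma)^T}{1 - (1-\sigma)^T}$, and expanding $d_h(\bbb^*,\bbb^0) = \sum_i \tfrac{1}{\rho_i} \KL(\bbb_i^* \| \bbb_i^0)$ yields exactly the claimed bound.

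There is essentially no hard step here; the only places that require a moment of care are (a) checking that $0 < \sigma \le L = 1$ so that Theorem~\ref{thm:md-convergence-rate} genuinely applies (this uses the assumption $0 < \rho_i < 1$, which rules out both Cobb-Douglas and linear utilities, and also guarantees $\sigma > 0$), and (b) confirming that the minimum point $\bbb^*$ of $\Phi$ used as the reference in the theorem coincides with the equilibrium spending, which is supplied by part (iii) of the preceding lemma. Once these points are verified, the corollary is immediate from Theorem~\ref{thm:md-convergence-rate}.
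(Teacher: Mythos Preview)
Your proposal is correct and follows exactly the paper's approach: after establishing that $\Phi$ is $(\sigma,1)$-strongly Bregman convex with respect to $d_h(\bbb,\bbb')=\sum_i \tfrac{1}{\rho_i}\KL(\bbb_i\|\bbb_i')$ and that Proportional Response coincides with the corresponding mirror descent step, the paper simply says ``applying Theorem~\ref{thm:md-convergence-rate} yields'' the bound. Your additional remarks about checking $0<\sigma\le 1$ and identifying $\bbb^*$ with the equilibrium via part~(iii) are just making explicit what the paper leaves implicit.
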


We now explain how to recover Zhang's bound~\cite{Zhang2011}.
From \eqref{eqn::comb-strong-convex},
\begin{align*}
\sum_i \frac{1}{\rho_i} \KL(\mathbf{b}_i^* || \mathbf{b}_i^t) \le \frac {L - \sigma}{L} \sum_i \frac{1}{\rho_i} \KL(\mathbf{b}_i^* || \mathbf{b}_i^{t-1})
\le \left(\frac {L - \sigma}{L}\right)^t  \sum_i \frac{1}{\rho_i} \KL(\mathbf{b}_i^* || \mathbf{b}_i^{0})= \left(\max_i \rho_i\right)^t \sum_i \frac{1}{\rho_i} \KL(\mathbf{b}_i^* || \mathbf{b}_i^{0}).
\end{align*}
In ~\cite{Zhang2011}, $\phi(t)$ is used to denote $\sum_i \frac{1}{\rho_i}\KL(\mathbf{b}_i^* || \mathbf{b}_i^t)$.
We have obtained the exact same bound on $\sum_i \frac{1}{\rho_i}\KL(\mathbf{b}_i^* || \mathbf{b}_i^t)$ as in~\cite{Zhang2011}, 
and thus can deduce the identical convergence rate.

\paragraph{The Complementary Domain}

We proceed as in the substitutes domain. First, the following lemma shows the equivalence between the mirror descent and the Proportional Response (given by \eqref{eqn:comp-CES-PR-rule}) in the complementary domain.
\begin{lemma}
For those complementary buyers such that $-\infty< \rho_i < 0$, the Proportional Response update, which is the best response in this domain, is equal to the mirror descent update, given by:
\begin{align*}
\bbb_{i}^{t+1} = {\arg \min}_{\bbb_i : \sum_j b_{ij} = e_i} \left\{- \langle \nabla_{b_i}\Phi(\mathbf{b}^t), \bbb_i - \bbb_i^t\rangle + \frac{\rho_i - 1}{\rho_i} \KL(\bbb_i || \bbb_i^t)\right\};
\end{align*}
and this also holds for buyers with Leontief utility functions, where now the mirror descent update is given by:

\begin{align*}
\bbb_{i}^{t+1} = {\arg \min}_{\bbb_i : \sum_j b_{ij} = e_i} \{- \langle \nabla_{b_i}\Phi(\mathbf{b}^t), \bbb_i - \bbb_i^t\rangle +  \KL(\bbb_i || \bbb_i^t)\};
\end{align*}
\end{lemma}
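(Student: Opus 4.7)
The plan is to verify the equivalence by computing the first-order optimality condition of the mirror-descent subproblem in closed form and matching it term-by-term with the Proportional Response rule~\eqref{eqn:comp-CES-PR-rule}. For a fixed buyer $i$ with $-\infty<\rho_i<0$, attach a Lagrange multiplier $\lambda$ to the budget constraint $\sum_j b_{ij}=e_i$. Since $(\rho_i-1)/\rho_i>0$, the subproblem is strictly convex in $\bbb_i$, so the KKT stationarity condition
\[
-\,\nabla_{b_{ij}}\Phi(\bbb^t) \;+\; \frac{\rho_i-1}{\rho_i}\!\left(\log\frac{b_{ij}}{b_{ij}^t}+1\right) \;=\; \lambda
\]
uniquely characterizes the minimizer.

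Next I would substitute the gradient formula $\nabla_{b_{ij}}\Phi(\bbb^t)=\frac{1}{\rho_i}\bigl(1-\log\frac{a_{ij}(b_{ij}^t)^{\rho_i-1}}{(p_j^t)^{\rho_i}}\bigr)$ and observe the crucial cancellation: the $\log b_{ij}^t$ contribution coming out of the gradient (with coefficient $(\rho_i-1)/\rho_i$) exactly cancels the $-\frac{\rho_i-1}{\rho_i}\log b_{ij}^t$ coming from the KL term. What remains simplifies to
\[
(\rho_i-1)\log b_{ij} \;=\; \log a_{ij} \;-\; \rho_i\log p_j^t \;+\; (\text{constant depending on }\lambda\text{ but not on }j),
\]
so $b_{ij}\propto\bigl(a_{ij}/(p_j^t)^{\rho_i}\bigr)^{1/(1-\rho_i)}$ (using $\rho_i-1<0$). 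Imposing $\sum_j b_{ij}=e_i$ fixes the proportionality constant and recovers exactly~\eqref{eqn:comp-CES-PR-rule}.

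For the Leontief case the computation is even shorter: with $\nabla_{b_{ij}}\Phi(\bbb^t)=-\log(b_{ij}^t/(c_{ij}p_j^t))$ and Bregman coefficient $1$, the same cancellation leaves $\log(b_{ij}/(c_{ij}p_j^t))=\lambda-1$, hence $b_{ij}\propto c_{ij}p_j^t$, which after normalization matches the stated rule. Finally, to see the ``best response'' claim, I would solve the buyer's utility-maximization problem $\max u_i(\bbx_i)$ subject to $\bbx_i\cdot\bbp^t=e_i$ directly via Lagrange multipliers; the CES first-order conditions yield $x_{ij}\propto(a_{ij}/p_j^t)^{1/(1-\rho_i)}$ (respectively $x_{ij}\propto c_{ij}$ for Leontief), and translating into spending via $b_{ij}=x_{ij}p_j^t$ gives the same allocation as the mirror descent update.

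There is no real obstacle here beyond bookkeeping: the only nontrivial step is recognizing that the coefficient $(\rho_i-1)/\rho_i$ in the KL term is the unique choice that makes the $\log b_{ij}^t$ contributions from $\nabla\Phi$ and from $\KL(\bbb_i\|\bbb_i^t)$ cancel, thereby yielding a closed-form update depending only on $\bbp^t$ and the problem data. This is the structural reason the mirror-descent update coincides with both Proportional Response and the myopic best response, and it mirrors the analogous cancellation (with coefficient $1/\rho_i$) that drove the substitutes case.
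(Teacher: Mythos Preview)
Your proposal is correct and follows the same approach the paper has in mind: the paper itself offers no detailed proof beyond remarking that the equivalence ``follows readily from the definition'' (and in the Damped case simply writes ``By calculation''), so what you have written is precisely the computation that makes this explicit. The key cancellation you identified---that the coefficient $(\rho_i-1)/\rho_i$ on the $\KL$ term is exactly what eliminates the dependence on $b_{ij}^t$---is indeed the heart of the matter, and your verification of both the CES and Leontief cases, together with the separate derivation of the best-response demands, is accurate.
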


Next, we show the following  properties of the potential function in the complementary domain. 
The main difference between the complementary case and the substitutes case is that the potential function is a concave function in the complementary domain, while it is a convex function in the substitutes domain.
\begin{lemma}
\begin{enumerate}[i.]
\item If $\rho_i < 0$ for all $i$, then $\Phi$ is a $1$-Bregman concave function w.r.t. $\sum_i \frac{\rho_i - 1}{\rho_i} \KL(\bbb_i || \bbb'_i)$;
\item if $-\infty < \rho_i < 0$ for all $i$, then $\Phi$ is a $(\min_i\{\frac{1}{1 - \rho_i}\}, 1)$-strong Bregman concave function w.r.t. $\sum_i \frac{\rho_i - 1}{\rho_i} \KL(\bbb_i || \bbb'_i)$;
\item $\mathbf{b}$ is the spending at the market equilibrium if and only if $\mathbf{b}$ is the maximum point of $\Phi$. 
\end{enumerate}
\end{lemma}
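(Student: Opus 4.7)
My plan is to deduce all three claims directly from Lemma~\ref{ineq::saddle::upper::lower}, which has already packaged the key computation: that $\Phi(\bbb)-\Phi(\bbb')-\langle \nabla\Phi(\bbb'),\bbb-\bbb'\rangle$ is sandwiched between $-\sum_i \frac{\rho_i-1}{\rho_i}\KL(\bbb_i\|\bbb_i')$ and $\sum_{i:\rho_i\neq -\infty}\frac{1}{\rho_i}\KL(\bbb_i\|\bbb_i')$ (using the convention $\frac{\rho_i-1}{\rho_i}=1$ at $\rho_i=-\infty$). The only per-term arithmetic I have to do is to compare the two bounds against the reference divergence $d_h(\bbb,\bbb')=\sum_i \frac{\rho_i-1}{\rho_i}\KL(\bbb_i\|\bbb_i')$.

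For part~(i) I first rewrite Bregman concavity as the two-sided inequality $-L\cdot d_h(\bbb,\bbb')\le \Phi(\bbb)-\Phi(\bbb')-\langle\nabla\Phi(\bbb'),\bbb-\bbb'\rangle\le 0$. The lower bound with $L=1$ is literally the lower bound of Lemma~\ref{ineq::saddle::upper::lower}, once the $\rho_i=-\infty$ terms are absorbed into the unified sum via the above convention. The upper bound $\le 0$ follows because, for every $i$ with $\rho_i<0$, $\frac{1}{\rho_i}<0$, so the lemma's upper bound $\sum_i \frac{1}{\rho_i}\KL(\bbb_i\|\bbb_i')$ is nonpositive. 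For part~(ii), the lower bound is unchanged; the new ingredient is a per-index comparison $\frac{1}{\rho_i}\le -\sigma\cdot\frac{\rho_i-1}{\rho_i}$ for all $i$, which after multiplying by $\rho_i<0$ (flipping direction) reduces to $\sigma(1-\rho_i)\le 1$. Excluding Leontief ($\rho_i=-\infty$) ensures $1-\rho_i<\infty$, so taking $\sigma=\min_i\frac{1}{1-\rho_i}>0$ makes this valid termwise and yields the required upper bound $\Phi(\bbb)-\Phi(\bbb')-\langle\nabla\Phi(\bbb'),\bbb-\bbb'\rangle\le -\sigma\cdot d_h(\bbb,\bbb')$.

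For part~(iii) I would run the same KKT argument that is used in the substitutes case (Appendix~\ref{correspondence}), adapted to a concave maximization. Since part~(i) makes $\Phi$ concave over the compact convex product of simplices $\{\bbb_i\ge 0,\ \sum_j b_{ij}=e_i\}$, a feasible $\bbb$ is a maximizer iff it satisfies the KKT condition $\nabla_{b_{ij}}\Phi(\bbb)=\lambda_i$ whenever $b_{ij}>0$ (and $\le\lambda_i$ otherwise) for some multiplier $\lambda_i$. Plugging in the formula $\nabla_{b_{ij}}\Phi=\frac{1}{\rho_i}\bigl(1-\log\frac{a_{ij}b_{ij}^{\rho_i-1}}{p_j^{\rho_i}}\bigr)$ (and the analogous formula for the Leontief term), this is exactly the condition $a_{ij}(b_{ij}/p_j)^{\rho_i-1}/p_j=\text{const}_i$, which is the Lagrangian optimality condition for each buyer's utility-maximization subproblem $\max u_i(\bbx_i)$ s.t.\ $\bbx_i\cdot\bbp=e_i$, with $x_{ij}=b_{ij}/p_j$. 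Market clearing $\sum_i b_{ij}=p_j$ is automatic from the definition of $p_j(\bbb)$, so KKT at $\bbb$ is equivalent to $\bbb$ being an equilibrium spending.

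The only mildly delicate point I expect is the uniform handling of the Leontief terms in parts~(i) and~(iii): the expression for $\Phi$ and for $\nabla\Phi$ splits into two cases at $\rho_i=-\infty$, and the convention $\frac{\rho_i-1}{\rho_i}=1$ must be used to keep the statements clean. Once that convention is in place, the computations are routine sign checks and the KKT translation is standard.
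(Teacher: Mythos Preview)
Your proposal is correct and follows essentially the same approach as the paper: parts (i) and (ii) are deduced from Lemma~\ref{ineq::saddle::upper::lower} via the termwise sign checks you describe (the paper just says ``with a little calculation''), and part (iii) is deferred to the KKT argument in Appendix~\ref{correspondence}, whose content your sketch reproduces accurately. The only thing you have added beyond the paper is making the ``little calculation'' explicit, including the careful handling of the $\rho_i=-\infty$ convention.
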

\begin{proof}
The first two claims follow from  Lemma~\ref{ineq::saddle::upper::lower} with a little calculation.
The proof of the third claim is given in Appendix~\ref{correspondence}.
\end{proof}

Also, let $\mathbf{b}^*$ be the spending at some market equilibrium. Applying Theorem~\ref{thm::plain::convex} 
yields:\footnote{Recall that for $\rhoi=-\infty$, we defined $(\rhoi - 1)/ \rhoi = 1$.}

\begin{corollary}
\label{cor:comp-incl-Leontief}
\begin{align*}
\Phi(\mathbf{b}^*) - \Phi(\mathbf{b}^{T}) \leq \frac{1}{T} \sum_i \frac{\rho_i - 1}{\rho_i} \KL(\bbb_i^* || \bbb_i^0).
\end{align*}
\end{corollary}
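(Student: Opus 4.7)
The plan is to reduce Corollary~\ref{cor:comp-incl-Leontief} to a direct application of Theorem~\ref{thm::plain::convex}, after passing from concavity to convexity by negation. All the structural work has already been done in the two preceding lemmas of the complementary-domain subsection, so the proof is essentially a bookkeeping exercise.

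First, I would invoke the second lemma of the subsection: when $\rho_i < 0$ for every buyer, $\Phi$ is a $1$-Bregman concave function with respect to the divergence $d(\mathbf{b},\mathbf{b}') = \sum_i \frac{\rho_i - 1}{\rho_i}\KL(\bbb_i \,||\, \bbb'_i)$ (using the convention $(\rho_i - 1)/\rho_i = 1$ when $\rho_i = -\infty$). Equivalently, the function $-\Phi$ is $1$-Bregman convex with respect to the same divergence. Similarly, the third claim of that lemma says that $\mathbf{b}^*$ is a maximum of $\Phi$, hence a minimum of $-\Phi$, so it plays the role of $\bbx^*$ in Theorem~\ref{thm::plain::convex}.

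Next, I would use the equivalence lemma (the first lemma of the complementary-domain subsection), which shows that, for every buyer with $-\infty < \rho_i < 0$ (respectively $\rho_i = -\infty$), the Proportional Response update rule~\eqref{eqn:comp-CES-PR-rule} coincides with the mirror descent step for $\Phi$ with Bregman divergence $\frac{\rho_i - 1}{\rho_i}\KL$ (respectively $\KL$); flipping the sign of the objective, this is exactly the mirror descent update rule \eqref{eq:md-update-rule} applied to $-\Phi$ with step size $L = 1$.

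With $-\Phi$ being $1$-Bregman convex, $\mathbf{b}^*$ its minimum, and the trajectory $\{\mathbf{b}^t\}$ generated by the mirror descent rule for $-\Phi$, Theorem~\ref{thm::plain::convex} yields
\begin{align*}
-\Phi(\mathbf{b}^T) - \bigl(-\Phi(\mathbf{b}^*)\bigr) \;\leq\; \frac{1 \cdot d(\mathbf{b}^*, \mathbf{b}^0)}{T} \;=\; \frac{1}{T}\sum_i \frac{\rho_i - 1}{\rho_i}\KL(\bbb_i^* \,||\, \bbb_i^0),
\end{align*}
which is precisely the stated inequality. There is no real obstacle: the only care needed is the sign flip to convert the Bregman-concave maximization of $\Phi$ into the Bregman-convex minimization of $-\Phi$, and to verify that the mirror descent update for $-\Phi$ is identical to the Proportional Response update specified in~\eqref{eqn:comp-CES-PR-rule}. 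Everything else has been packaged into the lemmas just stated and into Theorem~\ref{thm::plain::convex}.
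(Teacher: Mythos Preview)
Your proposal is correct and follows exactly the approach the paper uses: the paper derives this corollary in one line by ``Applying Theorem~\ref{thm::plain::convex}'' after establishing the $1$-Bregman concavity lemma and the mirror-descent equivalence lemma for the complementary domain. Your explicit sign-flip from the Bregman-concave maximization of $\Phi$ to the Bregman-convex minimization of $-\Phi$ is precisely the implicit step needed to invoke Theorem~\ref{thm::plain::convex}, so there is no difference in method.
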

In addition, if there is no buyer with a Leontief utility, applying Theorem~\ref{thm:md-convergence-rate} yields:
\begin{corollary} 
\label{cor:comp-no-Leontief}
Let $\sigma = \min_i\{\frac{1}{1 - \rho_i}\}$. Then,
\begin{align*}
\Phi(\mathbf{b}^*) - \Phi(\mathbf{b}^{T})\leq \frac{\sigma (1 - \sigma)^T}{1 - (1 - \sigma)^T}  \sum_i \frac{\rho_i - 1}{\rho_i} \KL(\bbb_i^* || \bbb_i^0).
\end{align*}
\end{corollary}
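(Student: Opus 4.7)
The plan is to view this corollary as a direct specialization of Theorem~\ref{thm:md-convergence-rate} applied to the convex function $-\Phi$, using the two lemmas that immediately precede the corollary: the equivalence between Proportional Response~\eqref{eqn:comp-CES-PR-rule} and mirror descent in the complementary domain, and the strong Bregman concavity of $\Phi$.

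First, I would record the identification of parameters. The preceding lemma states that, when $-\infty < \rho_i < 0$ for every buyer, $\Phi$ is $(\sigma, 1)$-strongly Bregman concave with respect to the Bregman divergence $d_h(\bbb,\bbb') = \sum_i \frac{\rho_i - 1}{\rho_i} \KL(\bbb_i \| \bbb_i')$, where $\sigma = \min_i \{1/(1-\rho_i)\}$. Equivalently, $-\Phi$ is $(\sigma, L)$-strongly Bregman convex with $L = 1$ with respect to the same $d_h$. Since $\rho_i < 0$, each coefficient $(\rho_i - 1)/\rho_i > 0$, so $d_h$ is a genuine (nonnegative) Bregman divergence on the product of spending simplexes.

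Next, I would check that the iterates match mirror descent. From the equivalence lemma in the complementary domain, the Proportional Response update~\eqref{eqn:comp-CES-PR-rule} satisfies
\begin{equation*}
\bbb_i^{t+1} = \argmin_{\bbb_i \,:\, \sum_j b_{ij} = e_i} \Bigl\{ \langle \nabla_{\bbb_i}(-\Phi)(\bbb^t),\, \bbb_i - \bbb_i^t \rangle + \tfrac{\rho_i - 1}{\rho_i} \KL(\bbb_i \| \bbb_i^t) \Bigr\},
\end{equation*}
which, summed over buyers, is exactly update rule~\eqref{eq:md-update-rule} for minimizing $f := -\Phi$ with Bregman divergence $d_h$ and step parameter $L = 1$. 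Moreover, by the third part of the preceding lemma, the market equilibrium spending $\bbb^*$ is the maximizer of $\Phi$, hence the $\bbx^*$ of Theorem~\ref{thm:md-convergence-rate} for $f = -\Phi$.

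Finally, I would invoke Theorem~\ref{thm:md-convergence-rate} with $f = -\Phi$, $L = 1$, and the above $\sigma$, obtaining
\begin{equation*}
\Phi(\bbb^*) - \Phi(\bbb^T) \;\leq\; \frac{\sigma}{(L/(L-\sigma))^T - 1}\cdot d_h(\bbb^*, \bbb^0) \;=\; \frac{\sigma(1-\sigma)^T}{1 - (1-\sigma)^T}\sum_i \frac{\rho_i - 1}{\rho_i} \KL(\bbb_i^* \| \bbb_i^0),
\end{equation*}
where the last equality is the routine simplification $(1/(1-\sigma))^T - 1 = (1 - (1-\sigma)^T)/(1-\sigma)^T$. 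There is essentially no obstacle here: all the real work has been done in establishing strong Bregman concavity of $\Phi$ and the mirror-descent interpretation of Proportional Response. The only minor point to verify is that $L = 1$ is in fact the constant on the upper side of Definition~\ref{def:LBreg} supplied by Lemma~\ref{ineq::saddle::upper::lower} (since the upper bound there is $\sum_i \frac{1}{\rho_i}\KL(\bbb_i\|\bbb_i')$, which for $\rho_i<0$ translates, after sign-flipping for $-\Phi$ and re-expressing in units of $d_h$, to the stated $L=1$); this is a bookkeeping check rather than a substantive step.
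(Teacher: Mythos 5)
Your proposal is correct and follows exactly the paper's route: the paper likewise obtains this corollary by combining the preceding lemmas (Proportional Response equals mirror descent in the complementary domain, and $\Phi$ is $(\min_i\{1/(1-\rho_i)\},1)$-strongly Bregman concave w.r.t.\ $\sum_i \frac{\rho_i-1}{\rho_i}\KL(\bbb_i\|\bbb_i')$) with Theorem~\ref{thm:md-convergence-rate} applied to $-\Phi$ with $L=1$. The algebraic simplification $\frac{\sigma}{(1/(1-\sigma))^T-1}=\frac{\sigma(1-\sigma)^T}{1-(1-\sigma)^T}$ is exactly the bookkeeping the paper leaves implicit.
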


\section{Saddle Point Analysis}\label{sec:saddle-point}

\begin{proof}[Proof of Theorem~\ref{thm:sadd::conv::sublinear}]
Recall that $\bbx^{t+1} = {\arg\min}_{\bbx \in X} \{\langle \nabla_{\bbx} f(\bbx^t, \bby^t), \bbx - \bbx^t \rangle + 2 L_X d_g(\bbx, \bbx^t)\}$. Applying Lemma~\ref{ineq::basic::bregman} with $\bbx = \bbx^*$, $d(\cdot, \cdot) = 2L_X d_g(\cdot, \cdot)$, $\bby = \bbx^t$, $\bbx^+ = \bbx^{t+1}$, and $g(\bbx) = \langle \nabla_x f(\bbx^t, \bby^t) \rangle, \bbx - \bbx^t\rangle$ gives
\begin{align*}
\langle \nabla_x f(\bbx^t, \bby^t), \bbx^{t+1} - \bbx^t \rangle + 2 L_X d_g(\bbx^{t+1}, \bbx^t) \leq \langle \nabla_x f(\bbx^t, \bby^t), \bbx^* - \bbx^t \rangle + 2 L_X d_g(\bbx^*, \bbx^t) - 2 L_X d_g(\bbx^*, \bbx^{t+1}).
\end{align*}

This is equivalent to
\begin{align*} 
\label{sadd::ineq::comp::point}
&\underbrace{f(\bbx^t, \bby^t) + \langle \nabla f(\bbx^t, \bby^t), (\bbx^{t+1}, \bby^{t+1}) - (\bbx^t, \bby^t) \rangle + 2 L_X d_g(\bbx^{t+1}, \bbx^t)}_{\texttt{LHS}} \\
&\hspace*{0.2in}\leq \underbrace{f(\bbx^t, \bby^t) + \langle \nabla f(\bbx^t, \bby^t), (\bbx^*, \bby^{t+1}) - (\bbx^t, \bby^t) \rangle + 2 L_X d_g(\bbx^*, \bbx^t) - 2 L_X d_g(\bbx^*, \bbx^{t+1}).}_{\texttt{RHS}} \numberthis
\end{align*}

Since $f$ is $(L_X, L_y)$-convex-concave, the third property --- see \eqref{defn::saddle} --- gives:
\begin{align*} \label{sadd:ineq::pos}
&f(\bbx^{t+1}, \bby^{t+1}) + L_X d_g(\bbx^{t+1}, \bbx^t) \overset{(1)}{\leq} \texttt{LHS} \leq \texttt{RHS} \\
&\hspace*{0.6in}\overset{(2)}{\leq}  f(\bbx^*, \bby^{t+1}) + L_Y d_h(\bby^{t+1}, \bby^t) + 2 L_X d_g(\bbx^*, \bbx^t) - 2 L_X d_g(\bbx^*, \bbx^{t+1}), \numberthis
\end{align*}
where $(1)$ is deduced from $(b)$ in Definition~\ref{defn::convex::concave} with $(x', y') = (\bbx^t, \bby^t)$ and $(2)$ is deduced from $(a)$ in Definition~\ref{defn::convex::concave} with $(x', y') = (\bbx^t, \bby^t)$ and $(x, y) = (\bbx^*, \bby^{t+1})$.

Now, let's consider $-f(x,y)$ and $\bby^{t+1} = \arg\min_{\bby \in Y} \{\langle -\nabla_{\bby} f(\bbx^t, \bby^t), \bby - \bby^t \rangle + 2 L_Y d_h(\bby, \bby^t)\}$. Using a similar argument, we obtain:
\begin{align*}
\label{sadd:ineq::neg}
-f(\bbx^{t+1}, \bby^{t+1}) + L_Y d_h(\bby^{t+1}, \bby^t) \leq -f(\bbx^{t+1}, \bby^*) + L_X d_g(\bbx^{t+1}, \bbx^t) + 2 L_Y d_h(\bby^*, \bby^t) - 2 L_Y d_h(\bby^*, \bby^{t+1}) . \numberthis
\end{align*}

Adding these two inequalities gives:
\begin{align*}
f(\bbx^{t+1}, \bby^*) - f(\bbx^*, \bby^{t+1}) \leq 2 L_X d_g(\bbx^*, \bbx^t) + 2 L_Y d_h(\bby^*, \bby^t) - 2 L_X d_g(\bbx^*, \bbx^{t+1}) - 2 L_Y d_h(\bby^*, \bby^{t+1}).
\end{align*}

Summing over $t$ yields:\hspace*{0.2in}
$\sum_{t = 1}^T \Big(f(\bbx^{t}, \bby^*) - f(\bbx^*, \bby^{t})\Big) \leq 2 L_X d_g(\bbx^*, \bbx^0) + 2 L_Y d_h(\bby^*, \bby^0)$.
\end{proof}

\begin{proof}[Proof of Theorem~\ref{conv::saddle::linear}]
Using \eqref{defn::strong::saddle::eqn} instead of \eqref{defn::saddle}, we deduce the following from \eqref{sadd::ineq::comp::point} instead of \eqref{sadd:ineq::pos}:
\begin{align*}
f(\bbx^{t+1}, \bby^{t+1}) + L_X d_g(\bbx^{t+1}, \bbx^t) \leq f(\bbx^*, \bby^{t+1}) + L_Y d_h(\bby^{t+1}, \bby^t) + (2 L_X - \sigma_X) d_g(\bbx^*, \bbx^t) - 2 L_X d_g(\bbx^*, \bbx^{t+1}).
\end{align*}

Also,  \eqref{sadd:ineq::neg} is replaced by: 
\begin{align*}
-f(\bbx^{t+1}, \bby^{t+1}) + L_Y d_h(\bby^{t+1}, \bby^t) \leq -f(\bbx^{t+1}, \bby^*) + L_X d_g(\bbx^{t+1}, \bbx^t) + (2 L_Y - \sigma_Y) d_h(\bby^*, \bby^t) - 2 L_Y d_h(\bby^*, \bby^{t+1}).
\end{align*}

Summing up these two inequalities gives:
\begin{align*}
&f(\bbx^{t+1}, \bby^*) - f(\bbx^*, \bby^{t+1}) \\
&\hspace*{0.2in}\leq (2 L_X - \sigma_X) d_g(\bbx^*, \bbx^t) + (2 L_Y - \sigma_Y) d_h(\bby^*, \bby^t) - 2 L_X d_g(\bbx^*, \bbx^{t+1}) - 2 L_Y d_h(\bby^*, \bby^{t+1}).
\end{align*}

Let $\sigma = \min\left\{\frac{\sigma_X}{L_X},\frac{ \sigma_Y}{L_Y}\right\}$. Then:
\begin{align*}
\sum_{t = 0}^{T-1} \left(\frac{2}{2-\sigma}\right)^t \Big(f(\bbx^{t+1}, \bby^*) - f(\bbx^*, \bby^{t+1})\Big) \leq (2 L_X - \sigma_X) d_g(\bbx^*, \bbx^0) + (2 L_Y - \sigma_Y) d_h(\bby^*, \bby^0).
\end{align*}

Note that $f(\bbx^{t}, \bby^*) - f(\bbx^*, \bby^{t})$ is positive for each $t$, so the result follows.
\end{proof}

\section{Analysis of Damped Proportional Response}

\paragraph{Excluding Cobb-Douglas Utility Functions}
First, we consider a simplified situation where there is no buyer with a Cobb-Douglas utility function.
We want to use the technique developed in the saddle point analysis to obtain a convergence result.  
The potential function is the same as before.

We make the following observations.
\begin{lemma} \label{lem::HGPRD::eq}
If $\rho_i > 0$ for buyer $i$, then the Damped Proportional Response (given by \eqref{update::damped}) is equivalent to mirror descent with a halved step size, defined as follows:
\begin{align*}
\bbb_{i}^{t+1} = {\arg \min}_{\bbb_{i} : \sum_j b_{ij} = e_i} \left\{\langle \nabla_{\bbb_{i}}\Phi(\mathbf{b}^t), \bbb_{i} - \bbb_{i}^t\rangle + \frac{2}{\rho_i} \KL(\bbb_{i} || \bbb_{i}^t)\right\};
\end{align*}
if $-\infty < \rho_i < 0$ for buyer $i$, then the Damped Proportional Response (given by \eqref{update::damped}) is equivalent to mirror descent (really ascent as this is a concave function) with a halved step size defined as follows:
\begin{align*}
\bbb_{i}^{t+1} = {\arg \min}_{\bbb_{i} : \sum_j b_{ij} = e_i} \left\{- \langle \nabla_{\bbb_{i}}\Phi(\mathbf{b}^t), \bbb_{i} - \bbb_{i}^t\rangle + \frac{2(\rho_i - 1)}{\rho_i} \KL(\bbb_{i} || \bbb_{i}^t)\right\};
\end{align*}
and if $\rho_i = -\infty$ for buyer $i$, then the Damped Proportional Response (given by \eqref{update::damped}) is equivalent to mirror descent (really ascent as this is a concave function) with a halved step sizem defined as follows:
\begin{align*}
\bbb_{i}^{t+1} = {\arg \min}_{\bbb_{i} : \sum_j b_{ij} = e_i} \{- \langle \nabla_{\bbb_{i}}\Phi(\mathbf{b}^t), \bbb_{i} - \bbb_{i}^t\rangle + 2 \KL(\bbb_{i} || \bbb_{i}^t)\}.
\end{align*}
\end{lemma}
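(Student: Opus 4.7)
The plan is to verify each of the three cases by explicitly solving the convex mirror descent subproblem via Lagrangian / KKT conditions and checking that the resulting closed-form update matches the corresponding clause of \eqref{update::damped}. Since the mirror descent objective is strictly convex in $\bbb_i$ (the KL kernel is strictly convex on the simplex scaled to budget $e_i$) and the constraint $\sum_j b_{ij}=e_i$ is linear, there is a unique minimizer characterized by a stationarity condition with a single Lagrange multiplier $\lambda_i$. So the task reduces to a direct computation.

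First, for the case $\rho_i>0$, I would substitute the formula
\[
\nabla_{b_{ij}} \Phi(\bbb^t) \;=\; \tfrac{1}{\rho_i}\!\left(1 - \log \tfrac{a_{ij}\,(b_{ij}^t)^{\rho_i - 1}}{(p_j^t)^{\rho_i}}\right)
\]
into the stationarity condition for the subproblem in Lemma~\ref{lem::HGPRD::eq}. Differentiating in $b_{ij}$ gives
\[
\nabla_{b_{ij}}\Phi(\bbb^t) \;+\; \tfrac{2}{\rho_i}\log\tfrac{b_{ij}^{t+1}}{b_{ij}^t} \;=\; \lambda_i,
\]
and solving for $b_{ij}^{t+1}$ yields $b_{ij}^{t+1}\propto \bigl[b_{ij}^t\cdot a_{ij}(b_{ij}^t/p_j^t)^{\rho_i}\bigr]^{1/2}$. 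The budget constraint $\sum_j b_{ij}^{t+1}=e_i$ then fixes the proportionality constant and produces exactly the first line of \eqref{update::damped}. The one-half exponent is precisely what the factor of $2$ (i.e.\ the \emph{halved} step size) creates in the log-linear solution, which is the combinatorial origin of the word ``damped.''

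For the case $-\infty<\rho_i<0$ the argument is identical in spirit but the sign of $\nabla\Phi$ in the subproblem is flipped (since $\Phi$ is concave in that block) and the kernel has the extra prefactor $\tfrac{\rho_i-1}{\rho_i}>0$. Substituting the same formula for $\nabla_{b_{ij}}\Phi$ and carrying out the same KKT manipulation yields $b_{ij}^{t+1}\propto\bigl[b_{ij}^t\cdot(a_{ij}/(p_j^t)^{\rho_i})^{1/(1-\rho_i)}\bigr]^{1/2}$, matching the second line of \eqref{update::damped}. For $\rho_i=-\infty$ I would use the alternative gradient formula $\nabla_{b_{ij}}\Phi(\bbb)=-\log(b_{ij}/(c_{ij}p_j))$, with kernel coefficient $1$ (respecting the convention $(\rho_i-1)/\rho_i\to 1$), and the same solve produces $b_{ij}^{t+1}\propto[b_{ij}^t\cdot c_{ij}p_j^t/b_{ij}^t]^{1/2}=[b_{ij}^t\cdot(c_{ij}/p_j^t)^{-1}]^{1/2}$ after rewriting, matching the third line.

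I do not anticipate a genuine obstacle: each case is a one-variable Lagrangian calculation, and the only care needed is bookkeeping of the sign (convex vs.\ concave block) and of the coefficient in the kernel ($2/\rho_i$, $2(\rho_i-1)/\rho_i$, or $2$) so that upon exponentiating the stationarity identity the exponent on $a_{ij}$ (or $c_{ij}$) and on $p_j^t$ simplifies to the value in \eqref{update::damped}. The mildly delicate step is the $\rho_i=-\infty$ case, where one must either take a careful limit of the $-\infty<\rho_i<0$ expression or, more cleanly, derive it directly from the Leontief piece of $\Phi$; verifying consistency between these two derivations is the only place a minor computation check is warranted.
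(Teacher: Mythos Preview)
Your approach is exactly the paper's: the paper's entire proof is ``By calculation,'' and your proposal spells out precisely that calculation (Lagrangian/KKT on the simplex-constrained KL subproblem, then normalize by the budget). Your treatment of the cases $\rho_i>0$ and $-\infty<\rho_i<0$ is correct.

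One small slip to fix in the Leontief case: the kernel coefficient in the lemma is $2$, not $1$ (the convention $(\rho_i-1)/\rho_i\to 1$ becomes $2(\rho_i-1)/\rho_i\to 2$ after halving the step). Carrying the KKT through with coefficient $2$ gives
\[
b_{ij}^{t+1}\ \propto\ \bigl[\,b_{ij}^t\cdot c_{ij}\,p_j^t\,\bigr]^{1/2},
\]
i.e.\ the geometric mean of $b_{ij}^t$ and the undamped Leontief update $c_{ij}p_j^t$ from~\eqref{eqn:comp-CES-PR-rule}. Your displayed expression $[b_{ij}^t\cdot c_{ij}p_j^t/b_{ij}^t]^{1/2}$ has a stray $1/b_{ij}^t$ and the claimed equality to $[b_{ij}^t\cdot(c_{ij}/p_j^t)^{-1}]^{1/2}$ is not an identity; the correct derivation yields $c_{ij}p_j^t$ inside the bracket (which is also what the ``geometric mean with the standard Proportional Response'' description in Section~\ref{sec:dampedPR} demands). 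This does not affect the method or the other two cases.
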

\begin{proof}
By calculation.
\end{proof}

By Lemma~\ref{ineq::saddle::upper::lower} and with a simple calculation one can show that, in Definition~\ref{defn::convex::concave}, if we set $\bbx = \mathbf{b}_{>0}$, $\bby = \mathbf{b}_{<0}$, $d_g(\bbx) = \sum_{i: \rho_i > 0} \frac{1}{\rho_i} \KL(\bbb_{i} || \bbb_{i}')$, and $d_h(\bby) = \sum_{i: \infty < \rho_i < 0} \frac{\rho_i - 1}{\rho_i} \KL(\bbb_{i} || \bbb_{i}') + \sum_{i: \rho_i = -\infty} \KL(\bbb_{i} || \bbb_{i}')$ , then $\Phi$ is $(1, 1)$-convex-concave function.

Furthermore, let $\mathbf{b}^*_{>0}$ and $\mathbf{b}^*_{<0}$ be the market equilibrium of the Fisher market.
Then,
\begin{align*}
\mathbf{b}^*_{>0} \text{ minimizes } \Phi(\cdot, \mathbf{b}^*_{<0}), \hhsp\text{and}\hhsp
\mathbf{b}^*_{<0} \text{ maximizes } \Phi(\mathbf{b}^*_{>0}, \cdot),
\end{align*}
which implies $(\mathbf{b}^*_{>0}, \mathbf{b}^*_{<0})$ is a saddle point of the potential function $\Phi$.  Theorem~\ref{thm:sadd::conv::sublinear} yields the following corollary.
\begin{corollary}
\label{cor:full-range-no-constraints}
The Damped Proportional Response (given by \eqref{update::damped}) converges to an equilibrium with a convergence rate of: 
\begin{align*}
&\sum_{t = 1}^{T} \Bigg[\Phi(\mathbf{b}^{t}_{>0},  \mathbf{b}^{*}_{<0}) - \Phi (\mathbf{b}^{*}_{>0}, \mathbf{b}^{t}_{<0}) \Bigg] \\
 &~~~~~~~~~~~~\leq  \sum_{i: -\infty < \rho_i < 0} \frac{2(\rho_i - 1)}{\rho_i}  \KL(\bbb_{i}^* || \bbb_{i}^0) + \sum_{i: \rho_i > 0} \frac{2}{\rho_i}  \KL(\bbb_{i}^* || \bbb_{i}^0)+ \sum_{i: \rho_i = -\infty} 2 \KL(\bbb_{i}^* || \bbb_{i}^0).
\end{align*}
\end{corollary}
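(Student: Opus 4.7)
The plan is to derive the corollary as a direct instantiation of Theorem~\ref{thm:sadd::conv::sublinear}(i) with $f = \Phi$, $\bbx = \mathbf{b}_{>0}$, $\bby = \mathbf{b}_{<0}$, $L_X = L_Y = 1$, and $d_g, d_h$ as specified in the paragraph preceding the corollary. The work reduces to verifying three preconditions: (a) $\Phi$ is $(1,1)$-convex-concave with respect to $(d_g, d_h)$; (b) $(\mathbf{b}^*_{>0}, \mathbf{b}^*_{<0})$ is a saddle point of $\Phi$; and (c) the Damped Proportional Response coincides with update rule~\eqref{eqn:sand-update} under these choices.

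For (a), the marginal convexity and concavity (clauses~1 and~2 of Definition~\ref{defn::convex::concave}) are inherited from the pure substitutes and pure complementary analyses in Section~\ref{sec:strong-Breg}: fixing $\bby = \mathbf{b}_{<0}$ makes $\Phi$ convex in $\mathbf{b}_{>0}$, and fixing $\bbx = \mathbf{b}_{>0}$ makes $\Phi$ concave in $\mathbf{b}_{<0}$. For the two-sided estimate in Definition~\ref{defn::convex::concave}, I would begin from the explicit formula for the linearization gap of $\Phi$ derived inside the proof of Lemma~\ref{ineq::saddle::upper::lower}. To establish the upper bound, I would subtract $d_g(\bbx,\bbx')$ from that gap, rearrange, and then verify nonnegativity of the result by combining the aggregation inequality $\KL(\bbp \,||\, \bbp') \le \sum_i \KL(\bbb_i \,||\, \bbb'_i)$ (used already inside that same lemma) with the observation that $(\rho_i - 1)/\rho_i \ge 1$ for every $\rho_i < 0$. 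The lower bound is analogous but simpler, reducing to a manifestly nonnegative expression involving only the substitutes coordinates, where $(1-\rho_i)/\rho_i \ge 0$ for $0 < \rho_i \le 1$.

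Precondition~(b) is essentially already stated in the paragraph preceding the corollary: the identifications $\mathbf{b}^*_{>0} \in \arg\min \Phi(\cdot, \mathbf{b}^*_{<0})$ and $\mathbf{b}^*_{<0} \in \arg\max \Phi(\mathbf{b}^*_{>0}, \cdot)$ follow from the equilibrium-versus-extremum correspondence developed in Section~\ref{sec:strong-Breg} (with details in Appendix~\ref{correspondence}), and together they constitute the saddle-point property. For~(c), Lemma~\ref{lem::HGPRD::eq} already identifies the Damped Proportional Response with mirror descent whose KL coefficients are $2/\rho_i$ for substitutes buyers, $2(\rho_i - 1)/\rho_i$ for non-Leontief complementary buyers, and $2$ for Leontief buyers; these match $2 L_X\, d_g$ and $2 L_Y\, d_h$ with $L_X = L_Y = 1$ term by term.

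With (a)--(c) in place, Theorem~\ref{thm:sadd::conv::sublinear}(i) yields the bound directly: the quantity $2 L_X \cdot d_g(\mathbf{b}^*_{>0}, \mathbf{b}^0_{>0}) + 2 L_Y \cdot d_h(\mathbf{b}^*_{<0}, \mathbf{b}^0_{<0})$ expands verbatim into the three KL-divergence terms on the right-hand side of the corollary. I expect the only nontrivial obstacle to be the sign-tracking in step~(a) across the three regimes $\rho_i > 0$, $-\infty < \rho_i < 0$, and $\rho_i = -\infty$ (in particular, keeping straight which of $(\rho_i-1)/\rho_i$ and $(1-\rho_i)/\rho_i$ is nonnegative in each regime, and ensuring the $\rho_i = -\infty$ contribution is absorbed correctly into $d_h$); once this bookkeeping is completed, the result is an immediate corollary of the general saddle-point machinery already developed.
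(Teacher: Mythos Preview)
Your proposal is correct and follows essentially the same route as the paper: the text immediately preceding the corollary states that Lemma~\ref{ineq::saddle::upper::lower} (with a short calculation) gives the $(1,1)$-convex-concave property for $\Phi$ with the specified $d_g,d_h$, that the equilibrium is a saddle point, and then invokes Theorem~\ref{thm:sadd::conv::sublinear}, with Lemma~\ref{lem::HGPRD::eq} supplying the identification of Damped Proportional Response with update rule~\eqref{eqn:sand-update}. Your outline of steps (a)--(c) matches this exactly, and your sketch of the sign bookkeeping in step~(a) is on target (though slightly more elaborate than necessary: once you have the two inequalities of Lemma~\ref{ineq::saddle::upper::lower}, dropping the nonpositive $\rho_i<0$ contribution from the upper bound and the nonnegative $\rho_i>0$ contribution from the lower bound already gives $-d_h \le \text{gap} \le d_g$).
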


Moreover, if we assume there is no buyer with either a linear utility of a Leontief utility function, then, by Lemma~\ref{ineq::saddle::upper::lower},  $\Psi(\cdot, \cdot)$ is a $(\min_{i:\rho_i > 0} \{1 - \rho_i\}, \min_{i: \rho_i < 0} \{\frac{1}{1 - \rho_i}\},1,1)$-strong Bregman convex-concave function with 
$\bbx = \mathbf{b}_{>0}$, $\bby = \mathbf{b}_{<0}$, $d_g(\bbx) = \sum_{i: \rho_i > 0} \frac{1}{\rho_i} \KL(\bbb_{i} || \bbb_{i}')$ and $d_h(\bby) = \sum_{i: \infty < \rho_i < 0} \frac{\rho_i - 1}{\rho_i} \KL(\bbb_{i} || \bbb_{i}')$ (see Definition~\ref{defn::strong::saddle}). 
Theorem~\ref{conv::saddle::linear} yields the following corollary.
\begin{corollary}
\label{cor:no-extremes}
Suppose there is no buyer with either a linear utility or a Leontief utility. Let 
\begin{align*}
\sigma_{>0} = \min_{i:\rho_i > 0} \{1 - \rho_i\} \hsp\mbox{and}\hsp \sigma_{<0} =  \min_{i:\rho_i < 0} \left\{\frac{1}{1- \rho_i}\right\}.
\end{align*}
Then the Damped Proportional Response (given by \eqref{update::damped}) converges to the equilibrium with a convergence rate of
\begin{align*}
 &\Phi(\mathbf{b}^{T}_{>0},  \mathbf{b}^{*}_{<0}) - \Phi (\mathbf{b}^{*}_{>0}, \mathbf{b}^{T}_{<0}) \\
&\hspace*{0.2in}\leq \Big(1 - \frac{\min\{\sigma_{>0}, \sigma_{<0}\}}{2}\Big)^{T-1} \Bigg[ \sum_{i: \rho_i < 0} (2- \sigma_{<0}) \frac{\rho_i - 1}{\rho_i}  \KL(\bbb_{i}^* || \bbb_{i}^0)  + \sum_{i: \rho_i > 0} (2- \sigma_{>0}) \frac{1}{\rho_i} \KL(\bbb_{i}^* || \bbb_{i}^0)\Bigg]. 
\end{align*} 
\end{corollary}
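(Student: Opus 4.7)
\textbf{Proof plan for Corollary~\ref{cor:no-extremes}.}  The plan is to invoke Theorem~\ref{conv::saddle::linear} directly, after identifying the right saddle-point structure on $\Phi$.  I will set $\bbx := \mathbf{b}_{>0}$, $\bby := \mathbf{b}_{<0}$, $d_g(\bbx,\bbx') := \sum_{i:\rho_i>0}\tfrac{1}{\rho_i}\KL(\bbb_i\|\bbb_i')$, and $d_h(\bby,\bby') := \sum_{i:-\infty<\rho_i<0}\tfrac{\rho_i-1}{\rho_i}\KL(\bbb_i\|\bbb_i')$.  Both divergences are bona fide (all coefficients are positive since $0<\rho_i<1$ on the $\bbx$-side and $\rho_i<0$ implies $(\rho_i-1)/\rho_i>0$ on the $\bby$-side), and the exclusion of linear utilities and Leontief utilities ensures $\sigma_{>0}>0$ and $\sigma_{<0}>0$.

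The first step is to certify that $\Phi$ is $(\sigma_{>0},\sigma_{<0},1,1)$-strongly Bregman convex-concave w.r.t.\ $(d_g,d_h)$, exactly as claimed in the paragraph preceding the corollary.  This is extracted from Lemma~\ref{ineq::saddle::upper::lower} by splitting its sums into the $\rho_i>0$ and $\rho_i<0$ groups (the $\rho_i=-\infty$ group is empty by hypothesis).  For the left inequality of Definition~\ref{defn::strong::saddle}, the $\rho_i>0$ slice of the Lemma's lower bound is $\sum_{i:\rho_i>0}\tfrac{1-\rho_i}{\rho_i}\KL \ge \sigma_{>0}\cdot d_g$, since $1-\rho_i\ge \sigma_{>0}$; and the $\rho_i<0$ slice is exactly $-d_h$, giving $L_Y=1$.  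Symmetrically, on the right inequality the $\rho_i>0$ slice of the Lemma's upper bound is $d_g$, so $L_X=1$; and the $\rho_i<0$ slice equals $-\sigma_Y d_h$ provided $\sigma_Y\le \tfrac{1}{1-\rho_i}$ for every such $i$, which forces $\sigma_Y=\sigma_{<0}$.  The second step is essentially a notational identification: Lemma~\ref{lem::HGPRD::eq} shows that Damped Proportional Response~\eqref{update::damped} is exactly the saddle-point update~\eqref{eqn:sand-update} with $2L_X=2L_Y=2$ applied to $\Phi$ and these two divergences (substitutes buyers do mirror descent with $\tfrac{2}{\rho_i}\KL$, complementary buyers do mirror ascent with $\tfrac{2(\rho_i-1)}{\rho_i}\KL$, matching $2d_g$ and $2d_h$ respectively).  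Together with the already-established fact that $(\mathbf{b}^*_{>0},\mathbf{b}^*_{<0})$ is a saddle point of $\Phi$, all hypotheses of Theorem~\ref{conv::saddle::linear} are in place.

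The final step is mechanical substitution into the conclusion of Theorem~\ref{conv::saddle::linear}.  With $\sigma_X/L_X=\sigma_{>0}$ and $\sigma_Y/L_Y=\sigma_{<0}$, the contraction factor is $\bigl(1-\tfrac{\min\{\sigma_{>0},\sigma_{<0}\}}{2}\bigr)^{T-1}$; the coefficient $2L_X-\sigma_X=2-\sigma_{>0}$ multiplies $d_g(\bbx^*,\bbx^0)=\sum_{i:\rho_i>0}\tfrac{1}{\rho_i}\KL(\bbb_i^*\|\bbb_i^0)$, and $2L_Y-\sigma_Y=2-\sigma_{<0}$ multiplies $d_h(\bby^*,\bby^0)=\sum_{i:\rho_i<0}\tfrac{\rho_i-1}{\rho_i}\KL(\bbb_i^*\|\bbb_i^0)$.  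Re-expanding these two products gives the right-hand side stated in the corollary.  The only non-routine part of the whole argument is the strong convex-concave verification in step one, and even there the sole subtlety is keeping track of the signs of $\rho_i-1$ and $\rho_i$ across the two groups of buyers; everything else is a direct appeal to already-proved results.
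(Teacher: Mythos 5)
Your proposal is correct and follows essentially the same route as the paper: it certifies via Lemma~\ref{ineq::saddle::upper::lower} that $\Phi$ is $(\sigma_{>0},\sigma_{<0},1,1)$-strongly Bregman convex-concave w.r.t.\ $d_g=\sum_{i:\rho_i>0}\frac{1}{\rho_i}\KL$ and $d_h=\sum_{i:\rho_i<0}\frac{\rho_i-1}{\rho_i}\KL$, identifies the damped update with rule~\eqref{eqn:sand-update} through Lemma~\ref{lem::HGPRD::eq}, and then substitutes into Theorem~\ref{conv::saddle::linear}. The sign bookkeeping in your verification of the strong convex-concave property is also the same as (and slightly more explicit than) what the paper records.
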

\emph{The Entire CES Range}\hspace*{0.1in}
Now we consider the Damped Proportional Response with a damping factor of $2$ over the entire CES range,
i.e.\ including Cobb-Douglas utilities. 
Recall that we modify our potential function to include terms for the buyers with $\rhoi = 0$.
However, for this new modified function, 
the buyers with Cobb-Douglas utility functions don't actually perform mirror descent. 
Fortunately, we can make two observations. 

First, the buyers with Cobb-Douglas utility functions converge quickly to the equilibrium independently of everyone else's spending. Second, the buyers whose utility functions are not Cobb-Douglas will still perform the  mirror descent (ascent) procedure. 

So, intuitively, in our analysis, we regard the spending of the buyers with Cobb-Douglas utility functions as a parameter, $\theta$, of $f_{\theta}(\bbx,\bby)$, where $\bbx$ represents the spending of the strictly substitutes buyers and $\bby$ represents the spending of the strictly complementary buyers. Remember, in the 
case with no Cobb-Douglas utilities, the market equilibrium corresponded to a saddle point. Here, similarly, a market equilibrium corresponds to a saddle point of $f_{\mathbf{\theta}^*}(\cdot, \cdot)$, where $\mathbf{\theta}^*$ is the spending at the market equilibrium of those buyers with Cobb-Douglas utility functions. We prove the following two claims.

\begin{enumerate}
\item $\mathbf{\theta}$ converges to $\mathbf{\theta}^*$ quickly;
\item when $\mathbf{\theta}$ tends to  $\mathbf{\theta}^*$, though $\bbx$ and $\bby$ perform the mirror descent based on the gradient of $f_{\mathbf{\theta}}(\bbx,\bby)$ and not of $f_{\mathbf{\theta}^*}(\bbx, \bby)$, $(\bbx, \bby)$ will still converge quickly to $(\bbx^*, \bby^*)$,
the saddle point of $f_{\mathbf{\theta}^*}(\cdot, \cdot)$.
\end{enumerate}

We thereby show that Damped Proportional Response converges to the market equilibrium even when faced with the entire range of CES utilities.
\section{Other Measures of Convergence}\label{sec:discussion}

The potential function $\phi$ appears to be closely related to the Eisenberg-Gale program.
In particular, 
we can show that
in the substitutes domain,
when applying update rule \eqref{eqn:subst-CES-PR-rule},
the Proportional Response update, 
the objective function $\Psi$ for the Eisenberg-Gale program
converges at least as fast as $\Phi$, i.e.\ that 
$\Psi(\mathbf{x}(\mathbf{b}^*_{>0}, \mathbf{b}^*_{=0})) - \Psi(\mathbf{x}(\mathbf{b}_{>0}, \mathbf{b}^*_{=0})) \leq \Phi(\mathbf{b}_{>0}, \mathbf{b}^*_{=0}) -\Phi(\mathbf{b}^*_{>0}, \mathbf{b}^*_{=0})$,
and that in the complementary domain,
when applying update rule \eqref{eqn:comp-CES-PR-rule}, 
the objective function for the dual of the Eisenberg-Gale program
converges at least as fast as $\Phi$.
These claims are shown in Appendix~\ref{sec:Egand-ourPotFn}.

Lemma~\ref{ineq::saddle::upper::lower} allows us to make some observations about the rate of 
convergence of the spending.
For update rule \eqref{eqn:subst-CES-PR-rule}, in the substitutes domain excluding linear utilities,
we can deduce that $\sum_i\KL(\bbb_i || \bbb_i^*) \le \max_i \frac {\rhoi} {1 - \rhoi} [\Phi(\bbb) -{\tiny } \Phi(\bbb^*)]$,
and for update rule \eqref{eqn:comp-CES-PR-rule} in the complementary domain excluding Leontief utilities,
that $\sum_i \KL(\bbb_i || \bbb_i^*) \le \max_i {-\rhoi} [\Phi(\bbb) - \Phi(\bbb^*)]$.
As the equilibrium need not be unique in terms of spending for either linear or Leontief utilities,
this lemma is not going to yield a bound on the convergence rate of the spending in these cases,
as it can be applied to any equilibrium.
Similarly in the combined domain, still excluding linear and Leontief utilities,
we can observe that
\[
\sum_{i: \rhoi > 0}  \frac {1 - \rhoi} {\rhoi} \KL(\bbb_i || \bbb_i^*) +
\sum_{i: \rhoi < 0}\frac {-1}{\rhoi} \KL(\bbb_i || \bbb_i^*) 
\le [\Phi(\bbb_{>0},\bbb_{<0}^*) - \Phi(\bbb^*)] + [\Phi(\bbb^*) - \Phi(\bbb_{>0}^*,\bbb_{<0})].
\]
In addition, since $\KL(\bbp||\bbp^*) \le \KL(\bbb||\bbb^*)$ we can immediately obtain analogous
bounds on the KL divergence of the prices.
Furthermore, for the substitutes domain, including linear utilities,
Lemma~\ref{ineq::saddle::upper::lower} also implies that $\KL(\bbp||\bbp^*) \le  [\Phi(\bbb) - \Phi(\bbb^*)]$.

\section{Open Problems}

It would be interesting to understand what happens when asynchronous updating is allowed,
i.e.\ each buyers makes its updates at times independent of the other updates,
though with some notion of bounded asynchrony.
However, at present, to the best of our knowledge,
there are no techniques for analyzing asynchronous forms of mirror descent,
which makes this a substantial challenge.
Indeed, whether even convergence occurs, let alone fast convergence, is not clear-cut.

Another open problem concerns Proportional Response in more general Arrow-Debreu markets.
Wu and Zhang~\cite{WZ2007} gave an analysis for the substitute domain in which each agent $i$ is endowed one unit of good $i$;
they imposed two restrictions on the utility functions of the agents:
first, each agent has the same positive $\rho$-parameter, and second,
the coefficients in the agents' utility functions satisfy: $a_{ij} > 0$ if and only if $a_{ji} > 0$ (see their Theorem 4.1).
In contrast, market equilibria are known to exist in more general situations,
and admit a convex formulation for the case $\min_i \rho_i > -1$~\cite{CMPV2005}.
It would be interesting to understand the behavior of Proportional Response (or its variants)
if the restrictions imposed by Wu and Zhang are removed.

\section*{Acknowledgment}
We are grateful to the anonymous referees for their insightful comments which helped improve the presentation.

\bibliographystyle{plain}
\bibliography{bib}

\newpage
\appendix
\section{Proportional Response Including Cobb-Douglas Utilities}
\label{sec::entire_range}

\subsection{The Potential Function and its Properties}\label{construction::cobb}
 The new potential function, $\Phi(\mathbf{b})$, which includes buyers with Cobb-Douglas utilities, is defined as follows:
\begin{align*}
&p_j(\mathbf{b}) = \sum_{i} b_{ij}, \\
&\Phi(\mathbf{b}) = - \sum_{i: \rho_i \neq \{0, -\infty\}} \frac{1}{\rho_i} \sum_{j} b_{ij} \log \frac{a_{ij} b_{ij}^{\rho_i - 1}}{[p_j(\mathbf{b})]^{\rho_i}} - \sum_{i: \rho_i = -\infty} \sum_j b_{ij} \log \frac{b_{ij}}{c_{ij}p_j(\mathbf{b})} + \sum_{i: \rho_i = 0} \sum_j b_{ij} \log p_j(\mathbf{b}).
\end{align*}
Note that for those $i$ for which $\rho_i \neq \{0, -\infty\}$,
\begin{align*}
\nabla_{b_{ij}} \Phi(\mathbf{b}) &= - \frac{1}{\rho_i} \log a_{ij} - \frac{\rho_i - 1}{\rho_i}(\log b_{ij} + 1) + \log p_j(\mathbf{b}) + \sum_{h}  b_{hj} \frac{1}{p_j(\mathbf{b})} \\
&= \frac{1}{\rho_i}\Big(1 - \log \frac{a_{ij} b_{ij}^{\rho_i - 1}}{[p_j(\mathbf{b})]^{\rho_i}}\Big);
\end{align*}
for those $i$ for which $\rho_i = -\infty$,
\begin{align*}
\nabla_{b_{ij}} \Phi(\mathbf{b}) = - \log \frac{ b_{ij}}{c_{ij} p_j(\mathbf{b})};
\end{align*}
and for those $i$ for which $\rho_i = 0$,
\begin{align*}
\nabla_{b_{ij}} \Phi(\mathbf{b}) = 1 + \log p_j(\mathbf{b}).
\end{align*}

Then, we can deduce that
\begin{align*}
\Phi(\mathbf{b}) - \Phi(\mathbf{b}') - \langle \nabla \Phi(\mathbf{b}'), \mathbf{b} - \mathbf{b}' \rangle &= -\sum_{i: \rho_i \neq \{0, -\infty\}} \frac{\rho_i - 1}{\rho_i} \KL(\bbb_{i} || \bbb_{i}') - \sum_{i:\rho_i = -\infty} \KL(\bbb_{i} || \bbb_{i}')  + \KL(\bbp || \bbp') \\
&= \sum_{i: \rho_i \neq \{0, -\infty\}} \frac{1}{\rho_i} \KL(\bbb_{i} || \bbb_{i}')  + \sum_{i: \rho_i = 0} \KL(\bbb_{i} || \bbb_{i}')  \\
&\hspace*{1in}- \Big( \sum_i \KL(\bbb_{i} || \bbb_{i}') - \KL(\bbp || \bbp')\Big).
\end{align*}

Since $\sum_i \KL(\bbb_{i} || \bbb_{i}') \geq \KL(\bbp || \bbp')$,
\begin{align*} \label{ineq::saddle::upper::lower::2}
&\sum_{i: \rho_i \neq \{0, -\infty\}} \frac{1 - \rho_i}{\rho_i} \KL(\bbb_{i} || \bbb_{i}') - \sum_{i: \rho_i = -\infty} \KL(\bbb_{i} || \bbb_{i}') \\
&\hspace*{1.4in}\leq \Phi(\mathbf{b}) - \Phi(\mathbf{b}') - \langle \nabla \Phi(\mathbf{b}'), \mathbf{b} - \mathbf{b}' \rangle \\
&\hspace*{1.4in}\hspace*{0.2in}\leq \sum_{i: \rho_i \neq \{0, -\infty\}} \frac{1}{\rho_i} \KL(\bbb_{i} || \bbb_{i}') + \sum_{i: \rho_i = 0} \KL(\bbb_{i} || \bbb_{i}') . \numberthis
\end{align*}

\subsection{Proportional Response in the Substitutes Domain with Cobb-Douglas Utility Functions}

As in the analysis of Proportional Response in the substitutes domain without Cobb-Douglas utility functions, the following lemma states the equivalence between mirror descent and Proportional Response in the substitutes domain; it follows readily from the definition of $\bbb^{t+1}$ for Proportional Response (given by \eqref{eqn:subst-CES-PR-rule}).
\begin{lemma} \label{PR::EQ::SUB::COBB}
For buyers with strict CES substitutes utilities ($\rho_i > 0$), 
the Proportional Response update is the same as the mirror descent update, given by:
\begin{align*}
\bbb_{i}^{t+1} = {\arg \min}_{\bbb_i : \sum_j b_{ij} = e_i} \{\langle \nabla_{\bbb_i}\Phi(\mathbf{b}^t), \bbb_i - \bbb_i^t\rangle + \frac{1}{\rho_i} \KL(\bbb_i || \bbb_i^t)\}.
\end{align*}
\end{lemma}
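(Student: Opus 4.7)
The plan is a direct verification via the KKT/Lagrangian conditions for the constrained convex program defining the mirror-descent update. For a buyer $i$ with $\rho_i>0$, the objective of the mirror-descent step is a sum of a linear term in $\bbb_i$ and a strictly convex separable term $\tfrac{1}{\rho_i}\KL(\bbb_i\|\bbb_i^t)$, so there is a unique minimizer. I would show that this unique minimizer coincides with the Proportional Response formula~\eqref{eqn:subst-CES-PR-rule}.

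First, I would substitute the closed form
\[
\nabla_{b_{ij}}\Phi(\mathbf{b}^t) \;=\; \frac{1}{\rho_i}\Bigl(1 - \log \frac{a_{ij}(b_{ij}^t)^{\rho_i-1}}{(p_j^t)^{\rho_i}}\Bigr),
\]
derived in the construction of $\Phi$ in Appendix~\ref{construction::cobb}, into the first-order condition of the mirror-descent step. Introducing a Lagrange multiplier $\lambda$ for the budget constraint $\sum_j b_{ij} = e_i$, the stationarity condition at the optimum $b_{ij}^{t+1}$ reads
\[
\nabla_{b_{ij}}\Phi(\mathbf{b}^t) \;+\; \frac{1}{\rho_i}\log\frac{b_{ij}^{t+1}}{b_{ij}^t} \;=\; \lambda.
\]
Nonnegativity constraints will not be binding because the $\log$ term blows up as $b_{ij}\to 0^+$.

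Second, I would multiply through by $\rho_i$ and rearrange to isolate $\log b_{ij}^{t+1}$, obtaining
\[
\log b_{ij}^{t+1} \;=\; \rho_i\lambda - 1 \;+\; \log\!\Bigl(a_{ij}\bigl(b_{ij}^t/p_j^t\bigr)^{\rho_i}\Bigr),
\]
so that $b_{ij}^{t+1} \;=\; C\cdot a_{ij}\bigl(b_{ij}^t/p_j^t\bigr)^{\rho_i}$ for some constant $C=e^{\rho_i\lambda - 1}$ independent of $j$. Third, I would impose the budget constraint $\sum_j b_{ij}^{t+1} = e_i$ to pin down $C = e_i \big/ \sum_k a_{ik}(b_{ik}^t/p_k^t)^{\rho_i}$, which yields exactly the Proportional Response update
\[
b_{ij}^{t+1} \;=\; e_i \cdot \frac{a_{ij}(b_{ij}^t/p_j^t)^{\rho_i}}{\sum_k a_{ik}(b_{ik}^t/p_k^t)^{\rho_i}}.
\]

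There is essentially no hard step here; the calculation is routine once the gradient of $\Phi$ is written out. The only point worth being careful about is that this equivalence is purely a buyer-$i$ computation, even though $\Phi$ also contains terms for Cobb-Douglas buyers: those terms contribute to $\nabla_{b_{hj}}\Phi$ only for $h$ with $\rho_h=0$ and not to $\nabla_{b_{ij}}\Phi$ for $\rho_i>0$, so the presence of Cobb-Douglas buyers does not affect the computation for any substitutes buyer. I would note this explicitly to justify that Lemma~\ref{PR::EQ::SUB::COBB} is identical in form to the no-Cobb-Douglas version earlier in the paper.
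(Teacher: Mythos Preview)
Your KKT calculation is correct and is precisely the routine verification the paper has in mind when it says the lemma ``follows readily from the definition of $\bbb^{t+1}$.'' One small correction to your closing remark: the Cobb-Douglas terms in $\Phi$ \emph{do} contribute to $\nabla_{b_{ij}}\Phi$ for $\rho_i>0$, because $p_j(\mathbf{b})=\sum_h b_{hj}$ depends on every buyer's spending, including those with $\rho_h=0$; the reason the gradient formula is nonetheless identical to the no-Cobb-Douglas case is that the $p_j$-contributions from all buyer types sum to $\sum_h b_{hj}/p_j = 1$ regardless of the mix of $\rho_h$'s. This does not affect your main argument, which correctly invokes the gradient formula already derived in Appendix~\ref{construction::cobb}.
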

With a simple calculation, one can show the following properties:
\begin{enumerate}
\item $\bbb^t_{=0}$ will be equal to $\bbb^*_{=0}$ for $t > 0$, which, for $\rho_i = 0$ and $t > 0$,
 implies that:
\begin{align*}\label{eq::zero::sub::cobb}
\KL(\bbb_i^* || \bbb_i^t) = 0;\numberthis
\end{align*} 
\item  $\mathbf{b}_{>0}$ is the spending at the market equilibrium if and only if $\mathbf{b}_{>0}$ is the minimum point of $\Phi(\cdot, \bbb^*_{=0})$. 
\end{enumerate}
We will show the following result.
\begin{theorem} \label{thm::pure::sub::cobb}
\begin{align*}
\Phi (\mathbf{b}^{T}_{>0}, \mathbf{b}^{*}_{=0}) - \Phi (\mathbf{b}^{*}_{>0}, \mathbf{b}^{*}_{=0}) \leq \frac{1}{T} \Bigg(\sum_{i: \rho_i > 0} \frac{1}{\rho_i}  \KL(\bbb_{i}^* || \bbb_{i}^0) + \sum_{i: \rho_i = 0} \KL(\bbb^*_i || \bbb^0_i)\Bigg).
\end{align*}
\end{theorem}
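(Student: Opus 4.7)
My plan is to establish, for every $t\ge 0$, the one-step inequality
\[
\Phi(\bbb^{t+1}) - \Phi(\bbb^*) \;\le\; d(\bbb^*,\bbb^t) \,-\, d(\bbb^*,\bbb^{t+1}),
\]
with $d(\bbb,\bbb')=\sum_{i:\rho_i>0}\tfrac{1}{\rho_i}\KL(\bbb_i\,||\,\bbb_i')+\sum_{i:\rho_i=0}\KL(\bbb_i\,||\,\bbb_i')$. Summing $t=0,\ldots,T-1$ telescopes to $\sum_{t=0}^{T-1}[\Phi(\bbb^{t+1})-\Phi(\bbb^*)] \le d(\bbb^*,\bbb^0)$. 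Combined with the monotonicity $\Phi(\bbb^T) \le \Phi(\bbb^{t+1})$ this gives $T[\Phi(\bbb^T)-\Phi(\bbb^*)] \le d(\bbb^*,\bbb^0)$, which is the claim. By the upper bound in \eqref{ineq::saddle::upper::lower::2}, $\Phi$ is $1$-Bregman convex w.r.t.\ $d$ on the substitutes-plus-Cobb-Douglas domain; the lower bound there is non-negative since each $\rho_i \le 1$, so $\Phi$ is also convex.

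The case $t \ge 1$ is easy: by \eqref{eq::zero::sub::cobb} the Cobb-Douglas spending is already pinned at $\bbb^*_{=0}$, so the Cobb-Douglas Bregman contributions on both sides of the target inequality vanish, and by Lemma~\ref{PR::EQ::SUB::COBB} the $\rho_i>0$ buyers execute genuine mirror descent on the restriction $\Phi(\,\cdot\,,\bbb^*_{=0})$. The claimed inequality is then the standard one-step mirror descent bound obtained by combining the three-point lemma (Lemma~\ref{ineq::basic::bregman}), the $1$-Bregman convexity upper bound at $(\bbb^{t+1},\bbb^t)$, and convexity at $(\bbb^*,\bbb^t)$, exactly as in the proof of Theorem~\ref{thm:md-convergence-rate} specialized to $\sigma=0$.

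The subtle case is $t=0$, where the Cobb-Douglas buyers do \emph{not} perform mirror descent but simply jump to $\bbb^1_{=0}=\bbb^*_{=0}$. I plan to start from
\[
\Phi(\bbb^1) - \Phi(\bbb^*) \;\le\; \langle\nabla\Phi(\bbb^0),\,\bbb^1-\bbb^*\rangle + d(\bbb^1,\bbb^0),
\]
obtained by adding the $1$-Bregman convexity upper bound on $\Phi(\bbb^1)-\Phi(\bbb^0)$ and the convex lower bound on $\Phi(\bbb^*)-\Phi(\bbb^0)$, both relative to the linearization at $\bbb^0$. Splitting into $>0$ and $=0$ components, the Cobb-Douglas linear term vanishes because $\bbb^1_{=0}=\bbb^*_{=0}$, and its Bregman term equals $d_=(\bbb^*_{=0},\bbb^0_{=0})$; the strict-substitutes part is controlled by applying Lemma~\ref{ineq::basic::bregman} to their mirror descent step, giving
\[
\langle\nabla_{>0}\Phi(\bbb^0),\bbb^1_{>0}-\bbb^*_{>0}\rangle + d_>(\bbb^1_{>0},\bbb^0_{>0}) \,\le\, d_>(\bbb^*_{>0},\bbb^0_{>0}) - d_>(\bbb^*_{>0},\bbb^1_{>0}).
\]
Assembling the pieces and using $d_=(\bbb^*_{=0},\bbb^1_{=0})=0$ yields $\Phi(\bbb^1)-\Phi(\bbb^*) \le d(\bbb^*,\bbb^0) - d(\bbb^*,\bbb^1)$.

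The main obstacle is really just this $t=0$ bookkeeping: the one-time Cobb-Douglas ``infinite step'' has to be absorbed cleanly into the telescope, and this works precisely because $d_=(\bbb^*_{=0},\bbb^1_{=0})=0$ leaves the entire Cobb-Douglas contribution as the initial term $\sum_{i:\rho_i=0}\KL(\bbb_i^*||\bbb_i^0)$, exactly matching the right-hand side of the theorem. For the monotonicity step needed to replace $\Phi(\bbb^{t+1})$ by $\Phi(\bbb^T)$ inside the sum, Lemma~\ref{lem::next_smaller} applies at every $t \ge 1$ since from then on the dynamics is honest mirror descent on a $1$-Bregman convex restriction, which suffices to conclude $\Phi(\bbb^T)\le\Phi(\bbb^{t+1})$ for every $t\in\{0,\ldots,T-1\}$.
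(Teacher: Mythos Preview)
Your proof is correct and follows essentially the same approach as the paper: both combine the three-point lemma (Lemma~\ref{ineq::basic::bregman}) for the $\rho_i>0$ buyers with the Bregman-convexity sandwich bound \eqref{ineq::saddle::upper::lower::2}, telescope, and finish with the monotonicity Lemma~\ref{lem::decreasing::function::value}. The only cosmetic difference is that the paper derives a single one-step inequality valid for all $t\ge 0$ by plugging $(\bbb^t_{>0},\bbb^t_{=0})$ into the sandwich bound, whereas you split into the case $t=0$ and the ``genuine mirror descent'' regime $t\ge 1$; the resulting one-step inequalities coincide since $\KL(\bbb^*_{=0}\,||\,\bbb^{t+1}_{=0})=0$.
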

We first show the following lemma.
\begin{lemma}\label{lem::decreasing::function::value}
For $t > 0$,
$\Phi(\bbb^{t+1}_{>0}, \bbb^*_{=0}) \leq \Phi(\bbb^{t}_{>0}, \bbb^*_{=0})$.
\end{lemma}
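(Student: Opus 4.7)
The plan is to reduce this to a direct application of the existing monotonicity lemma (Lemma~\ref{lem::next_smaller}) by showing that, once the Cobb-Douglas spending has been driven to its equilibrium value after one step, the dynamic on the strictly-substitute coordinates is exactly mirror descent applied to a $1$-Bregman convex restriction of $\Phi$.

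First I would invoke property \eqref{eq::zero::sub::cobb}: for every $t>0$ we have $\bbb^{t}_{=0}=\bbb^{t+1}_{=0}=\bbb^{*}_{=0}$. Hence both endpoints in the claimed inequality have identical Cobb-Douglas components, and it suffices to analyze the function
\[
g(\bbb_{>0})~:=~\Phi(\bbb_{>0},\bbb^{*}_{=0}).
\]
Next, I would note that since $\bbb^{t}_{=0}=\bbb^{*}_{=0}$, the partial gradients $\nabla_{\bbb_{i}}\Phi(\bbb^{t})$ appearing in the Proportional Response update of Lemma~\ref{PR::EQ::SUB::COBB} (for $\rho_i>0$) coincide with $\nabla_{\bbb_{i}} g(\bbb^{t}_{>0})$. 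Therefore the update
\[
\bbb_i^{t+1}~=~\argmin_{\bbb_i:\sum_j b_{ij}=e_i}\Bigl\{\langle\nabla_{\bbb_i}\Phi(\bbb^{t}),\bbb_i-\bbb_i^{t}\rangle+\tfrac{1}{\rho_i}\KL(\bbb_i\|\bbb_i^{t})\Bigr\}
\]
is precisely the mirror descent update \eqref{eq:md-update-rule} applied to $g$ with kernel $d(\bbb_{>0},\bbb'_{>0})=\sum_{i:\rho_i>0}\tfrac{1}{\rho_i}\KL(\bbb_i\|\bbb'_i)$ and $L=1$.

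It then remains to verify that $g$ is $1$-Bregman convex with respect to this divergence. This follows from the upper inequality in \eqref{ineq::saddle::upper::lower::2}: restricting both arguments to have the same fixed Cobb-Douglas component $\bbb^{*}_{=0}$ kills the Cobb-Douglas term and any $\rho_i=-\infty$ contribution (we are in the pure-substitutes setting), leaving exactly $g(\bbb_{>0})-g(\bbb'_{>0})-\langle\nabla g(\bbb'_{>0}),\bbb_{>0}-\bbb'_{>0}\rangle\le\sum_{i:\rho_i>0}\tfrac{1}{\rho_i}\KL(\bbb_i\|\bbb'_i)$, which is the required $1$-Bregman convexity. Applying Lemma~\ref{lem::next_smaller} to $g$ then yields $g(\bbb^{t+1}_{>0})\le g(\bbb^{t}_{>0})$, which is exactly the claim.

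The only mildly subtle step is making sure the restriction $g$ inherits the $1$-Bregman convexity, i.e.\ that freezing $\bbb_{=0}=\bbb^{*}_{=0}$ does not disturb the bound in \eqref{ineq::saddle::upper::lower::2}; this is immediate because that inequality holds for arbitrary $\bbb,\bbb'$, and we are simply specializing to a two-dimensional slice. Everything else is bookkeeping.
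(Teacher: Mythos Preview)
Your argument is correct and is essentially the paper's proof, just packaged at a higher level of abstraction: where you define $g(\bbb_{>0})=\Phi(\bbb_{>0},\bbb^{*}_{=0})$, verify its $1$-Bregman convexity via \eqref{ineq::saddle::upper::lower::2}, and invoke Lemma~\ref{lem::next_smaller}, the paper instead unrolls those same steps directly (applying Lemma~\ref{ineq::basic::bregman} with $x=x^{+}=$ current/next iterate and then the upper bound of \eqref{ineq::saddle::upper::lower::2}). The two arguments are the same in substance.
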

\begin{proof}
By Lemma~\ref{PR::EQ::SUB::COBB}, we know that for those $i$ for which $\rho_i > 0$,
\begin{align*}
\bbb_{i}^{t+1} = {\arg \min}_{\bbb_{i} : \sum_j b_{ij} = e_i} \left\{\langle \nabla_{\bbb_{i}}\Phi(\mathbf{b}^t), \bbb_{i} - \bbb_{i}^t\rangle + \frac{1}{\rho_i} \KL(\bbb_{i} || \bbb_{i}^t)\right\}.
\end{align*}
Therefore, by Lemma~\ref{ineq::basic::bregman} with $x^+ = \mathbf{b}^{t+1}_{>0}$, $x = \mathbf{b}^{t}_{>0}$, and $y = \mathbf{b}^t_{>0}$,
\begin{align*}
&\langle \nabla_{\mathbf{b}_{>0}} \Phi(\mathbf{b}^t_{>0}, \mathbf{b}^t_{=0}), \mathbf{b}^{t+1}_{>0} - \mathbf{b}^t_{>0} \rangle + \sum_{i: \rho_i > 0} \frac{1}{\rho_i} \KL(\bbb_{i}^{t+1} || \bbb_{i}^t) \\
&\hspace*{0.2in}\leq \langle \nabla_{\mathbf{b}_{>0}} \Phi(\mathbf{b}^t_{>0}, \mathbf{b}^t_{=0}), \mathbf{b}^{t}_{>0} - \mathbf{b}^t_{>0} \rangle + \sum_{i: \rho_i > 0} \frac{1}{\rho_i}  \KL(\bbb_{i}^t || \bbb_{i}^t)  - \sum_{i: \rho_i > 0} \frac{1}{\rho_i}  \KL(\bbb_{i}^t || \bbb_{i}^{t+1}) \leq 0.
\end{align*}
Also, we know that for $t > 0$, $\bbb^t_{=0} = \bbb^*_{=0}$. Then, for $t > 0$:
\begin{align*}
\langle \nabla \Phi(\mathbf{b}^t_{>0}, \mathbf{b}^*_{=0}), (\mathbf{b}^{t+1}_{>0}, \mathbf{b}^*_{=0})- (\mathbf{b}^t_{>0}, \mathbf{b}^*_{=0}) \rangle + \sum_{i: \rho_i > 0} \frac{1}{\rho_i} \KL(\bbb_{i}^{t+1} || \bbb_{i}^t)  \leq 0.
\end{align*}
Applying \eqref{ineq::saddle::upper::lower::2} 
with $\bbb = (\bbb^{t+1}_{>0},\bbb^*_{=0})$ and
$\bbb' = (\bbb^t_{>0},\bbb^*_{=0})$
yields:
\begin{align*}
\Phi(\mathbf{b}^{t+1}_{>0}, \mathbf{b}^*_{=0}) - \Phi(\mathbf{b}^t_{>0}, \mathbf{b}^*_{=0}) \leq 0,
\end{align*}
which gives the result.
\end{proof}
\begin{proof}[Proof of Theorem~\ref{thm::pure::sub::cobb}]
 By Lemma~\ref{PR::EQ::SUB::COBB}, 
\begin{align*}
\bbb_{i}^{t+1} = {\arg \min}_{\bbb_{i} : \sum_j b_{ij} = e_i} \left\{\langle \nabla_{\bbb_{i}}\Phi(\mathbf{b}^t), \bbb_{i} - \bbb_{i}^t\rangle + \frac{1}{\rho_i} \KL(\bbb_{i} || \bbb_{i}^t)\right\}.
\end{align*}
Then, by Lemma~\ref{ineq::basic::bregman} with $x^+ = \mathbf{b}^{t+1}_{>0}$, $x = \mathbf{b}^{*}_{>0}$, 
and $y = \mathbf{b}^t_{>0}$,
\begin{align*}
&\langle \nabla_{\mathbf{b}_{>0}} \Phi(\mathbf{b}^t_{>0}, \mathbf{b}^t_{=0}), \mathbf{b}^{t+1}_{>0} - \mathbf{b}^t_{>0} \rangle + \sum_{i: \rho_i > 0} \frac{1}{\rho_i} \KL(\bbb_{i}^{t+1} || \bbb_{i}^t) \\
&\hspace*{0.2in}\leq \langle \nabla_{\mathbf{b}_{>0}} \Phi(\mathbf{b}^t_{>0}, \mathbf{b}^t_{=0}), \mathbf{b}^{*}_{>0} - \mathbf{b}^t_{>0} \rangle + \sum_{i: \rho_i > 0} \frac{1}{\rho_i}  \KL(\bbb_{i}^* || \bbb_{i}^t)  - \sum_{i: \rho_i > 0} \frac{1}{\rho_i}  \KL(\bbb_{i}^* || \bbb_{i}^{t+1}).
\end{align*}

This is equivalent to:
\begin{align*} \label{ineq::sub::cobb::pos::triangle}
&\underbrace{\langle \nabla \Phi(\mathbf{b}^t_{>0}, \mathbf{b}^t_{=0}), (\mathbf{b}^{t+1}_{>0}, \mathbf{b}^{*}_{=0}) - (\mathbf{b}^t_{>0},\mathbf{b}^{t}_{=0}) \rangle + \sum_{i: \rho_i > 0} \frac{1}{\rho_i} \KL(\bbb_{i}^{t+1} || \bbb_{i}^t)}_{\texttt{LHS}} \\
&~~~~~~~~~\leq \underbrace{\langle \nabla \Phi(\mathbf{b}^t_{>0}, \mathbf{b}^t_{=0}), (\mathbf{b}^{*}_{>0}, \mathbf{b}^{*}_{=0}) - (\mathbf{b}^t_{>0},\mathbf{b}^{t}_{=0}) \rangle + \sum_{i: \rho_i > 0} \frac{1}{\rho_i}  \KL(\bbb_{i}^* || \bbb_{i}^t)  - \sum_{i: \rho_i > 0} \frac{1}{\rho_i}  \KL(\bbb_{i}^* || \bbb_{i}^{t+1})}_{\texttt{RHS}}. \numberthis
\end{align*} 

From the second inequality in \eqref{ineq::saddle::upper::lower::2} with
$\bbb=(\bbb^{t+1}_{>0},\bbb^*_{=0})$ and $\bbb'=(\bbb^t_{>0},\bbb^t_{=0})$,
the LHS term is lower bounded by:
\begin{align*} \label{ineq::sub::cobb::pos::lhs}
\Phi (\mathbf{b}^{t+1}_{>0}, \mathbf{b}^{*}_{=0}) - \Phi(\mathbf{b}^t_{>0}, \mathbf{b}^t_{=0}) - \sum_{i: \rho_i > 0} \frac{1}{\rho_i} \KL(\bbb^{t+1}_i || \bbb^t_i) - \sum_{i: \rho_i = 0} \KL(\bbb^*_i || \bbb^t_i) +  \sum_{i: \rho_i > 0} \frac{1}{\rho_i}\KL(\bbb_{i}^{t+1} || \bbb_{i}^t)  \numberthis
\end{align*}
and
from the first inequality with $\bbb=(\bbb^*_{>0},\bbb^*_{=0})$ and
$\bbb'=(\bbb^t_{>0},\bbb^t_{=0})$,
the RHS term is upper bounded by:
\begin{align*} \label{ineq::sub::cobb::pos::rhs}
& \Phi (\mathbf{b}^{*}_{>0}, \mathbf{b}^{*}_{=0}) - \Phi(\mathbf{b}^t_{>0}, \mathbf{b}^t_{=0}) - \underbrace{\sum_{i: \rho_i > 0} \frac{1 - \rho_i}{\rho_i} \KL(\bbb_{i}^* || \bbb_{i}^t)}_{B} + \sum_{i: \rho_i > 0} \frac{1}{\rho_i}  \KL(\bbb_{i}^* || \bbb_{i}^t)  - \sum_{i: \rho_i > 0} \frac{1}{\rho_i}  \KL(\bbb_{i}^* || \bbb_{i}^{t+1}). \numberthis
\end{align*}

As $\texttt{LHS} \leq \texttt{RHS}$, and as $B$ is positive, we have:
\begin{align*}\label{ineq::sub::cobb::pos::upper}
&\Phi (\mathbf{b}^{t+1}_{>0}, \mathbf{b}^{*}_{=0})  - \sum_{i: \rho_i = 0} \KL(\bbb^*_i || \bbb^t_i)  
\leq \Phi (\mathbf{b}^{*}_{>0}, \mathbf{b}^{*}_{=0})  + \sum_{i: \rho_i > 0} \frac{1}{\rho_i}  \KL(\bbb_{i}^* || \bbb_{i}^t)  - \sum_{i: \rho_i > 0} \frac{1}{\rho_i}  \KL(\bbb_{i}^* || \bbb_{i}^{t+1}). \numberthis
\end{align*}

Summing over $t$ gives:
\begin{align*}
\sum_{t = 0}^{T-1} \Big(\Phi (\mathbf{b}^{t+1}_{>0}, \mathbf{b}^{*}_{=0}) - \Phi (\mathbf{b}^{*}_{>0}, \mathbf{b}^{*}_{=0})\Big) &\leq \sum_{i: \rho_i > 0} \frac{1}{\rho_i}  \KL(\bbb_{i}^* || \bbb_{i}^0) + \sum_{t = 0}^{T-1} \sum_{i: \rho_i = 0} \KL(\bbb^*_i || \bbb^t_i)  \\
&=  \sum_{i: \rho_i > 0} \frac{1}{\rho_i}  \KL(\bbb_{i}^* || \bbb_{i}^0) + \sum_{i: \rho_i = 0} \KL(\bbb^*_i || \bbb^0_i).
\end{align*}
The second equality holds because of \eqref{eq::zero::sub::cobb}.

By Lemma~\ref{lem::decreasing::function::value},
\begin{align*}
\Phi (\mathbf{b}^{T}_{>0}, \mathbf{b}^{*}_{=0}) - \Phi (\mathbf{b}^{*}_{>0}, \mathbf{b}^{*}_{=0}) \leq \frac{1}{T} \Bigg(\sum_{i: \rho_i > 0} \frac{1}{\rho_i}  \KL(\bbb_{i}^* || \bbb_{i}^0) + \sum_{i: \rho_i = 0} \KL(\bbb^*_i || \bbb^0_i)\Bigg).
\end{align*}
\end{proof}
\begin{theorem}
Suppose there is no buyer with a linear utility function. Let $\sigma = \Big(\min_{i : \rho_i > 0}\left\{\frac{1}{\rho_i}\right\}\Big)$.
Then,
\begin{align*}
\Phi (\mathbf{b}^{T}_{>0}, \mathbf{b}^{*}_{=0}) - \Phi (\mathbf{b}^{*}_{>0}, \mathbf{b}^{*}_{=0}) \leq \frac{\sigma - 1}{\sigma^{T} - 1} \Bigg(\sum_{i: \rho_i > 0} \KL(\bbb_{i}^* || \bbb_{i}^0) + \sum_{i: \rho_i = 0} \KL(\bbb^*_i || \bbb^0_i) \Bigg).
\end{align*}
\end{theorem}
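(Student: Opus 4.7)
The plan is to adapt the proof of Theorem \ref{thm::pure::sub::cobb} by retaining the strong-convexity term $B := \sum_{i:\rho_i>0}\frac{1-\rho_i}{\rho_i}\KL(\bbb_i^*||\bbb_i^t)$ that is dropped there; keeping it is exactly the maneuver that upgraded the $O(1/T)$ bound of Theorem \ref{thm::plain::convex} to the linear rate of Theorem \ref{thm:md-convergence-rate}, and the exclusion of linear utilities makes $B > 0$ so the same trick applies. Write $\Delta_t := \Phi(\bbb^t_{>0}, \bbb^*_{=0}) - \Phi(\bbb^*_{>0}, \bbb^*_{=0})$, $A_t := \sum_{i:\rho_i>0}\frac{1}{\rho_i}\KL(\bbb_i^*||\bbb_i^t)$, and $D_t := \sum_{i:\rho_i=0}\KL(\bbb_i^*||\bbb_i^t)$.

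First I would rerun the chain \eqref{ineq::sub::cobb::pos::triangle}--\eqref{ineq::sub::cobb::pos::rhs} but keep $B$ inside the upper bound on \texttt{RHS}. The $\Phi(\bbb^t_{>0},\bbb^t_{=0})$ terms cancel between the lower and upper bounds, and the identity $\frac{1}{\rho_i} - \frac{1-\rho_i}{\rho_i} = 1$ simplifies the surviving coefficients to yield
\begin{align*}
\Delta_{t+1} \leq \sum_{i:\rho_i>0}\KL(\bbb_i^*||\bbb_i^t) - A_{t+1} + D_t.
\end{align*}
Since $1/\rho_i \geq \sigma$ for every $i$ with $\rho_i > 0$, the first sum is at most $A_t/\sigma$, giving the clean recurrence $\Delta_{t+1} \leq A_t/\sigma - A_{t+1} + D_t$.

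Next, I would multiply this recurrence through by $\sigma^{t+1}$ and sum for $t = 0, \ldots, T-1$. The $A$-terms telescope into $A_0 - \sigma^T A_T \leq A_0$, and by \eqref{eq::zero::sub::cobb} we have $D_t = 0$ for $t \geq 1$, so the $D$-sum collapses to the single surviving term $\sigma D_0$. Iterating Lemma \ref{lem::decreasing::function::value} shows that $\Delta_T \leq \Delta_{t+1}$ for every $t \in \{0, \ldots, T-1\}$, so $\Delta_T$ factors out of the left-hand sum; dividing by $\sum_{t=0}^{T-1}\sigma^{t+1} = \sigma(\sigma^T-1)/(\sigma-1)$ gives a bound of the form $\Delta_T \leq \frac{\sigma-1}{\sigma^T-1}\bigl(A_0/\sigma + D_0\bigr)$, which is the advertised linear rate.

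The main subtlety is the $t = 0$ step, where the Cobb-Douglas spending is not yet at equilibrium: the first-order remainder in \eqref{ineq::saddle::upper::lower::2} genuinely contributes a $D_0$ term there, and the geometric weighting is precisely designed to absorb this into a single surviving factor of $\sigma$ in the telescoped sum. A second bookkeeping item is that Lemma \ref{lem::decreasing::function::value} is stated for $t \geq 1$ only, so an application at $t = 1$ (together with iteration) is needed to conclude $\Delta_T \leq \Delta_1$ and so treat the $t = 0$ summand $\sigma \Delta_1$ uniformly with the rest. The final task is a routine comparison of the resulting weighted KL expression $A_0/\sigma + D_0$ with the unweighted form $\sum_{i:\rho_i>0}\KL(\bbb_i^*||\bbb_i^0) + D_0$ used in the statement, which follows from $1/(\sigma \rho_i) \geq 1$ for $i$ with $\rho_i > 0$.
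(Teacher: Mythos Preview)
Your overall strategy---retain the term $B$ in \eqref{ineq::sub::cobb::pos::rhs}, obtain a one-step recurrence, weight geometrically by powers of $\sigma$, telescope, then factor out $\Delta_T$ using Lemma~\ref{lem::decreasing::function::value}---is exactly the paper's approach. The recurrence you derive, $\Delta_{t+1} \le E_t - A_{t+1} + D_t$ with $E_t := \sum_{i:\rho_i>0}\KL(\bbb_i^*||\bbb_i^t)$, is correct and matches the paper.

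The gap is in your last two moves. You first weaken $E_t$ to $A_t/\sigma$ (using $1/\rho_i \ge \sigma$), telescope on $A$, and end with $\Delta_T \le \frac{\sigma-1}{\sigma^T-1}(A_0/\sigma + D_0)$. You then claim a ``routine comparison'' with the unweighted form in the statement via $1/(\sigma\rho_i)\ge 1$. But that inequality gives $A_0/\sigma = \sum_{i:\rho_i>0}\frac{1}{\sigma\rho_i}\KL(\bbb_i^*||\bbb_i^0) \ge \sum_{i:\rho_i>0}\KL(\bbb_i^*||\bbb_i^0)$, which is the \emph{wrong direction}: your right-hand side is larger than the statement's, so the comparison does not yield the claimed bound.

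The fix is to apply the same inequality $1/\rho_i \ge \sigma$ to the \emph{other} term in the recurrence. Keep the recurrence as $\Delta_{t+1} \le E_t - A_{t+1} + D_t$, multiply by $\sigma^t$ (rather than $\sigma^{t+1}$), and use $A_{t+1}\ge \sigma E_{t+1}$ to get $-\sigma^t A_{t+1}\le -\sigma^{t+1}E_{t+1}$. Now the sum telescopes on $E$, leaving $E_0 - \sigma^T E_T + D_0 \le E_0 + D_0$ on the right; dividing by $\sum_{t=0}^{T-1}\sigma^t = (\sigma^T-1)/(\sigma-1)$ gives exactly the statement. Everything else in your outline (the handling of $D_t$ via \eqref{eq::zero::sub::cobb}, and the monotonicity via Lemma~\ref{lem::decreasing::function::value} for $t\ge 1$) is correct.
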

\begin{proof}
If there is no buyer with a linear utility function, then we do not drop B in \eqref{ineq::sub::cobb::pos::rhs}. So, instead of \eqref{ineq::sub::cobb::pos::upper}, we have:
\begin{align*}
&\Phi (\mathbf{b}^{t+1}_{>0}, \mathbf{b}^{*}_{=0})  - \sum_{i: \rho_i = 0} \KL(\bbb^*_i || \bbb^t_i)  \\
&\hspace*{0.2in}~~~~~~~~~\leq \Phi (\mathbf{b}^{*}_{>0}, \mathbf{b}^{*}_{=0})  + \sum_{i: \rho_i > 0} \KL(\bbb_{i}^* || \bbb_{i}^t)  - \sum_{i: \rho_i > 0} \frac{1}{\rho_i}  \KL(\bbb_{i}^* || \bbb_{i}^{t+1}). \numberthis
\end{align*}

Multiplying both sides by $\Big(\min_{i : \rho_i > 0}\left\{\frac{1}{\rho_i}\right\}\Big)^t$ and summing over all $t$ yields:
\begin{align*}
\sum_{t = 0}^{T-1} \Big(\min_{i : \rho_i > 0}\left\{\frac{1}{\rho_i}\right\}\Big)^t \Big(\Phi (\mathbf{b}^{t+1}_{>0}, \mathbf{b}^{*}_{=0}) - \Phi (\mathbf{b}^{*}_{>0}, \mathbf{b}^{*}_{=0})\Big) &\leq \sum_{i: \rho_i > 0} \KL(\bbb_{i}^* || \bbb_{i}^0) + \sum_{i: \rho_i = 0} \KL(\bbb^*_i || \bbb^0_i).
\end{align*}
Recall that $\sigma = \Big(\min_{i : \rho_i > 0}\left\{\frac{1}{\rho_i}\right\}\Big)$. By Lemma~\ref{lem::decreasing::function::value}, 
\begin{align*}
\Phi (\mathbf{b}^{T}_{>0}, \mathbf{b}^{*}_{=0}) - \Phi (\mathbf{b}^{*}_{>0}, \mathbf{b}^{*}_{=0}) \leq \frac{\sigma - 1}{\sigma^{T} - 1} \Bigg(\sum_{i: \rho_i > 0} \KL(\bbb_{i}^* || \bbb_{i}^0) + \sum_{i: \rho_i = 0} \KL(\bbb^*_i || \bbb^0_i) \Bigg).
\end{align*}
\end{proof}

\subsection{Proportional Response in the Complementary Domain with Cobb-Douglas Utility Functions}

The argument in this case is quite similar to the one in the previous subsection.

First, the following lemma shows the equivalence between mirror descent and  Proportional Response (given by \eqref{eqn:comp-CES-PR-rule}) in the complementary domain.
\begin{lemma}\label{lem::updating::comple::cobb}
For those complementary buyers such that $-\infty< \rho_i < 0$, Proportional Response, 
which is the best response in this domain, is equivalent to the mirror descent update, given by:
\begin{align*}
\bbb_{i}^{t+1} = {\arg \min}_{\bbb_i : \sum_j b_{ij} = e_i} \left\{- \langle \nabla_{b_i}\Phi(\mathbf{b}^t), \bbb_i - \bbb_i^t\rangle + \frac{\rho_i - 1}{\rho_i} \KL(\bbb_i || \bbb_i^t)\right\};
\end{align*}
and for those buyers with Leontief utility functions, Proportional Response, 
which is also the best response in this domain, is equivalent to the mirror descent update, given by:
\begin{align*}
\bbb_{i}^{t+1} = {\arg \min}_{\bbb_i : \sum_j b_{ij} = e_i} \left\{- \langle \nabla_{b_i}\Phi(\mathbf{b}^t), \bbb_i - \bbb_i^t\rangle +  \KL(\bbb_i || \bbb_i^t)\right\}.
\end{align*}
\end{lemma}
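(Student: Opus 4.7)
The plan is to prove both cases of the lemma by directly setting up the KKT conditions for the mirror descent optimization problem and verifying that the closed-form solution matches the Proportional Response update rule~\eqref{eqn:comp-CES-PR-rule}.

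For the case $-\infty < \rho_i < 0$, I would introduce a Lagrange multiplier $\lambda_i$ for the budget constraint $\sum_j b_{ij} = e_i$; the nonnegativity constraints $b_{ij} \geq 0$ are automatically slack because the $b_{ij}\log b_{ij}$ behavior of the KL term keeps $b_{ij}$ strictly positive. Differentiating the objective in $b_{ij}$, and using the expression $\nabla_{b_{ij}}\Phi(\mathbf{b}^t) = \tfrac{1}{\rho_i}\bigl(1 - \log\tfrac{a_{ij}(b_{ij}^t)^{\rho_i-1}}{(p_j^t)^{\rho_i}}\bigr)$ derived in Appendix~\ref{construction::cobb}, the stationarity condition reads
\begin{align*}
-\tfrac{1}{\rho_i}\Big(1 - \log\tfrac{a_{ij}(b_{ij}^t)^{\rho_i-1}}{(p_j^t)^{\rho_i}}\Big) + \tfrac{\rho_i - 1}{\rho_i}\Big(\log\tfrac{b_{ij}}{b_{ij}^t} + 1\Big) = \lambda_i.
\end{align*}
Solving for $b_{ij}$ gives $b_{ij} = C_i(t)\cdot b_{ij}^t\cdot\bigl(\tfrac{a_{ij}(b_{ij}^t)^{\rho_i-1}}{(p_j^t)^{\rho_i}}\bigr)^{1/(1-\rho_i)}$, where $C_i(t)$ collects all factors independent of $j$. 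The key algebraic step is that the exponent $1/(1-\rho_i)$ applied to $(b_{ij}^t)^{\rho_i-1}$ yields $(b_{ij}^t)^{-1}$, exactly cancelling the leading factor of $b_{ij}^t$, so that $b_{ij}^{t+1}$ becomes proportional to $(a_{ij}/(p_j^t)^{\rho_i})^{1/(1-\rho_i)}$. The budget constraint then determines $C_i(t)$, and the resulting formula coincides exactly with the first case of~\eqref{eqn:comp-CES-PR-rule}.

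For the Leontief case $\rho_i = -\infty$, I would repeat the same procedure with the simpler Bregman kernel $\KL(\mathbf{b}_i \| \mathbf{b}_i^t)$ and gradient $\nabla_{b_{ij}}\Phi(\mathbf{b}^t) = -\log\tfrac{b_{ij}^t}{c_{ij}p_j^t}$. The stationarity equation becomes linear in $\log b_{ij}$ with the $\log b_{ij}^t$ contributions cancelling, collapsing to $b_{ij} \propto c_{ij}p_j^t$; normalizing by the budget constraint recovers the Leontief Proportional Response rule stated in~\eqref{eqn:comp-CES-PR-rule}. Alternatively, one can view the Leontief formula as the $\rho_i \to -\infty$ limit of the previous case under the usual identification relating the $a_{ij}$ to the $c_{ij}$.

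The main obstacle is purely algebraic: one must carefully track the exponents $\rho_i/(\rho_i-1)$ and $1/(1-\rho_i)$ to confirm that all $b_{ij}^t$ factors cancel out of the closed form. Once this cancellation is verified, equivalence with the best-response interpretation follows immediately, since a best-responding buyer only reacts to the current prices $p_j^t$. Strict convexity of the mirror descent objective---guaranteed by $(\rho_i-1)/\rho_i > 0$ whenever $\rho_i < 0$, and by the KL coefficient $1$ in the Leontief case---implies the KKT point is the unique global minimizer, justifying the $\arg\min$ in the statement.
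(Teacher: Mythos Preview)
Your proposal is correct and matches the paper's approach: the paper itself does not spell out a proof for this lemma, treating it (like the analogous substitutes-domain lemma and Lemma~\ref{lem::HGPRD::eq}) as a routine ``by calculation'' verification. Your KKT computation is exactly the calculation being alluded to, and the key cancellation you identify---that $(b_{ij}^t)^{(\rho_i-1)/(1-\rho_i)} = (b_{ij}^t)^{-1}$ eliminates the dependence on the previous spending---is precisely what makes the mirror-descent update collapse to the best-response formula in~\eqref{eqn:comp-CES-PR-rule}.
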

We also have the following properties:
\begin{enumerate}
\item $\bbb^t_{=0}$ will be equal to $\bbb^*_{=0}$ for $t > 0$, which, for $\rho_i = 0$ and $t > 0$,
implies that:
\begin{align*}\label{eq::zero::compl::cobb}
\KL(\bbb_i^* || \bbb_i^t) = 0;\numberthis
\end{align*} 
\item  $\mathbf{b}_{<0}$ is the spending at the market equilibrium if and only if $\mathbf{b}_{<0}$ is the 
maximum point of $\Phi(\cdot, \bbb^*_{=0})$. 
\end{enumerate}
We show the following result.
\begin{theorem}\label{thm::conver::comple::cobb::T}
\begin{align*}
\Phi(\mathbf{b}^{*}_{=0}, \mathbf{b}^{*}_{<0})  - \Phi(\mathbf{b}^{*}_{=0}, \mathbf{b}^{T}_{<0}) \leq \frac{1}{T}\Big(\sum_{i: \rho_i = 0} \KL(\bbb_{i}^* || \bbb_{i}^0) + \sum_{i: -\infty < \rho_i < 0} \frac{\rho_i - 1}{\rho_i}  \KL(\bbb_{i}^* || \bbb_{i}^0)  + \sum_{i : \rho_i = -\infty}  \KL(\bbb_{i}^* || \bbb_{i}^0)\Big).
\end{align*}
\end{theorem}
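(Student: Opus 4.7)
The plan is to mirror the proof of Theorem~\ref{thm::pure::sub::cobb} but adapted to the complementary (concave) setting. As in that proof, two structural facts about the Cobb-Douglas coordinates are central: first, $\bbb^t_{=0}=\bbb^*_{=0}$ for all $t \ge 1$ (so by \eqref{eq::zero::compl::cobb}, $\KL(\bbb^*_i\|\bbb^t_i)=0$ for $\rho_i=0$, $t\ge 1$); second, $\bbb_{<0}$ is the spending at a market equilibrium iff $\bbb_{<0}$ maximizes $\Phi(\bbb^*_{=0},\cdot)$.

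The core of the argument has four steps. \textbf{Step 1 (monotonicity analogue of Lemma~\ref{lem::decreasing::function::value}).} For $t\ge 1$, I will show $\Phi(\bbb^*_{=0},\bbb^{t+1}_{<0})\ge \Phi(\bbb^*_{=0},\bbb^t_{<0})$: apply Lemma~\ref{ineq::basic::bregman} to the mirror-ascent problem specified by Lemma~\ref{lem::updating::comple::cobb}, choosing the comparison point to be $\bbb^t_{<0}$ (and $g(\bbb_{<0})=-\langle\nabla_{\bbb_{<0}}\Phi(\bbb^t),\bbb_{<0}-\bbb^t_{<0}\rangle$, $d=\sum_{-\infty<\rho_i<0}\frac{\rho_i-1}{\rho_i}\KL+\sum_{\rho_i=-\infty}\KL$); this yields a nonpositive linear-plus-Bregman expression, which combined with the \emph{lower} inequality of \eqref{ineq::saddle::upper::lower::2} (the concavity upper bound in the complementary domain) and the fact that $\bbb^t_{=0}=\bbb^*_{=0}$ gives monotone increase in $\Phi(\bbb^*_{=0},\cdot)$. \textbf{Step 2 (per-round recursion).} Apply Lemma~\ref{ineq::basic::bregman} again, now with comparison point $\bbb^*_{<0}$, to obtain, writing $D(\bbb_{<0},\bbb'_{<0}):=\sum_{-\infty<\rho_i<0}\frac{\rho_i-1}{\rho_i}\KL(\bbb_i\|\bbb'_i)+\sum_{\rho_i=-\infty}\KL(\bbb_i\|\bbb'_i)$,
\begin{align*}
\langle\nabla_{\bbb_{<0}}\Phi(\bbb^t),\bbb^{t+1}_{<0}-\bbb^*_{<0}\rangle &\ge D(\bbb^{t+1}_{<0},\bbb^t_{<0})-D(\bbb^*_{<0},\bbb^t_{<0})+D(\bbb^*_{<0},\bbb^{t+1}_{<0}).
\end{align*}
Combining this with both directions of \eqref{ineq::saddle::upper::lower::2} (the lower inequality applied to $\bbb=(\bbb^*_{=0},\bbb^*_{<0})$, $\bbb'=(\bbb^t_{=0},\bbb^t_{<0})$; the upper inequality applied to $\bbb=(\bbb^*_{=0},\bbb^{t+1}_{<0})$, $\bbb'=(\bbb^t_{=0},\bbb^t_{<0})$), the $D(\bbb^{t+1}_{<0},\bbb^t_{<0})$ terms cancel (just as $\frac{1}{\rho_i}\KL(\bbb^{t+1}_i\|\bbb^t_i)$ cancels in the substitutes case), and dropping the nonnegative $\sum_{-\infty<\rho_i<0}\frac{-1}{\rho_i}\KL(\bbb^*_i\|\bbb^t_i)$ term (the analogue of $B$ in \eqref{ineq::sub::cobb::pos::rhs}) gives the analogue of \eqref{ineq::sub::cobb::pos::upper}:
\begin{align*}
\Phi(\bbb^*_{=0},\bbb^*_{<0})-\Phi(\bbb^*_{=0},\bbb^{t+1}_{<0}) &\le D(\bbb^*_{<0},\bbb^t_{<0})-D(\bbb^*_{<0},\bbb^{t+1}_{<0})+\sum_{\rho_i=0}\KL(\bbb^*_i\|\bbb^t_i).
\end{align*}

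\textbf{Step 3 (telescoping + Cobb-Douglas term).} Summing the previous display over $t=0,\ldots,T-1$, the $D(\bbb^*_{<0},\cdot)$ terms telescope to $D(\bbb^*_{<0},\bbb^0_{<0})$, and the Cobb-Douglas sum collapses to $\sum_{\rho_i=0}\KL(\bbb^*_i\|\bbb^0_i)$ because $\KL(\bbb^*_i\|\bbb^t_i)=0$ for $t\ge 1$ when $\rho_i=0$. This yields $\sum_{t=1}^{T}[\Phi(\bbb^*_{=0},\bbb^*_{<0})-\Phi(\bbb^*_{=0},\bbb^t_{<0})]$ bounded by the RHS of the theorem multiplied by $T$. \textbf{Step 4 (conclude via monotonicity).} By Step 1, each summand $\Phi(\bbb^*_{=0},\bbb^*_{<0})-\Phi(\bbb^*_{=0},\bbb^t_{<0})$ is nonincreasing in $t$, so the sum is at least $T\bigl[\Phi(\bbb^*_{=0},\bbb^*_{<0})-\Phi(\bbb^*_{=0},\bbb^T_{<0})\bigr]$; dividing by $T$ gives the claimed bound.

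The main technical hazard is sign discipline: the potential is concave in $\bbb_{<0}$, so every inequality from the substitutes-side template flips; in particular, I need to be careful that $\frac{\rho_i-1}{\rho_i}>0$ and $\frac{-1}{\rho_i}>0$ for $\rho_i<0$, and that the ``dropped'' term (corresponding to $B$ in the substitutes proof) is indeed nonnegative so that dropping it preserves the inequality direction. A secondary but important point is verifying that the Leontief ($\rho_i=-\infty$) buyers slot cleanly into the argument: their mirror-ascent step (second form in Lemma~\ref{lem::updating::comple::cobb}) contributes the $\KL$ term with coefficient $1$, which matches the $-\KL$ contribution in the lower inequality of \eqref{ineq::saddle::upper::lower::2} for those buyers and produces the $\sum_{\rho_i=-\infty}\KL(\bbb^*_i\|\bbb^0_i)$ term in the final bound.
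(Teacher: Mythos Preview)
Your proposal is correct and follows the paper's proof essentially step for step: Step~1 is the paper's Lemma~\ref{lem::decreasing::compl::cobb}, the per-round recursion in Step~2 is the paper's display \eqref{ineq::comp::neg::upper}, and Steps~3--4 match the paper's telescoping and use of \eqref{eq::zero::compl::cobb}. One small slip to fix in Step~2: the assignment of the two halves of \eqref{ineq::saddle::upper::lower::2} to the two comparison points is swapped---to get the cancellation of $D(\bbb^{t+1}_{<0},\bbb^t_{<0})$ you need the \emph{first} (lower) inequality applied at $\bbb=(\bbb^*_{=0},\bbb^{t+1}_{<0})$ and the \emph{second} (upper) inequality at $\bbb=(\bbb^*_{=0},\bbb^*_{<0})$, exactly as the paper does in \eqref{ineq::comp::neg::lhs}--\eqref{ineq::comp::neg::rhs}; with that correction the dropped term is the paper's $D=\sum_{-\infty<\rho_i<0}\frac{1}{\rho_i}\KL(\bbb^*_i\|\bbb^t_i)\le 0$ (equivalently your $\sum\frac{-1}{\rho_i}\KL\ge 0$ on the other side), and the remainder of your argument goes through unchanged.
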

We first show the following lemma.
\begin{lemma}\label{lem::decreasing::compl::cobb}
For $t > 0$,
$\Phi( \mathbf{b}^*_{=0}, \mathbf{b}^t_{<0}) \leq \Phi(\mathbf{b}^*_{=0}, \mathbf{b}^{t+1}_{<0})$.
\end{lemma}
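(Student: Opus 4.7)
The plan is to mimic the proof of Lemma~\ref{lem::decreasing::function::value} from the substitutes case, but with the inequalities reversed appropriately, exploiting the fact that in the complementary domain $\Phi$ is concave in $\bbb_{<0}$ and the update rule in Lemma~\ref{lem::updating::comple::cobb} is mirror ascent on that concave function. The key observation is that, since $t>0$, the Cobb-Douglas components satisfy $\bbb^t_{=0}=\bbb^*_{=0}=\bbb^{t+1}_{=0}$, so when we compare $\Phi$ at $(\bbb^*_{=0},\bbb^t_{<0})$ and $(\bbb^*_{=0},\bbb^{t+1}_{<0})$ all Cobb-Douglas KL terms vanish and only the complementary-buyer terms remain to be tracked.

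First, I would apply Lemma~\ref{ineq::basic::bregman} to the mirror-ascent formulation of Lemma~\ref{lem::updating::comple::cobb}, taking $\bbx^+=\bbb^{t+1}_{<0}$, $\bbx=\bbb^t_{<0}$, and $\bby=\bbb^t_{<0}$, with the Bregman term $\sum_{-\infty<\rho_i<0}\frac{\rho_i-1}{\rho_i}\KL(\bbb_i\|\bbb_i')+\sum_{\rho_i=-\infty}\KL(\bbb_i\|\bbb_i')$. The right-hand side collapses to $0$, leaving the lower bound
\begin{align*}
\langle \nabla_{\bbb_{<0}}\Phi(\bbb^t),\bbb^{t+1}_{<0}-\bbb^t_{<0}\rangle \;\geq\; \sum_{-\infty<\rho_i<0}\tfrac{\rho_i-1}{\rho_i}\KL(\bbb_i^{t+1}\|\bbb_i^t)+\sum_{\rho_i=-\infty}\KL(\bbb_i^{t+1}\|\bbb_i^t).
\end{align*}
Because $\bbb^t_{=0}=\bbb^*_{=0}$, the inner product on the left equals $\langle\nabla\Phi(\bbb^*_{=0},\bbb^t_{<0}),(\bbb^*_{=0},\bbb^{t+1}_{<0})-(\bbb^*_{=0},\bbb^t_{<0})\rangle$, which is exactly the form needed for the next step.

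Next, I would invoke the first (lower) inequality of~\eqref{ineq::saddle::upper::lower::2} with $\bbb=(\bbb^*_{=0},\bbb^{t+1}_{<0})$ and $\bbb'=(\bbb^*_{=0},\bbb^t_{<0})$. Since the only components of $\bbb$ and $\bbb'$ that differ are those indexed by $\rho_i<0$, the $\rho_i=0$ and $\rho_i>0$ KL contributions vanish, yielding
\begin{align*}
\Phi(\bbb^*_{=0},\bbb^{t+1}_{<0})-\Phi(\bbb^*_{=0},\bbb^t_{<0}) \;\geq\; \langle\nabla\Phi(\bbb^*_{=0},\bbb^t_{<0}),\,\Delta\rangle + \sum_{-\infty<\rho_i<0}\tfrac{1-\rho_i}{\rho_i}\KL(\bbb_i^{t+1}\|\bbb_i^t)-\sum_{\rho_i=-\infty}\KL(\bbb_i^{t+1}\|\bbb_i^t),
\end{align*}
where $\Delta=(\bbb^*_{=0},\bbb^{t+1}_{<0})-(\bbb^*_{=0},\bbb^t_{<0})$. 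Substituting the lower bound from the previous step into the inner-product term, the coefficients in front of $\KL(\bbb_i^{t+1}\|\bbb_i^t)$ become $\tfrac{\rho_i-1}{\rho_i}+\tfrac{1-\rho_i}{\rho_i}=0$ for the $-\infty<\rho_i<0$ terms and $1-1=0$ for the Leontief terms, so the right-hand side telescopes to exactly zero, giving the claim.

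The proof involves no real obstacle beyond bookkeeping: the substitutes-case Lemma~\ref{lem::decreasing::function::value} already followed the same template, and the only place one needs to be careful is tracking the signs and coefficients when moving between the mirror-ascent characterization of the update and the Bregman sandwich bounds of~\eqref{ineq::saddle::upper::lower::2}, in particular verifying that the Bregman coefficient in the mirror-ascent formulation ($\tfrac{\rho_i-1}{\rho_i}$) is the arithmetic negative of the coefficient ($\tfrac{1-\rho_i}{\rho_i}$) appearing in the lower Bregman bound of $\Phi$. This exact cancellation is what makes the step-size $\Gi=1$ (undamped) suffice for monotonicity here, and it is the dual of the analogous cancellation in the substitutes case.
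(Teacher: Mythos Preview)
Your proposal is correct and follows essentially the same approach as the paper: apply Lemma~\ref{ineq::basic::bregman} with $\bbx^+=\bbb^{t+1}_{<0}$, $\bbx=\bby=\bbb^t_{<0}$ to the mirror-ascent characterization of Lemma~\ref{lem::updating::comple::cobb}, use $\bbb^t_{=0}=\bbb^*_{=0}$ for $t>0$ to promote the partial inner product to the full one, and then invoke the lower inequality in~\eqref{ineq::saddle::upper::lower::2} to conclude. Your write-up is in fact slightly more explicit than the paper's in verifying the exact cancellation $\tfrac{\rho_i-1}{\rho_i}+\tfrac{1-\rho_i}{\rho_i}=0$ (and $1-1=0$ for Leontief), which the paper leaves implicit.
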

\begin{proof}
By Lemma~\ref{lem::updating::comple::cobb},  we know that for those $i$ for which $-\infty < \rho_i < 0$,
\begin{align*}
\bbb_{i}^{t+1} = {\arg \min}_{\bbb_{i} : \sum_j b_{ij} = e_i} \left\{- \langle \nabla_{\bbb_{i}}\Phi(\mathbf{b}^t), \bbb_{i} - \bbb_{i}^t\rangle + \frac{\rho_i - 1}{\rho_i} \KL(\bbb_{i} || \bbb_{i}^t)\right\},
\end{align*}
and for those $i$ for which $\rho_i = -\infty$,
\begin{align*}
\bbb_{i}^{t+1} = {\arg \min}_{\bbb_{i} : \sum_j b_{ij} = e_i} \{- \langle \nabla_{\bbb_{i}}\Phi(\mathbf{b}^t), \bbb_{i} - \bbb_{i}^t\rangle + \KL(\bbb_{i} || \bbb_{i}^t)\}.
\end{align*}

Therefore, by Lemma~\ref{ineq::basic::bregman}, 
with $x^+=\bbb^{t+1}_{<0}$, $x = \bbb^t_{<0}$, and $y =\bbb^t_{<0}$,
\begin{align*}
&- \langle \nabla_{\mathbf{b}_{<0}} \Phi(\mathbf{b}^t_{=0}, \mathbf{b}^t_{<0}), \mathbf{b}^{t+1}_{<0} - \mathbf{b}^t_{<0} \rangle +  \sum_{i: -\infty < \rho_i <0}\frac{\rho_i - 1}{\rho_i} \KL(\bbb_{i}^{t+1} || \bbb_{i}^t) + \sum_{i : \rho_i = -\infty}  \KL(\bbb_{i}^{t+1} || \bbb_{i}^t) \\
&\hspace*{0.2in}\leq -\langle \nabla_{\mathbf{b}_{<0}} \Phi(\mathbf{b}^t_{=0}, \mathbf{b}^t_{<0}), \mathbf{b}^{t}_{<0} - \mathbf{b}^t_{<0} \rangle + \sum_{i: -\infty < \rho_i < 0} \frac{\rho_i - 1}{\rho_i}  \KL(\bbb_{i}^t || \bbb_{i}^t)  - \sum_{i: -\infty < \rho_i < 0} \frac{\rho_i - 1}{\rho_i}  \KL(\bbb_{i}^t || \bbb_{i}^{t+1})\\
&\hspace*{0.2in}\hspace*{0.2in}~~~~~~~~~~~~~~~~~~+ \sum_{i : \rho_i = -\infty} \KL(\bbb_{i}^t || \bbb_{i}^t) -\sum_{i : \rho_i = -\infty}  \KL(\bbb_{i}^t || \bbb_{i}^{t+1}) \\
&\hspace*{0.2in}\leq 0.
\end{align*}

We know that for $t > 0$, $\bbb^{t}_{=0} = \bbb^*_{=0}$. Therefore, for $t > 0$,
\begin{align*}
&- \langle \nabla \Phi(\mathbf{b}^*_{=0}, \mathbf{b}^t_{<0}), (\mathbf{b}^*_{=0}, \mathbf{b}^{t+1}_{<0}) - (\mathbf{b}^*_{=0}, \mathbf{b}^t_{<0}) \rangle +  \sum_{i: -\infty < \rho_i <0}\frac{\rho_i - 1}{\rho_i} \KL(\bbb_{i}^{t+1} || \bbb_{i}^t) + \sum_{i : \rho_i = -\infty}  \KL(\bbb_{i}^{t+1} || \bbb_{i}^t)  \\
&\hspace*{1in}\hspace*{1in}\leq 0.
\end{align*}

Using ~\eqref{ineq::saddle::upper::lower::2} yields:
\begin{align*}
\Phi( \mathbf{b}^*_{=0}, \mathbf{b}^t_{<0}) - \Phi(\mathbf{b}^*_{=0}, \mathbf{b}^{t+1}_{<0}) \leq 0.
\end{align*}
\end{proof}
\begin{proof}[Proof of Theorem~\ref{thm::conver::comple::cobb::T}]
First, by Lemma \ref{lem::updating::comple::cobb}, we know that for those $i$ for which $-\infty < \rho_i < 0$,
\begin{align*}
\bbb_{i}^{t+1} = {\arg \min}_{\bbb_{i} : \sum_j b_{ij} = e_i} \left\{- \langle \nabla_{\bbb_{i}}\Phi(\mathbf{b}^t), \bbb_{i} - \bbb_{i}^t\rangle + \frac{\rho_i - 1}{\rho_i} \KL(\bbb_{i} || \bbb_{i}^t)\right\},
\end{align*}
and for those $i$ for which $\rho_i = -\infty$,
\begin{align*}
\bbb_{i}^{t+1} = {\arg \min}_{\bbb_{i} : \sum_j b_{ij} = e_i} \{- \langle \nabla_{\bbb_{i}}\Phi(\mathbf{b}^t), \bbb_{i} - \bbb_{i}^t\rangle + \KL(\bbb_{i} || \bbb_{i}^t)\},
\end{align*}

Therefore, by Lemma~\ref{ineq::basic::bregman}, 
with $x^+=\bbb^{t+1}_{<0}$, $x = \bbb^{*}_{<0}$, and $y = \bbb^t_{<0}$,
\begin{align*}
&- \langle \nabla_{\mathbf{b}_{<0}} \Phi( \mathbf{b}^t_{=0}, \mathbf{b}^t_{<0}), \mathbf{b}^{t+1}_{<0} - \mathbf{b}^t_{<0} \rangle +  \sum_{i: -\infty < \rho_i <0}\frac{\rho_i - 1}{\rho_i} \KL(\bbb_{i}^{t+1} || \bbb_{i}^t) + \sum_{i : \rho_i = -\infty}  \KL(\bbb_{i}^{t+1} || \bbb_{i}^t) \\
&\hspace*{0.2in}\leq -\langle \nabla_{\mathbf{b}_{<0}} \Phi(\mathbf{b}^t_{=0}, \mathbf{b}^t_{<0}), \mathbf{b}^{*}_{<0} - \mathbf{b}^t_{<0} \rangle + \sum_{i: -\infty < \rho_i < 0} \frac{\rho_i - 1}{\rho_i}  \KL(\bbb_{i}^* || \bbb_{i}^t)  - \sum_{i: -\infty < \rho_i < 0} \frac{\rho_i - 1}{\rho_i}  \KL(\bbb_{i}^* || \bbb_{i}^{t+1})\\
&\hspace*{0.2in}\hspace*{0.2in}~~~~~~~~~~~~~~~~~~+ \sum_{i : \rho_i = -\infty} \KL(\bbb_{i}^* || \bbb_{i}^t) -\sum_{i : \rho_i = -\infty} \KL(\bbb_{i}^* || \bbb_{i}^{t+1}).
\end{align*}

This is equivalent to:
\begin{align*} \label{ineq::comp::neg::triangle}
&- \langle \nabla \Phi(\mathbf{b}^t_{=0}, \mathbf{b}^t_{<0}), (\mathbf{b}^{*}_{=0}, \mathbf{b}^{t+1}_{<0}) - (\mathbf{b}^{t}_{=0}, \mathbf{b}^{t}_{<0})  \rangle \\
&\underbrace{+  \sum_{i: -\infty <\rho_i <0}\frac{\rho_i - 1}{\rho_i} \KL(\bbb_{i}^{t+1} || \bbb_{i}^t)+ \sum_{i : \rho_i = -\infty} \KL(\bbb_{i}^{t+1} || \bbb_{i}^t)~~~~~~~~~}_{\texttt{LHS}} \\
&\hspace*{0.2in}~~~~~~~~~\leq -\langle \nabla \Phi(\mathbf{b}^t_{=0}, \mathbf{b}^t_{<0}), ( \mathbf{b}^{*}_{=0}, \mathbf{b}^{*}_{<0}) - (\mathbf{b}^{t}_{=0}, \mathbf{b}^{t}_{<0})  \rangle \\
&\hspace*{0.2in}\hspace*{0.2in}~~~~~~~~~~~~~~~~~~+ \sum_{i: -\infty < \rho_i < 0} \frac{\rho_i - 1}{\rho_i}  \KL(\bbb_{i}^* || \bbb_{i}^t)  - \sum_{i:  -\infty <\rho_i < 0} \frac{\rho_i - 1}{\rho_i}  \KL(\bbb_{i}^* || \bbb_{i}^{t+1})\\
&\hspace*{0.2in}~~~~~~~~~~~~\hspace*{0.1in}\underbrace{\hspace*{0.1in}~~~~~~+ \sum_{i : \rho_i = -\infty}  \KL(\bbb_{i}^* || \bbb_{i}^t) -\sum_{i : \rho_i = -\infty}  \KL(\bbb_{i}^* || \bbb_{i}^{t+1}).\hspace*{0.2in}\hspace*{0.2in}\hspace*{0.2in}~~~~~~~~~~~~~~~~~~~~~~~~~~~~}_{\texttt{RHS}} \numberthis
\end{align*}

From the first inequality in \eqref{ineq::saddle::upper::lower::2}, the LHS term is lower bounded by:
\begin{align*} \label{ineq::comp::neg::lhs}
& - \Phi( \mathbf{b}^{*}_{=0}, \mathbf{b}^{t+1}_{<0}) + \Phi(\mathbf{b}^{t}_{=0}, \mathbf{b}^{t}_{<0}) - \sum_{i: -\infty < \rho_i < 0} \frac{\rho_i - 1}{\rho_i} \KL(\bbb_{i}^{t+1} || \bbb_{i}^t)  - \sum_{i: \rho_i = -\infty}\KL(\bbb_{i}^{t+1} || \bbb_{i}^t) \\
& \hspace*{0.2in}~~~~~~~~~+  \sum_{i: -\infty < \rho_i <0}\frac{\rho_i - 1}{\rho_i} \KL(\bbb_{i}^{t+1} || \bbb_{i}^t) +  \sum_{i: \rho_i = -\infty}  \KL(\bbb_{i}^{t+1} || \bbb_{i}^t) \numberthis
\end{align*}
and from the second inequality, the RHS term is upper bounded by:
\begin{align*} \label{ineq::comp::neg::rhs}
 & -\Phi(\mathbf{b}^{*}_{=0}, \mathbf{b}^{*}_{<0}) + \Phi(\mathbf{b}^{t}_{=0}, \mathbf{b}^{t}_{<0})  + \underbrace{\sum_{i: -\infty < \rho_i < 0} \frac{1}{\rho_i} \KL(\bbb_{i}^* || \bbb_{i}^t)}_{D} + \sum_{i: \rho_i = 0} \KL(\bbb_{i}^* || \bbb_{i}^t) \\
&\hspace*{0.2in}~~~~~~~~~+ \sum_{i: -\infty < \rho_i < 0} \frac{\rho_i - 1}{\rho_i}  \KL(\bbb_{i}^* || \bbb_{i}^t)  - \sum_{i: -\infty < \rho_i < 0} \frac{\rho_i - 1}{\rho_i}  \KL(\bbb_{i}^* || \bbb_{i}^{t+1})\\
&\hspace*{0.2in}~~~~~~~~~+ \sum_{i : \rho_i = -\infty}  \KL(\bbb_{i}^* || \bbb_{i}^t) -\sum_{i : \rho_i = -\infty}  \KL(\bbb_{i}^* || \bbb_{i}^{t+1}). \numberthis
\end{align*}

As $\texttt{LHS} \leq \texttt{RHS}$, and as $D$ is negative, we have:
\begin{align*}\label{ineq::comp::neg::upper}
- \Phi(\mathbf{b}^{*}_{=0}, \mathbf{b}^{t+1}_{<0}) &~~~~~~~~~\leq  -\Phi(\mathbf{b}^{*}_{=0}, \mathbf{b}^{*}_{<0}) + \sum_{i: \rho_i = 0} \KL(\bbb_{i}^* || \bbb_{i}^t) \\
&~~~~~~~~~~~~+ \sum_{i: -\infty < \rho_i < 0} \frac{\rho_i - 1}{\rho_i}  \KL(\bbb_{i}^* || \bbb_{i}^t)  - \sum_{i: -\infty < \rho_i < 0} \frac{\rho_i - 1}{\rho_i}  \KL(\bbb_{i}^* || \bbb_{i}^{t+1}) \\
&~~~~~~~~~~~~+ \sum_{i : \rho_i = -\infty}  \KL(\bbb_{i}^* || \bbb_{i}^t) -\sum_{i : \rho_i = -\infty} \KL(\bbb_{i}^* || \bbb_{i}^{t+1}).\numberthis
\end{align*}

Summing over all $t$ yields:
\begin{align*}
\sum_{t = 0}^{T-1}  \Big(\Phi(\mathbf{b}^{*}_{=0}, \mathbf{b}^{*}_{<0})  - \Phi(\mathbf{b}^{*}_{=0}, \mathbf{b}^{t+1}_{<0}) \Big)& \leq  \sum_{t = 0}^{T-1} \sum_{i: \rho_i = 0} \KL(\bbb_{i}^* || \bbb_{i}^t) + \sum_{i: -\infty < \rho_i < 0} \frac{\rho_i - 1}{\rho_i}  \KL(\bbb_{i}^* || \bbb_{i}^0)  \\
&\hspace*{1in}+ \sum_{i : \rho_i = -\infty}  \KL(\bbb_{i}^* || \bbb_{i}^0) \\
&\leq   \sum_{i: \rho_i = 0} \KL(\bbb_{i}^* || \bbb_{i}^0) + \sum_{i: -\infty < \rho_i < 0} \frac{\rho_i - 1}{\rho_i}  \KL(\bbb_{i}^* || \bbb_{i}^0)  \\
&\hspace*{1in}+ \sum_{i : \rho_i = -\infty}  \KL(\bbb_{i}^* || \bbb_{i}^0).
\end{align*}
The second inequality holds because of \eqref{eq::zero::compl::cobb}.

By Lemma~\ref{lem::decreasing::compl::cobb},
\begin{align*}
\Phi(\mathbf{b}^{*}_{=0}, \mathbf{b}^{*}_{<0})  - \Phi(\mathbf{b}^{*}_{=0}, \mathbf{b}^{T}_{<0}) &\leq \frac{1}{T}\Big(\sum_{i: \rho_i = 0} \KL(\bbb_{i}^* || \bbb_{i}^0) + \sum_{i: -\infty < \rho_i < 0} \frac{\rho_i - 1}{\rho_i}  \KL(\bbb_{i}^* || \bbb_{i}^0)  \\
&\hspace*{1in}\hspace*{1in}+ \sum_{i : \rho_i = -\infty}  \KL(\bbb_{i}^* || \bbb_{i}^0)\Big).
\end{align*}
\end{proof}
\begin{theorem} \label{thm::linear::cobb::com}
Suppose there is no buyer with a Leontief utility function. Let $\sigma = \min_{i: \rho_i < 0}\left\{\frac{\rho_i - 1}{\rho_i}\right\}$.
Then:
\begin{align*}
\Phi(\mathbf{b}^{*}_{=0}, \mathbf{b}^{*}_{<0})  - \Phi(\mathbf{b}^{*}_{=0}, \mathbf{b}^{T}_{<0}) \leq \frac{\sigma - 1}{\sigma^{T} - 1} \Big(\sum_{i: \rho_i = 0} \KL(\bbb_{i}^* || \bbb_{i}^0) + \sum_{i:  \rho_i < 0} \KL(\bbb_{i}^* || \bbb_{i}^0)\Big).
\end{align*}
\end{theorem}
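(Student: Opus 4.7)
My plan is to strengthen the proof of Theorem~\ref{thm::conver::comple::cobb::T} by \emph{retaining}, rather than discarding, the negative auxiliary term
\[
D \;=\; \sum_{i:\,-\infty < \rho_i < 0} \tfrac{1}{\rho_i}\,\KL(\bbb_i^*\|\bbb_i^t)
\]
that appears on the RHS of \eqref{ineq::comp::neg::rhs} and was dropped in the $O(1/T)$ argument. Keeping it is affordable precisely because Leontief utilities are excluded: every buyer with $\rho_i < 0$ contributes the strong-concavity-like constant $\tfrac{\rho_i - 1}{\rho_i} > 1$, whose minimum is the $\sigma > 1$ of the theorem.

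Starting from \eqref{ineq::comp::neg::triangle}, lower-bounding the LHS via \eqref{ineq::comp::neg::lhs} and upper-bounding the RHS via \eqref{ineq::comp::neg::rhs}, the Leontief sums vanish under our hypothesis, and the algebraic identity $\tfrac{1}{\rho_i} + \tfrac{\rho_i - 1}{\rho_i} = 1$ lets $D$ merge with the $t$-indexed piece of the $X_t$-sum into $Y_t := \sum_{i:\rho_i < 0} \KL(\bbb_i^* \| \bbb_i^t)$. Writing $a_{t+1} := \Phi(\bbb_{=0}^*,\bbb_{<0}^*) - \Phi(\bbb_{=0}^*,\bbb_{<0}^{t+1})$, $c_t := \sum_{i:\rho_i = 0}\KL(\bbb_i^*\|\bbb_i^t)$, and $X_t := \sum_{i:\rho_i<0}\tfrac{\rho_i-1}{\rho_i}\KL(\bbb_i^*\|\bbb_i^t)$, the one-step recurrence becomes
\[
a_{t+1} \;\le\; c_t + Y_t - X_{t+1}.
\]

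The crux of the argument is to convert this into a strict geometric contraction. By the definition of $\sigma$, $X_{t+1} \ge \sigma Y_{t+1}$, so
\[
a_{t+1} + \sigma\,Y_{t+1} \;\le\; c_t + Y_t.
\]
Multiplying by $\sigma^{t}$ and summing $t = 0, \ldots, T-1$, the $Y$-terms telescope into $Y_0 - \sigma^T Y_T \le Y_0$. By \eqref{eq::zero::compl::cobb}, $c_t = 0$ for $t \ge 1$, so $\sum_{t=0}^{T-1}\sigma^t c_t = c_0$, giving
\[
\sum_{t=0}^{T-1} \sigma^t\, a_{t+1} \;\le\; \sum_{i:\rho_i = 0}\KL(\bbb_i^*\|\bbb_i^0) + \sum_{i:\rho_i < 0}\KL(\bbb_i^*\|\bbb_i^0).
\]
Finally, Lemma~\ref{lem::decreasing::compl::cobb} shows $a_{t+1}$ is non-increasing in $t$, so $a_T \cdot \sum_{t=0}^{T-1} \sigma^t \le \sum_{t=0}^{T-1}\sigma^t a_{t+1}$; using $\sum_{t=0}^{T-1}\sigma^t = (\sigma^T - 1)/(\sigma - 1)$ yields the stated bound.

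The only real obstacle is conceptual rather than computational: one has to recognise that keeping $D$, which looks like a bookkeeping distraction in the $1/T$ proof, produces exactly the identity $D + X_t = Y_t$ that recasts the recurrence in a form where the no-Leontief hypothesis ($\sigma > 1$) delivers geometric contraction. If one instead tries to bound $X_t$ from above by $\tfrac{1}{\sigma}$-times-something and then multiply by a $\sigma^t$-factor, the resulting constants carry an unwanted $\tfrac{\rho_i-1}{\rho_i}$ weight that can be unboundedly large as $\rho_i \to 0^-$, obstructing the clean form of the theorem.
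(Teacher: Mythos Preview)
Your proof is correct and follows essentially the same route as the paper: retain the term $D$ in \eqref{ineq::comp::neg::rhs}, combine it with the $t$-indexed $\frac{\rho_i-1}{\rho_i}$ sum via $\frac{1}{\rho_i}+\frac{\rho_i-1}{\rho_i}=1$ to obtain the recurrence $a_{t+1}\le c_t+Y_t-X_{t+1}$, use $X_{t+1}\ge\sigma Y_{t+1}$, multiply by $\sigma^t$, telescope, invoke \eqref{eq::zero::compl::cobb} for the Cobb--Douglas term, and finish with Lemma~\ref{lem::decreasing::compl::cobb}. The paper states these steps more tersely (it writes the combined recurrence directly and says only ``multiplying both sides by $\sigma^t$ and summing over $t$''), but the argument is identical.
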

\begin{proof}
If there is no buyer with a Leontief utility function, then we do not drop $D$ in \eqref{ineq::comp::neg::rhs}. Therefore, instead of \eqref{ineq::comp::neg::upper}, we have:
\begin{align*}
- \Phi(\mathbf{b}^{*}_{=0}, \mathbf{b}^{t+1}_{<0}) &~~~~~~~~~\leq  -\Phi(\mathbf{b}^{*}_{=0}, \mathbf{b}^{*}_{<0}) + \sum_{i: \rho_i = 0} \KL(\bbb_{i}^* || \bbb_{i}^t) \\
&~~~~~~~~~~~~+ \sum_{i: \rho_i < 0}   \KL(\bbb_{i}^* || \bbb_{i}^t)  - \sum_{i:  \rho_i < 0} \frac{\rho_i - 1}{\rho_i}  \KL(\bbb_{i}^* || \bbb_{i}^{t+1}).\numberthis
\end{align*}

Recall $\sigma = \min_{i: \rho_i < 0}\left\{\frac{\rho_i - 1}{\rho_i}\right\}$. Then, multiplying both sides by $\sigma^t$ and summing over $t$ yields:
\begin{align*}
\sum_{t = 0}^{T-1} \sigma^t \Big(\Phi(\mathbf{b}^{*}_{=0}, \mathbf{b}^{*}_{<0})  - \Phi(\mathbf{b}^{*}_{=0}, \mathbf{b}^{t+1}_{<0})\Big) \leq \sum_{i: \rho_i = 0} \KL(\bbb_{i}^* || \bbb_{i}^0) + \sum_{i:  \rho_i < 0} \KL(\bbb_{i}^* || \bbb_{i}^0).
\end{align*}

By Lemma~\ref{lem::decreasing::compl::cobb},
\begin{align*}
\Phi(\mathbf{b}^{*}_{=0}, \mathbf{b}^{*}_{<0})  - \Phi(\mathbf{b}^{*}_{=0}, \mathbf{b}^{T}_{<0}) \leq \frac{\sigma - 1}{\sigma^{T} - 1} \Big(\sum_{i: \rho_i = 0} \KL(\bbb_{i}^* || \bbb_{i}^0) + \sum_{i:  \rho_i < 0} \KL(\bbb_{i}^* || \bbb_{i}^0)\Big).
\end{align*}
\end{proof}

\subsection{Proportional Response with the Entire CES Range}

Formally, in this case, Damped Proportional Response is defined as follows:
\begin{align*}
b_{ij}^{t+1} &= e_i \frac{\Big[b_{ij}^t \cdot a_{ij} \Big(\frac{b^t_{ij}}{p^t_j} \Big)^{\rho_i}\Big]^{\frac{1}{2}}}{\sum_{k}\Big[b_{ik}^t \cdot  a_{ik} \Big( \frac{b^t_{ik}}{p^t_{k}} \Big)^{\rho_i}\Big]^{\frac{1}{2}}}, 
\hhsp\mbox{for $\rho_i > 0$;} 
\hsp\hsp b_{ij}^{t+1} = e_i \frac{\Big[b_{ij}^t \cdot a_{ij}^t\Big]^{\frac{1}{2}}}{\sum_{k}\Big[b_{ik}^t \cdot a_{ik}^t\Big]^{\frac{1}{2}}},\hhsp\mbox{for $\rho_i = 0$;}\\
b_{ij}^{t+1} &= e_i \frac{\Big[b_{ij}^t \cdot \Big(\frac{a_{ij}}{{p^t_{j}}^{\rho_i}}\Big)^{\frac{1}{1 - \rho_i}}\Big]^{\frac{1}{2}}}{\sum_{k}\Big[b_{ik}^t \cdot \Big(\frac{a_{ik}}{{p^t_{k}}^{\rho_i}}\Big)^{\frac{1}{1 - \rho_i}}\Big]^{\frac{1}{2}}},\hhsp\mbox{for $-\infty < \rho_i < 0$;}
\hsp\hsp  b_{ij}^{t+1} = e_i \frac{\Big[b_{ij}^t \cdot \Big(\frac{c_{ij}}{p^t_{j}}\Big)^{-1}\Big]^{\frac{1}{2}}}{\sum_{k}\Big[b_{ik}^t \cdot \Big(\frac{c_{ik}}{p^t_{k}}\Big)^{-1}\Big]^{\frac{1}{2}}},\hhsp\mbox{for $\rho_i = -\infty$;}\\
p_j^{t+1} &=\sum_j b_{ij}^{t+1}&
\end{align*}

Similarly to the Cobb-Douglas-free domain, we have the following observations.
\begin{lemma} \label{lem::HGPRD::eq::repeat}
If $\rho_i > 0$ for buyer $i$, then Damped Proportional Response is equivalent to mirror descent with a halved step size, defined as follows:
\begin{align*}
\bbb_{i}^{t+1} = {\arg \min}_{\bbb_{i} : \sum_j b_{ij} = e_i} \left\{\langle \nabla_{\bbb_{i}}\Phi(\mathbf{b}^t), \bbb_{i} - \bbb_{i}^t\rangle + \frac{2}{\rho_i} \KL(\bbb_{i} || \bbb_{i}^t)\right\};
\end{align*}
if $-\infty < \rho_i < 0$ for buyer $i$, then Damped Proportional Response is equivalent to mirror descent with a halved step size, defined as follows:
\begin{align*}
\bbb_{i}^{t+1} = {\arg \min}_{\bbb_{i} : \sum_j b_{ij} = e_i} \left\{- \langle \nabla_{\bbb_{i}}\Phi(\mathbf{b}^t), \bbb_{i} - \bbb_{i}^t\rangle + \frac{2(\rho_i - 1)}{\rho_i} \KL(\bbb_{i} || \bbb_{i}^t)\right\}.
\end{align*}
and if $\rho_i = -\infty$ for buyer $i$, then Damped Proportional Response is equivalent to mirror descent with a halved step size, defined as follows:
\begin{align*}
\bbb_{i}^{t+1} = {\arg \min}_{\bbb_{i} : \sum_j b_{ij} = e_i} \{- \langle \nabla_{\bbb_{i}}\Phi(\mathbf{b}^t), \bbb_{i} - \bbb_{i}^t\rangle + 2 \KL(\bbb_{i} || \bbb_{i}^t)\}.
\end{align*}
\end{lemma}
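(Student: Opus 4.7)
The statement is purely computational: each of the three mirror-descent updates has a closed-form minimizer, and the calculation is essentially identical across the three regimes. My plan is to carry out the Lagrangian computation for one representative case in detail and indicate how the other two follow by the same template.

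Concretely, for the case $\rho_i>0$, I would introduce a Lagrange multiplier $\lambda$ for the budget constraint $\sum_j b_{ij}=e_i$ and write the first-order optimality condition for
\[
\min_{\bbb_i:\sum_j b_{ij}=e_i}\Bigl\{\langle \nabla_{\bbb_i}\Phi(\mathbf{b}^t),\bbb_i-\bbb_i^t\rangle+\tfrac{2}{\rho_i}\KL(\bbb_i\|\bbb_i^t)\Bigr\}.
\]
Using the explicit gradient $[\nabla_{\bbb_i}\Phi(\mathbf{b}^t)]_j=\tfrac{1}{\rho_i}\bigl(1-\log\tfrac{a_{ij}(b^t_{ij})^{\rho_i-1}}{(p^t_j)^{\rho_i}}\bigr)$ derived in Appendix~\ref{construction::cobb}, and the fact that $\partial_{b_{ij}}d_h(\bbb_i,\bbb_i^t)=\log(b_{ij}/b^t_{ij})$ for the KL kernel $h(\mathbf{x})=\sum_j(x_j\ln x_j-x_j)$, the stationarity condition reduces (after multiplying by $\rho_i$ and exponentiating) to
\[
(b^{t+1}_{ij})^2 \;=\; C\cdot b^t_{ij}\cdot a_{ij}\Bigl(\tfrac{b^t_{ij}}{p^t_j}\Bigr)^{\rho_i},
\]
where $C$ is a multiplier that depends on $\lambda$ but not on $j$. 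Taking square roots, the $j$-independent factor is then pinned down by the budget constraint $\sum_j b^{t+1}_{ij}=e_i$, which gives exactly the damped rule \eqref{update::damped} for $\rho_i>0$.

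For the case $-\infty<\rho_i<0$, the plan is the same, with the coefficient $\frac{2}{\rho_i}$ replaced by $\frac{2(\rho_i-1)}{\rho_i}$ and a sign flip on the linear term (since mirror descent is performed on $-\Phi$, which is convex in this regime). After exponentiation the expression $a_{ij}(b^t_{ij})^{\rho_i-1}/(p^t_j)^{\rho_i}$ gets raised to the power $\frac{1}{2(\rho_i-1)/\rho_i}\cdot\frac{1}{\rho_i}=\frac{1}{2(\rho_i-1)}$; collecting powers of $b^t_{ij}$ leaves an overall square root of $b^t_{ij}\cdot\bigl(a_{ij}/(p^t_j)^{\rho_i}\bigr)^{1/(1-\rho_i)}$, which after normalization by the budget matches the middle formula in \eqref{update::damped}. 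The Leontief case $\rho_i=-\infty$ is an even cleaner version of the same computation: using $[\nabla_{\bbb_i}\Phi(\mathbf{b}^t)]_j=-\log\tfrac{b^t_{ij}}{c_{ij}p^t_j}$ and the coefficient $2$ in front of the KL term, exponentiation immediately yields $(b^{t+1}_{ij})^2\propto b^t_{ij}\cdot c_{ij}p^t_j/b^t_{ij}$, and normalization produces the last rule in \eqref{update::damped}.

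There is no substantive obstacle in this proof; the only thing to be careful about is bookkeeping of the multiplicative constants: the $j$-independent factor arising from the Lagrange multiplier, the exponential factor $e^{-1}$ coming from the ``$1$'' inside $\nabla\Phi$, and the normalization of the KL kernel (whether we use $\sum_j x_j\ln x_j$ or $\sum_j(x_j\ln x_j-x_j)$, which only differs by a term that is absorbed into $\lambda$ because $\sum_j b_{ij}$ is constrained). Since all these constants are independent of $j$, they cancel once we divide by $\sum_k$ of the numerator, so the match with \eqref{update::damped} is automatic.
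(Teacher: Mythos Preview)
Your proposal is correct and is precisely the ``calculation'' the paper defers: the paper's own proof of this lemma is the single line ``By calculation,'' and what you have written is exactly that calculation carried out via the Lagrangian/first-order condition for KL-regularized minimization over the simplex. One small slip: in the Leontief sketch your expression $(b^{t+1}_{ij})^2\propto b^t_{ij}\cdot c_{ij}p^t_j/b^t_{ij}$ has a stray $1/b^t_{ij}$; the correct outcome of the first-order condition is $(b^{t+1}_{ij})^2\propto b^t_{ij}\cdot c_{ij}p^t_j$, which is indeed the geometric mean of $b^t_{ij}$ and the undamped update $c_{ij}p^t_j$ from~\eqref{eqn:comp-CES-PR-rule}.
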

\begin{proof}
By calculation.
\end{proof}

However, we find that for the buyers with Cobb-Douglas utility functions, their updating rule cannot be written in mirror descent form. Instead, we make a separate argument for these buyers.

Let $\bbb_{i}^*$ be the equilibrium spending of buyer $i$. If $\rho_i = 0$ for buyer $i$, then her updating rule only depends on her previous spending and her preferences, and it is independent of the other buyers.  
Consequently, as we show in the following lemma, the convergence rate of Damped Proportional Response for the buyers with Cobb-Douglas utilities will be fast.

\begin{lemma} \label{lem::HGPRD::eq::0}
If $\rho_i = 0$ for buyer $i$, then:
\begin{align*}
\KL(\bbb_{i}^* || \bbb_{i}^0) \geq \sum_{t = 1}^{T} \KL(\bbb_{i}^* || \bbb_{i}^t)\hsp\hsp\mbox{and}\hsp\hsp\KL(\bbb_{i}^*|| \bbb_{i}^t) \geq 2 \KL(\bbb_{i}^* || \bbb_{i}^{t+1}).
\end{align*}
\end{lemma}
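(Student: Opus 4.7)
The plan is to prove the second inequality first, and then obtain the first inequality by summing a geometric series. The key observation is that for a Cobb-Douglas buyer $i$ (so $\rho_i=0$, and by the usual normalization $\sum_j a_{ij}=1$), the equilibrium spending is $b_{ij}^* = e_i \cdot a_{ij}$, so $a_{ij} = b_{ij}^*/e_i$. Substituting into the damped update rule for $\rho_i=0$ gives
\[
b_{ij}^{t+1} \;=\; e_i \cdot \frac{\sqrt{b_{ij}^t \cdot b_{ij}^*}}{\sum_k \sqrt{b_{ik}^t \cdot b_{ik}^*}}.
\]
Thus the update depends only on buyer $i$'s own current spending and her equilibrium spending; it is independent of the other buyers and of prices. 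This is what makes the Cobb-Douglas case decouple cleanly.

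Next I would normalize by $e_i$ and write $q_j^t = b_{ij}^t/e_i$ and $p_j = b_{ij}^*/e_i$, so that $p$ and $q^t$ are probability distributions on goods and the update becomes $q_j^{t+1} = \sqrt{q_j^t p_j}/Z^t$ with $Z^t = \sum_k \sqrt{q_k^t p_k}$. A direct computation then gives
\[
\KL(p \,\|\, q^{t+1}) \;=\; \sum_j p_j \log \frac{p_j Z^t}{\sqrt{q_j^t p_j}} \;=\; \tfrac{1}{2}\KL(p \,\|\, q^t) \;+\; \log Z^t.
\]
By the Cauchy--Schwarz inequality (or equivalently AM--GM applied termwise), $Z^t = \sum_k \sqrt{q_k^t p_k} \leq \sqrt{\sum_k q_k^t}\sqrt{\sum_k p_k} = 1$, so $\log Z^t \leq 0$. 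This yields the second claim
\[
\KL(p \,\|\, q^{t+1}) \;\leq\; \tfrac{1}{2}\KL(p \,\|\, q^t),
\]
which is exactly $\KL(\bbb_i^* \,\|\, \bbb_i^t) \geq 2\,\KL(\bbb_i^* \,\|\, \bbb_i^{t+1})$.

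Iterating the contraction gives $\KL(\bbb_i^* \,\|\, \bbb_i^t) \leq 2^{-t}\,\KL(\bbb_i^* \,\|\, \bbb_i^0)$. Summing the geometric series,
\[
\sum_{t=1}^{T} \KL(\bbb_i^* \,\|\, \bbb_i^t) \;\leq\; \sum_{t=1}^{T} 2^{-t}\,\KL(\bbb_i^* \,\|\, \bbb_i^0) \;\leq\; \KL(\bbb_i^* \,\|\, \bbb_i^0),
\]
which is the first claim. The only step that requires any ingenuity is recognizing that the damped update is essentially a normalized geometric mean between the current and equilibrium distributions, after which the identity $\KL(p\|q^{t+1}) = \tfrac12\KL(p\|q^t) + \log Z^t$ and the Cauchy--Schwarz bound $Z^t \leq 1$ do all the work; no obstacle of substance arises.
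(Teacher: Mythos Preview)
Your proof is correct and takes a genuinely different route from the paper's. The paper casts the Cobb-Douglas damped update as mirror descent with halved step size on the auxiliary convex function $\Psi(\bbb_i) = -\sum_j b_{ij}\log(a_{ij}/b_{ij})$, observes that this $\Psi$ satisfies the exact identity $\Psi(\bbb_i)-\Psi(\bbb_i')-\langle\nabla\Psi(\bbb_i'),\bbb_i-\bbb_i'\rangle=\KL(\bbb_i\|\bbb_i')$, applies the three-point Bregman inequality (Lemma~\ref{ineq::basic::bregman}) to get $\Psi(\bbb_i^{t+1})-\Psi(\bbb_i^*)\le \KL(\bbb_i^*\|\bbb_i^t)-2\KL(\bbb_i^*\|\bbb_i^{t+1})$, and then uses nonnegativity of the left side. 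You instead substitute $a_{ij}=b_{ij}^*/e_i$ to rewrite the update as a normalized geometric mean with the equilibrium, derive the clean identity $\KL(p\|q^{t+1})=\tfrac12\KL(p\|q^t)+\log Z^t$, and bound $Z^t\le 1$ by Cauchy--Schwarz. Your argument is more elementary and self-contained---it needs no mirror-descent machinery---while the paper's argument has the virtue of reusing the same Bregman toolkit deployed throughout. Both then finish identically by summing the geometric series.
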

\begin{proof}
First, we want to show that for any buyer $i$ with $\rho_i = 0$, Damped Proportional Response is equivalent to mirror descent on $\Psi(\bbb_{i}) = - \sum_j b_{ij} \log \frac{a_{ij}}{b_{ij}}$ with halved step size:
\begin{align*}
\bbb_{i}^{t+1} = {\arg \min}_{\bbb_{i} : \sum_j b_{ij} = e_i} \{\langle \nabla \Psi(\bbb_{i}^t), \bbb_{i} - \bbb_{i}^t\rangle + 2 \KL(\bbb_{i} || \bbb_{i}^t)\},
\end{align*}

Note that
\begin{align*}
\nabla_{b_{ij}} \Psi(\bbb_{i}) = - \log \frac{a_{ij}}{b_{ij}} + 1.
\end{align*}
Then, by calculation, $b_{ij}^{t+1} = e_i \frac{( b_{ij}^t \cdot a_{ij})^{\frac{1}{2}}}{\sum_{k}( b_{ik}^t \cdot a_{ik})^{\frac{1}{2}}}$, which is exactly the Damped Proportional Response update rule.

Furthermore, it is easy to see that $\Psi$ is a convex function and it satisfies the following equality:
\begin{align}
\Psi(\bbb_{i}) - \Psi(\bbb_{i}') - \langle \nabla \Psi(b_{i}'), \bbb_{i} - \bbb_{i}' \rangle =  \KL(\bbb_{i}, \bbb_{i}'). 
\label{eqn:Psi-equality}
\end{align}
Therefore, setting $\bbb_{i} = \bbb_{i}^{t+1}$ and $\bbb_{i}' = \bbb_{i}^t$ gives
\begin{align*}
\Psi(\bbb_{i}^{t+1})  - \Psi(\bbb_{i}^t) &= \langle \nabla \Psi(\bbb_{i}^t), \bbb_{i}^{t+1} - \bbb_{i}^t \rangle + 2\KL(\bbb_{i}^{t+1} || \bbb_{i}^t) -  \KL(\bbb_{i}^{t+1} || \bbb_{i}^t) 
\end{align*}
and then by Lemma~\ref{ineq::basic::bregman} with 
$g(\cdot) = \innerprod{\nabla \Psi(\bbb_{i}^t)}{\cdot - \bbb_{i}^t}$,
$x^+ = \bbb_{i}^{t+1}$, $x = \bbb_{i}^* $, $y = \bbb_{i}^t$, and $d(\cdot,\cdot) = 2\KL(\cdot || \cdot)$:
\begin{align}
\label{sadd::sandwich::cobb}
\Psi(\bbb_{i}^{t+1}) - \Psi(\bbb_{i}^t)
&\leq \langle \nabla \Psi(\bbb_{i}^t), \bbb_{i}^* - \bbb_{i}^t \rangle + 2\KL(\bbb_{i}^* || \bbb_{i}^t) - 2\KL(\bbb_{i}^* || \bbb_{i}^{t+1}) - \KL(\bbb_{i}^{t+1} || \bbb_{i}^t).
\end{align}

Setting $\bbb_{i} = \bbb_{i}^*$, $\bbb_{i}' = \bbb_{i}^t$ in \eqref{eqn:Psi-equality} gives:
\begin{align*}
\Psi(\bbb_{i}^t) - \Psi(\bbb_{i}^*) = - \langle \nabla\Psi(\bbb_{i}^t), \bbb_{i}^* - \bbb_{i}^t \rangle - \KL(\bbb_{i}^* || \bbb_{i}^t).
\end{align*}

And combining this with \eqref{sadd::sandwich::cobb} gives:
\begin{align*} \label{ineq::cobb::1}
\Psi(\bbb_{i}^{t+1}) - \Psi(\bbb_{i}^*) \leq \KL(\bbb^*_i || \bbb_{i}^t) - 2 \KL(\bbb_{i}^* || \bbb_{i}^{t+1}). \numberthis
\end{align*}
Since $\Psi(\cdot)$ is a convex function and $\bbb_{i}^*$ is the minimum point for $\Psi$, \eqref{ineq::cobb::1} implies:
\begin{align*} \label{ineq::saddle::zero}
\KL(\bbb^*_i || \bbb_{i}^t) \geq 2 \KL(\bbb_{i}^* || \bbb_{i}^{t+1}). \numberthis
\end{align*}
Note this inequality holds for any $t \geq 0$. So, for any $T$,
\begin{align*}
\KL(\bbb_{i}^* || \bbb_{i}^0) \geq \sum_{t = 1}^{T} \KL(\bbb_{i}^* || \bbb_{i}^t).
\end{align*}
\end{proof}

For the next part, recall that $\mathbf{b}_{>0}$, $\mathbf{b}_{=0}$ and $\mathbf{b}_{<0}$ denote the spending of those buyers with $\rho_i > 0$, $\rho_i = 0$, and $\rho_i < 0$, respectively, and that we rewrote $\Phi(\mathbf{b})$  as $\Phi(\mathbf{b}_{>0}, \mathbf{b}_{=0}, \mathbf{b}_{<0})$.

With a simple calculation one can show:
\begin{itemize}
\item For fixed $\mathbf{b}_{=0}$ and $\mathbf{b}_{<0}$, $\Phi(\cdot, \mathbf{b}_{=0}, \mathbf{b}_{<0})$ is a convex function.
\item For fixed $\mathbf{b}_{>0}$ and $\mathbf{b}_{=0}$, $\Phi(\mathbf{b}_{>0}, \mathbf{b}_{=0}, \cdot)$ is a concave function.
\item Let $\mathbf{b}^*_{>0}$, $\mathbf{b}^*_{=0}$ and $\mathbf{b}^*_{<0}$ be the market equilibrium of the Fisher market; then 
\begin{itemize}
\item $\mathbf{b}^*_{>0}$ minimizes $\Phi(\cdot, \mathbf{b}^*_{=0}, \mathbf{b}^*_{<0})$;
\item $\mathbf{b}^*_{<0}$ maximizes $\Phi(\mathbf{b}^*_{>0}, \mathbf{b}^*_{=0}, \cdot)$.
\end{itemize}
\end{itemize}

\begin{theorem}\label{thm::complete::1T}
Damped Proportional Response converges to the equilibrium with a convergence rate of: 
\begin{align*}
&\sum_{t = 1}^{T} \Bigg[\Phi(\mathbf{b}^{t}_{>0}, \mathbf{b}^{*}_{=0}, \mathbf{b}^{*}_{<0}) - \Phi (\mathbf{b}^{*}_{>0}, \mathbf{b}^{*}_{=0}, \mathbf{b}^{t}_{<0}) \Bigg] \\
 &~~~~~~~~~~~~\leq 4 \sum_{i : \rho_i = 0} \KL(\bbb^*_i || \bbb^0_i)  + \sum_{i: -\infty < \rho_i < 0} \frac{2(\rho_i - 1)}{\rho_i}  \KL(\bbb_{i}^* || \bbb_{i}^0) \\
 &\hspace*{1in}\hspace*{1in}\hspace*{1in}+ \sum_{i: \rho_i > 0} \frac{2}{\rho_i}  \KL(\bbb_{i}^* || \bbb_{i}^0)+ \sum_{i: \rho_i = -\infty} \KL(\bbb_{i}^* || \bbb_{i}^0).
\end{align*}
\end{theorem}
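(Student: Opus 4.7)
Write $\bbx=\bbb_{>0}$, $\bby=\bbb_{<0}$, $\bbz=\bbb_{=0}$, and denote the equilibrium spending by $(\bbx^*,\bbz^*,\bby^*)$. The idea is to mimic the saddle-point proof of Theorem~\ref{thm:sadd::conv::sublinear} while treating $\bbz^t$ as a slowly-drifting parameter.  By Lemma~\ref{lem::HGPRD::eq::repeat} the $\bbx$- and $\bby$-updates of Damped Proportional Response are exactly the mirror-descent steps for $\Phi(\cdot,\bbz^t,\cdot)$ with halved step size, whereas by Lemma~\ref{lem::HGPRD::eq::0} the $\bbz$-updates descend on a separate convex surrogate $\Psi$ and enjoy the geometric contraction $\KL(\bbz^*_i || \bbz^{t+1}_i)\le\tfrac12\KL(\bbz^*_i || \bbz^t_i)$, hence $\sum_{t\ge 1}\KL(\bbz^*_i || \bbz^{t-1}_i)\le 2\KL(\bbz^*_i || \bbz^0_i)$.

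\textbf{Step 1: frozen-$\bbz$ saddle-point bound.} I will first restrict~\eqref{ineq::saddle::upper::lower::2} to pairs with identical $\bbz$-coordinate; this kills the Cobb-Douglas $\KL$ contribution, and dropping the sign-consistent residuals then shows that $\Phi(\cdot,\bbz,\cdot)$ is $(1,1)$-convex-concave with respect to $(d_g,d_h)$ for every fixed $\bbz$. The proof template of Theorem~\ref{thm:sadd::conv::sublinear} goes through verbatim with $\nabla\Phi(\bbx^t,\bbz^t,\bby^t)$ playing the role of $\nabla f(\bbx^t,\bby^t)$, giving at every step
\begin{align*}
\Phi(\bbx^{t+1},\bbz^t,\bby^*)\,-\,\Phi(\bbx^*,\bbz^t,\bby^{t+1}) \,\le\, 2d_g(\bbx^*,\bbx^t)-2d_g(\bbx^*,\bbx^{t+1}) + 2d_h(\bby^*,\bby^t)-2d_h(\bby^*,\bby^{t+1}),
\end{align*}
which telescopes to $\sum_{t=1}^T[\Phi(\bbx^t,\bbz^{t-1},\bby^*)-\Phi(\bbx^*,\bbz^{t-1},\bby^t)] \le 2d_g(\bbx^*,\bbx^0) + 2d_h(\bby^*,\bby^0)$. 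This already produces every $\rho_i\ne 0$ term on the right of the claimed inequality.

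\textbf{Step 2: restoring $\bbz^*$.} Next I will write the target summand as $[\Phi(\bbx^t,\bbz^{t-1},\bby^*)-\Phi(\bbx^*,\bbz^{t-1},\bby^t)] + E_t$, where $E_t$ bundles the two $\bbz$-perturbation differences, and apply~\eqref{ineq::saddle::upper::lower::2} along the $\bbz$-only direction to each of them. Since the endpoints agree on $\bbx$ and $\bby$, the lower inequality contributes $0$ and the upper inequality contributes a copy of $\sum_{i:\rho_i=0}\KL(\bbz^*_i || \bbz^{t-1}_i)$ together with a $\bbz$-gradient inner product. Summing over $t$ and invoking the geometric decay from Lemma~\ref{lem::HGPRD::eq::0} then turns the Cobb-Douglas $\KL$-mass into a constant multiple of $\sum_{i:\rho_i=0}\KL(\bbz^*_i || \bbz^0_i)$, producing the Cobb-Douglas term in the claimed inequality.

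\textbf{Main obstacle.} The delicate point will be controlling the $\bbz$-gradient cross-term in $E_t$. Because $\nabla_{b_{ij}}\Phi = 1+\log p_j$ whenever $\rho_i=0$, this cross-term reduces to a sum of log-price differences $\log p_j(\bbx^t,\bbz^{t-1},\bby^*) - \log p_j(\bbx^*,\bbz^{t-1},\bby^t)$ weighted by the net Cobb-Douglas spending change $\sum_{i:\rho_i=0}(z^*_{ij}-z^{t-1}_{ij})$, which is not sign-definite and therefore does not obviously collapse into the $\KL$-mass bound. To handle it cleanly I expect to need either the direct identity $\Phi(\bbx,\bbz^*,\bby)-\Phi(\bbx,\bbz,\bby)=\sum_j[p_j^*\log p_j^* - p_j\log p_j]$ (so the $\bbz$-only variation depends only on prices) combined with the convexity of $x\log x$, or a refined three-point decomposition of the telescoping sum that absorbs the cross-term before it is ever materialised. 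The explicit constant~$4$ in front of the Cobb-Douglas $\KL$ term is extracted at this stage, as the product of the factor~$2$ already present from the halved step size and the factor~$2$ coming from the Cobb-Douglas geometric series.
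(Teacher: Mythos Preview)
Your Step~1 is fine: freezing $\bbz=\bbz^t$ and rerunning the proof of Theorem~\ref{thm:sadd::conv::sublinear} with the full gradient $\nabla\Phi(\bbx^t,\bbz^t,\bby^t)$ does give the telescoped bound with $\bbz^{t-1}$ in place of $\bbz^*$.

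The genuine gap is in Step~2, and it is precisely the obstacle you flag. Your identity is correct (every term of $\Phi$ has the form $f_i(b_{ij})+b_{ij}\log p_j$ with $f_i\equiv 0$ for Cobb--Douglas, so $\Phi(\bbx,\bbz^*,\bby)-\Phi(\bbx,\bbz,\bby)=\sum_j[\psi(p_j+\delta_j)-\psi(p_j)]$ with $\psi(u)=u\log u$ and $\delta_j=\sum_{i:\rho_i=0}(z^*_{ij}-z_{ij})$). But convexity of $\psi$ only gives one-sided tangent inequalities; after subtracting the two $\bbz$-perturbations you are left, in every variant, with a cross-term of the form $\sum_j\delta_j\bigl[\log p_j(\bbx^t,\bbz^{t-1},\bby^*)-\log p_j(\bbx^*,\bbz^{t-1},\bby^t)\bigr]$. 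This couples the Cobb--Douglas drift $\delta_j$ to a log-price gap that depends on $\bbx^t$ and $\bby^t$ and is not controlled by any quantity on the right-hand side of the theorem. Your second escape route (``refined three-point decomposition'') is not concrete enough to evaluate.

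The paper avoids the cross-term altogether by never freezing $\bbz$. After the mirror-descent three-point inequality for $\bbx$, it pads the partial gradient to the full gradient by adding the \emph{same} quantity to both sides, but chooses that quantity so the comparison points are $(\bbx^{t+1},\bbz^*,\bby^{t+1})$ and $(\bbx^*,\bbz^*,\bby^{t+1})$, both with $\bbz^*$ in the Cobb--Douglas slot, while the base point is $(\bbx^t,\bbz^t,\bby^t)$. Applying \eqref{ineq::saddle::upper::lower::2} with $\bbb'=(\bbx^t,\bbz^t,\bby^t)$ then produces $\sum_{i:\rho_i=0}\KL(\bbz^*_i\,\|\,\bbz^t_i)$ directly as part of the upper bound, with no residual gradient term. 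One such copy appears from the $\bbx$-block and one from the $\bby$-block, giving $2\sum_{i:\rho_i=0}\KL(\bbz^*_i\,\|\,\bbz^t_i)$ per step; summing via Lemma~\ref{lem::HGPRD::eq::0} turns this into $4\sum_{i:\rho_i=0}\KL(\bbz^*_i\,\|\,\bbz^0_i)$.
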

\begin{proof}

First, let's look at $\mathbf{b}^t_{>0}$. By Lemma~\ref{lem::HGPRD::eq::repeat},
we know that for those $i$ for which $\rho_i > 0$,
\begin{align*}
\bbb_{i}^{t+1} = {\arg \min}_{\bbb_{i} : \sum_j b_{ij} = e_i} \left\{\langle \nabla_{\bbb_{i}}\Phi(\mathbf{b}^t), \bbb_{i} - \bbb_{i}^t\rangle + \frac{2}{\rho_i} \KL(\bbb_{i} || \bbb_{i}^t)\right\}.
\end{align*}
 Therefore, by Lemma~\ref{ineq::basic::bregman} with $x^+ = \mathbf{b}^{t+1}_{>0}$, $x = \mathbf{b}^{*}_{>0}$, $y = \mathbf{b}^t_{>0}$,
\begin{align*}
&\langle \nabla_{\mathbf{b}_{>0}} \Phi(\mathbf{b}^t_{>0}, \mathbf{b}^t_{=0}, \mathbf{b}^t_{<0}), \mathbf{b}^{t+1}_{>0} - \mathbf{b}^t_{>0} \rangle + \sum_{i: \rho_i > 0} \frac{2}{\rho_i} \KL(\bbb_{i}^{t+1} || \bbb_{i}^t) \\
&\hspace*{0.2in}\leq \langle \nabla_{\mathbf{b}_{>0}} \Phi(\mathbf{b}^t_{>0}, \mathbf{b}^t_{=0}, \mathbf{b}^t_{<0}), \mathbf{b}^{*}_{>0} - \mathbf{b}^t_{>0} \rangle + \sum_{i: \rho_i > 0} \frac{2}{\rho_i}  \KL(\bbb_{i}^* || \bbb_{i}^t)  - \sum_{i: \rho_i > 0} \frac{2}{\rho_i}  \KL(\bbb_{i}^* || \bbb_{i}^{t+1}).
\end{align*}

This is equivalent to:
\begin{align*} \label{ineq::sadd::pos::triangle}
&\underbrace{\langle \nabla \Phi(\mathbf{b}^t_{>0}, \mathbf{b}^t_{=0}, \mathbf{b}^t_{<0}), (\mathbf{b}^{t+1}_{>0}, \mathbf{b}^{*}_{=0}, \mathbf{b}^{t+1}_{<0}) - (\mathbf{b}^t_{>0},\mathbf{b}^{t}_{=0}, \mathbf{b}^{t}_{<0}) \rangle + \sum_{i: \rho_i > 0} \frac{2}{\rho_i} \KL(\bbb_{i}^{t+1} || \bbb_{i}^t)}_{\texttt{LHS}} \\
&~~~~~~~~~\leq \langle \nabla \Phi(\mathbf{b}^t_{>0}, \mathbf{b}^t_{=0}, \mathbf{b}^t_{<0}), (\mathbf{b}^{*}_{>0}, \mathbf{b}^{*}_{=0}, \mathbf{b}^{t+1}_{<0}) - (\mathbf{b}^t_{>0},\mathbf{b}^{t}_{=0}, \mathbf{b}^{t}_{<0}) \rangle \\
&~~~~~~~~~~~~~~\underbrace{\hspace*{1in}\hspace*{1in}+ \sum_{i: \rho_i > 0} \frac{2}{\rho_i}  \KL(\bbb_{i}^* || \bbb_{i}^t)  - \sum_{i: \rho_i > 0} \frac{2}{\rho_i}  \KL(\bbb_{i}^* || \bbb_{i}^{t+1})}_{\texttt{RHS}}. \numberthis
\end{align*} 

From the second inequality in \eqref{ineq::saddle::upper::lower::2}, the LHS term is lower bounded by:
\begin{align*} \label{ineq::sadd::pos::lhs}
&\Phi (\mathbf{b}^{t+1}_{>0}, \mathbf{b}^{*}_{=0}, \mathbf{b}^{t+1}_{<0}) - \Phi(\mathbf{b}^t_{>0}, \mathbf{b}^t_{=0}, \mathbf{b}^t_{<0}) - \underbrace{\sum_{i: \rho_i \neq \{0, -\infty\}} \frac{1}{\rho_i} \KL(\bbb^{t+1}_i || \bbb^t_i)}_{A} \\
&\hspace*{1in}\hspace*{1in}- \sum_{i: \rho_i = 0} \KL(\bbb^*_i || \bbb^t_i) +  \sum_{i: \rho_i > 0} \frac{2}{\rho_i}\KL(\bbb_{i}^{t+1} || \bbb_{i}^t)  \numberthis
\end{align*}
and from the first inequality, the RHS term is upper bounded by:
\begin{align*} \label{ineq::sadd::pos::rhs}
& \Phi (\mathbf{b}^{*}_{>0}, \mathbf{b}^{*}_{=0}, \mathbf{b}^{t+1}_{<0}) - \Phi(\mathbf{b}^t_{>0}, \mathbf{b}^t_{=0}, \mathbf{b}^t_{<0}) - \underbrace{\sum_{i: \rho_i > 0} \frac{1 - \rho_i}{\rho_i} \KL(\bbb_{i}^* || \bbb_{i}^t)}_{B} \\
&\hspace*{0.2in}~~~~~~~~~~~~ - \sum_{i: -\infty <\rho_i < 0} \frac{1 - \rho_i}{\rho_i} \KL(\bbb_{i}^{t+1} || \bbb_{i}^t) + \sum_{i: \rho_i = -\infty} \KL(\bbb_{i}^{t+1} || \bbb_{i}^t) \\
&\hspace*{1in}\hspace*{1in}+ \sum_{i: \rho_i > 0} \frac{2}{\rho_i}  \KL(\bbb_{i}^* || \bbb_{i}^t)  - \sum_{i: \rho_i > 0} \frac{2}{\rho_i}  \KL(\bbb_{i}^* || \bbb_{i}^{t+1}). \numberthis
\end{align*}

As $\texttt{LHS} \leq \texttt{RHS}$, as the portion of $A$ for $\rho_i < 0$ is negative, and as $B$ is positive, we have
\begin{align*}\label{ineq::saddle::pos::upper}
&\Phi (\mathbf{b}^{t+1}_{>0}, \mathbf{b}^{*}_{=0}, \mathbf{b}^{t+1}_{<0}) + \sum_{i: \rho_i > 0} \frac{1}{\rho_i} \KL(\bbb_{i}^{t+1} || \bbb_{i}^t)  - \sum_{i: \rho_i = 0} \KL(\bbb^*_i || \bbb^t_i)  \\
&\hspace*{0.2in}~~~~~~~~~\leq \Phi (\mathbf{b}^{*}_{>0}, \mathbf{b}^{*}_{=0}, \mathbf{b}^{t+1}_{<0}) + \sum_{i: -\infty< \rho_i < 0} \frac{\rho_i - 1}{\rho_i}\KL(\bbb_{i}^{t+1} || \bbb_{i}^t) + \sum_{i: \rho_i = -\infty} \KL(\bbb_{i}^{t+1} || \bbb_{i}^t) \\
&\hspace*{0.2in}\hspace*{0.2in}~~~~~~~~~~~~~~~+ \sum_{i: \rho_i > 0} \frac{2}{\rho_i}  \KL(\bbb_{i}^* || \bbb_{i}^t)  - \sum_{i: \rho_i > 0} \frac{2}{\rho_i}  \KL(\bbb_{i}^* || \bbb_{i}^{t+1}). \numberthis
\end{align*}

Next, let's look at $\bbb^t_{<0}$. The argument used here is similar to that for $	\bbb^t_{>0}$.  First, by Lemma \ref{lem::HGPRD::eq::repeat}, we know that for those $i$ for which $-\infty < \rho_i < 0$,
\begin{align*}
\bbb_{i}^{t+1} = {\arg \min}_{\bbb_{i} : \sum_j b_{ij} = e_i} \left\{- \langle \nabla_{\bbb_{i}}\Phi(\mathbf{b}^t), \bbb_{i} - \bbb_{i}^t\rangle + \frac{2(\rho_i - 1)}{\rho_i} \KL(\bbb_{i} || \bbb_{i}^t)\right\},
\end{align*}
and for those $i$ for which $\rho_i = -\infty$,
\begin{align*}
\bbb_{i}^{t+1} = {\arg \min}_{\bbb_{i} : \sum_j b_{ij} = e_i} \{- \langle \nabla_{\bbb_{i}}\Phi(\mathbf{b}^t), \bbb_{i} - \bbb_{i}^t\rangle + 2\KL(\bbb_{i} || \bbb_{i}^t)\},
\end{align*}

Therefore, by Lemma~\ref{ineq::basic::bregman}, 
\begin{align*}
&- \langle \nabla_{\mathbf{b}_{<0}} \Phi(\mathbf{b}^t_{>0}, \mathbf{b}^t_{=0}, \mathbf{b}^t_{<0}), \mathbf{b}^{t+1}_{<0} - \mathbf{b}^t_{<0} \rangle +  \sum_{i: -\infty < \rho_i <0}\frac{2(\rho_i - 1)}{\rho_i} \KL(\bbb_{i}^{t+1} || \bbb_{i}^t) + \sum_{i : \rho_i = -\infty} 2 \KL(\bbb_{i}^{t+1} || \bbb_{i}^t) \\
&\hspace*{0.2in}\leq -\langle \nabla_{\mathbf{b}_{<0}} \Phi(\mathbf{b}^t_{>0}, \mathbf{b}^t_{=0}, \mathbf{b}^t_{<0}), \mathbf{b}^{*}_{<0} - \mathbf{b}^t_{<0} \rangle + \sum_{i:  -\infty < \rho_i < 0} \frac{2(\rho_i - 1)}{\rho_i}  \KL(\bbb_{i}^* || \bbb_{i}^t)  \\
&\hspace*{0.2in}\hspace*{0.2in}~~~~~~~~~~~~~~~~~~- \sum_{i: -\infty < \rho_i < 0} \frac{2(\rho_i - 1)}{\rho_i}  \KL(\bbb_{i}^* || \bbb_{i}^{t+1})+ \sum_{i : \rho_i = -\infty} 2\KL(\bbb_{i}^* || \bbb_{i}^t) -\sum_{i : \rho_i = -\infty} 2 \KL(\bbb_{i}^* || \bbb_{i}^{t+1}).
\end{align*}

This is equivalent to:
\begin{align*} \label{ineq::sadd::neg::triangle}
&- \langle \nabla \Phi(\mathbf{b}^t_{>0}, \mathbf{b}^t_{=0}, \mathbf{b}^t_{<0}), (\mathbf{b}^{t+1}_{>0}, \mathbf{b}^{*}_{=0}, \mathbf{b}^{t+1}_{<0}) - (\mathbf{b}^t_{>0},\mathbf{b}^{t}_{=0}, \mathbf{b}^{t}_{<0})  \rangle \\
&\underbrace{+  \sum_{i: -\infty <\rho_i <0}\frac{2(\rho_i - 1)}{\rho_i} \KL(\bbb_{i}^{t+1} || \bbb_{i}^t)+ \sum_{i : \rho_i = -\infty}2 \KL(\bbb_{i}^{t+1} || \bbb_{i}^t)~~~~~~~~~}_{\texttt{LHS}} \\
&\hspace*{0.2in}~~~~~~~~~\leq -\langle \nabla \Phi(\mathbf{b}^t_{>0}, \mathbf{b}^t_{=0}, \mathbf{b}^t_{<0}), (\mathbf{b}^{t+1}_{>0}, \mathbf{b}^{*}_{=0}, \mathbf{b}^{*}_{<0}) - (\mathbf{b}^t_{>0},\mathbf{b}^{t}_{=0}, \mathbf{b}^{t}_{<0})  \rangle \\
&\hspace*{0.2in}\hspace*{0.2in}~~~~~~~~~~~~~~~~~~+ \sum_{i: -\infty < \rho_i < 0} \frac{2(\rho_i - 1)}{\rho_i}  \KL(\bbb_{i}^* || \bbb_{i}^t)  - \sum_{i:  -\infty <\rho_i < 0} \frac{2(\rho_i - 1)}{\rho_i}  \KL(\bbb_{i}^* || \bbb_{i}^{t+1})\\
&\hspace*{0.2in}~~~~~~~~~~~~\underbrace{\hspace*{0.2in}~~~~~~+ \sum_{i : \rho_i = -\infty} 2 \KL(\bbb_{i}^* || \bbb_{i}^t) -\sum_{i : \rho_i = -\infty} 2 \KL(\bbb_{i}^* || \bbb_{i}^{t+1}).\hspace*{0.2in}\hspace*{0.2in}\hspace*{0.2in}~~~~~~~~~~~~~~~~~~~~~~~~~~~~}_{\texttt{RHS}} \numberthis
\end{align*}

From the first inequality in \eqref{ineq::saddle::upper::lower::2}, the LHS term is lower bounded by:
\begin{align*} \label{ineq::sadd::neg::lhs}
& - \Phi(\mathbf{b}^{t+1}_{>0}, \mathbf{b}^{*}_{=0}, \mathbf{b}^{t+1}_{<0}) + \Phi(\mathbf{b}^{t}_{>0}, \mathbf{b}^{t}_{=0}, \mathbf{b}^{t}_{<0}) - \underbrace{\sum_{i: \rho_i \neq \{0, -\infty\}} \frac{\rho_i - 1}{\rho_i} \KL(\bbb_{i}^{t+1} || \bbb_{i}^t)}_{C}  - \sum_{i: \rho_i = -\infty}\KL(\bbb_{i}^{t+1} || \bbb_{i}^t) \\
& \hspace*{0.2in}~~~~~~~~~+  \sum_{i: \rho_i <0}\frac{2(\rho_i - 1)}{\rho_i} \KL(\bbb_{i}^{t+1} || \bbb_{i}^t) +  \sum_{i: \rho_i = -\infty} 2 \KL(\bbb_{i}^{t+1} || \bbb_{i}^t) \numberthis
\end{align*}
and from the second inequality, the RHS term is upper bounded by:
\begin{align*} \label{ineq::sadd::neg::rhs}
 & -\Phi(\mathbf{b}^{t+1}_{>0}, \mathbf{b}^{*}_{=0}, \mathbf{b}^{*}_{<0}) + \Phi(\mathbf{b}^{t}_{>0}, \mathbf{b}^{t}_{=0}, \mathbf{b}^{t}_{<0})  + \sum_{i: \rho_i > 0} \frac{1}{\rho_i}  \KL(\bbb_{i}^{t+1} || \bbb_{i}^t) + \underbrace{\sum_{i: -\infty < \rho_i < 0} \frac{1}{\rho_i} \KL(\bbb_{i}^* || \bbb_{i}^t)}_{D}  \\
&\hspace*{0.2in}~~~~~~~~~+ \sum_{i: \rho_i = 0} \KL(\bbb_{i}^* || \bbb_{i}^t) + \sum_{i: -\infty < \rho_i < 0} \frac{2(\rho_i - 1)}{\rho_i}  \KL(\bbb_{i}^* || \bbb_{i}^t)  - \sum_{i: -\infty < \rho_i < 0} \frac{2(\rho_i - 1)}{\rho_i}  \KL(\bbb_{i}^* || \bbb_{i}^{t+1})\\
&\hspace*{0.2in}~~~~~~~~~+ \sum_{i : \rho_i = -\infty} 2 \KL(\bbb_{i}^* || \bbb_{i}^t) -\sum_{i : \rho_i = -\infty} 2 \KL(\bbb_{i}^* || \bbb_{i}^{t+1}). \numberthis
\end{align*}

As $\texttt{LHS} \leq \texttt{RHS}$, as the portion of $C$ for $\rho_i >0$ is negative, and as $D$ is negative, we have:
\begin{align*}\label{ineq::saddle::neg::upper}
&- \Phi(\mathbf{b}^{t+1}_{>0}, \mathbf{b}^{*}_{=0}, \mathbf{b}^{t+1}_{<0}) + \sum_{i: -\infty < \rho_i < 0} \frac{\rho_i - 1}{\rho_i} \KL(\bbb_{i}^{t+1} || \bbb_{i}^t) + \sum_{i: \rho_i = -\infty} \KL(\bbb_{i}^{t+1} || \bbb_{i}^t) \\
&~~~~~~~~~\leq  -\Phi(\mathbf{b}^{t+1}_{>0}, \mathbf{b}^{*}_{=0}, \mathbf{b}^{*}_{<0}) + \sum_{i: \rho_i > 0} \frac{1}{\rho_i}  \KL(\bbb_{i}^{t+1} || \bbb_{i}^t) + \sum_{i: \rho_i = 0} \KL(\bbb_{i}^* || \bbb_{i}^t) \\
&~~~~~~~~~~~~+ \sum_{i: -\infty < \rho_i < 0} \frac{2(\rho_i - 1)}{\rho_i}  \KL(\bbb_{i}^* || \bbb_{i}^t)  - \sum_{i: -\infty < \rho_i < 0} \frac{2(\rho_i - 1)}{\rho_i}  \KL(\bbb_{i}^* || \bbb_{i}^{t+1}) \\
&~~~~~~~~~~~~+ \sum_{i : \rho_i = -\infty} 2 \KL(\bbb_{i}^* || \bbb_{i}^t) -\sum_{i : \rho_i = -\infty} 2 \KL(\bbb_{i}^* || \bbb_{i}^{t+1}).\numberthis
\end{align*}

Summing \eqref{ineq::saddle::pos::upper} and \eqref{ineq::saddle::neg::upper} gives:
\begin{align*}
&\Phi(\mathbf{b}^{t+1}_{>0}, \mathbf{b}^{*}_{=0}, \mathbf{b}^{*}_{<0}) - \Phi (\mathbf{b}^{*}_{>0}, \mathbf{b}^{*}_{=0}, \mathbf{b}^{t+1}_{<0}) \\
&\hspace*{1in}\leq 2  \sum_{i: \rho_i = 0} \KL(\bbb_{i}^* || \bbb_{i}^t) + \sum_{i: -\infty < \rho_i < 0} \frac{2(\rho_i - 1)}{\rho_i}  \KL(\bbb_{i}^* || \bbb_{i}^t)  - \sum_{i: -\infty < \rho_i < 0} \frac{2(\rho_i - 1)}{\rho_i}  \KL(\bbb_{i}^* || \bbb_{i}^{t+1}) \\
&\hspace*{1in}\hspace*{0.2in}~~~~~~~~~+ \sum_{i: \rho_i > 0} \frac{2}{\rho_i}  \KL(\bbb_{i}^* || \bbb_{i}^t)  - \sum_{i: \rho_i > 0} \frac{2}{\rho_i}  \KL(\bbb_{i}^* || \bbb_{i}^{t+1})\\
&\hspace*{1in}\hspace*{0.2in}~~~~~~~~~+ \sum_{i : \rho_i = -\infty} 2 \KL(\bbb_{i}^* || \bbb_{i}^t) -\sum_{i : \rho_i = -\infty} 2 \KL(\bbb_{i}^* || \bbb_{i}^{t+1}). 
\end{align*}

Summing over all $t$ yields:
\begin{align*}
&\sum_{t = 0}^{T-1} \Phi(\mathbf{b}^{t+1}_{>0}, \mathbf{b}^{*}_{=0}, \mathbf{b}^{*}_{<0}) - \Phi (\mathbf{b}^{*}_{>0}, \mathbf{b}^{*}_{=0}, \mathbf{b}^{t+1}_{<0})  \\
&~~~~~~~~~~~~\leq 2 \sum_{t = 0}^{T-1} \sum_{i : \rho_i = 0} \KL(\bbb^*_i || \bbb^t_i)  + \sum_{i: -\infty < \rho_i < 0} \frac{2(\rho_i - 1)}{\rho_i}  \KL(\bbb_{i}^* || \bbb_{i}^0) \\
&\hspace*{1in}\hspace*{1in}+ \sum_{i: \rho_i > 0} \frac{2}{\rho_i}  \KL(\bbb_{i}^* || \bbb_{i}^0) + \sum_{i: \rho_i = -\infty} \KL(\bbb_{i}^* || \bbb_{i}^0) \\
&~~~~~~~~~~~~\leq 4 \sum_{i : \rho_i = 0} \KL(\bbb^*_i || \bbb^0_i)  + \sum_{i: -\infty < \rho_i < 0} \frac{2(\rho_i - 1)}{\rho_i}  \KL(\bbb_{i}^* || \bbb_{i}^0) \\
&\hspace*{1in}\hspace*{1in}+ \sum_{i: \rho_i > 0} \frac{2}{\rho_i}  \KL(\bbb_{i}^* || \bbb_{i}^0)+ \sum_{i: \rho_i = -\infty} \KL(\bbb_{i}^* || \bbb_{i}^0),
\end{align*}
where Lemma~\ref{lem::HGPRD::eq::0} is used in bounding the first sum on the right hand side.
\end{proof}

\begin{theorem}\label{thm::complete::linear}
Suppose there is no buyer with either a linear utility or a Leontief utility. Let 
\begin{align*}
\sigma = \min \left\{\min_{i:\rho_i > 0} \left\{\frac{2}{1+\rho_i}\right\}\hhsp,\hhsp\min_{i:\rho_i < 0} \left\{\frac{2(\rho_i - 1)}{2\rho_i - 1}\right\}\right\} \hsp\hsp\mbox{(and so $1 < \sigma < 2$)};
\end{align*}
 then Damped Proportional Response converges to the equilibrium with a convergence rate of:
\begin{align*}
 \Phi(\mathbf{b}^{T}_{>0}, \mathbf{b}^{*}_{=0}, \mathbf{b}^{*}_{<0}) - \Phi (\mathbf{b}^{*}_{>0}, \mathbf{b}^{*}_{=0}, \mathbf{b}^{T}_{<0}) &\leq \frac{1}{\sigma^{T-1}} \Bigg[ \frac{4}{2 - \sigma}   \sum_{i: \rho_i = 0} \KL(\bbb_{i}^* || \bbb_{i}^0) \\
&~~~~~~~~~+ \sum_{i: \rho_i < 0} \frac{2\rho_i - 1}{\rho_i}  \KL(\bbb_{i}^* || \bbb_{i}^0)  + \sum_{i: \rho_i > 0} \frac{1+\rho_i}{\rho_i}  \KL(\bbb_{i}^* || \bbb_{i}^0)\Bigg]. 
\end{align*} 
\end{theorem}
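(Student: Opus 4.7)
\medskip
\noindent\textbf{Proof plan for Theorem~\ref{thm::complete::linear}.}\ \
The plan is to redo the proof of Theorem~\ref{thm::complete::1T} while \emph{retaining} the strong-convexity term~$B$ and the strong-concavity term~$D$ that were discarded there (they were only dropped because they could be negative in the presence of linear or Leontief utilities). This is analogous to how Theorem~\ref{conv::saddle::linear} strengthens Theorem~\ref{thm:sadd::conv::sublinear} in the saddle-point analysis of Section~\ref{sec:saddle-point}, and relies on the strong Bregman convex-concave parameters $\sigma_X=\min_{i:\rho_i>0}\{1-\rho_i\}$ and $\sigma_Y=\min_{i:\rho_i<0}\{1/(1-\rho_i)\}$ of $\Phi$ (both strictly positive by hypothesis).

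\textbf{Step 1 (per-step inequality).}\ \  I would start from \eqref{ineq::sadd::pos::rhs} and \eqref{ineq::sadd::neg::rhs} and this time keep $B$ and $D$; using $B\ge \sigma_X \sum_{i:\rho_i>0}\frac{1}{\rho_i}\KL(\bbb_i^*\|\bbb_i^t)$ and $-D \ge \sigma_Y\sum_{i:-\infty<\rho_i<0}\frac{\rho_i-1}{\rho_i}\KL(\bbb_i^*\|\bbb_i^t)$, the sum of the $>0$-side and $<0$-side inequalities becomes
\begin{align*}
&\Phi(\bbb^{t+1}_{>0},\bbb^*_{=0},\bbb^*_{<0})-\Phi(\bbb^*_{>0},\bbb^*_{=0},\bbb^{t+1}_{<0})\\
&\quad\le 2\!\!\sum_{i:\rho_i=0}\!\!\KL(\bbb_i^*\|\bbb_i^t)
 +\!\!\sum_{i:\rho_i>0}\!\!\Big[\tfrac{1+\rho_i}{\rho_i}\KL(\bbb_i^*\|\bbb_i^t)-\tfrac{2}{\rho_i}\KL(\bbb_i^*\|\bbb_i^{t+1})\Big]\\
&\qquad+\sum_{i:\rho_i<0}\Big[\tfrac{2\rho_i-1}{\rho_i}\KL(\bbb_i^*\|\bbb_i^t)-\tfrac{2(\rho_i-1)}{\rho_i}\KL(\bbb_i^*\|\bbb_i^{t+1})\Big],
\end{align*}
where the simplifications $(2-\sigma_X)/\rho_i=(1+\rho_i)/\rho_i$ and $(2-\sigma_Y)(\rho_i-1)/\rho_i=(2\rho_i-1)/\rho_i$ produce exactly the coefficients appearing in the theorem statement.

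\textbf{Step 2 (uniform contraction factor).}\ \  For each strict CES buyer the ratio of the $\KL^t$ coefficient to the $\KL^{t+1}$ coefficient is $\frac{2}{1+\rho_i}$ (if $\rho_i>0$) or $\frac{2(\rho_i-1)}{2\rho_i-1}$ (if $\rho_i<0$); both lie in $(1,2)$. Set $\sigma$ equal to the minimum of these ratios, as in the theorem. Then for every strict CES buyer $i$,
$\sigma\cdot(\text{coefficient of }\KL(\bbb_i^*\|\bbb_i^t))\le(\text{coefficient of }\KL(\bbb_i^*\|\bbb_i^{t+1}))$.

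\textbf{Step 3 (telescoping for strict CES buyers).}\ \  Multiply the per-step inequality by $\sigma^t$ and sum $t=0,\ldots,T-1$. By Step~2, for each strict CES buyer the weighted sum $\sum_t\sigma^t\bigl[(\text{coef})\KL_i^t-(\text{coef})\KL_i^{t+1}\bigr]$ telescopes to an upper bound by the initial $\KL(\bbb_i^*\|\bbb_i^0)$ term with coefficient $\frac{2\rho_i-1}{\rho_i}$ (for $\rho_i<0$) or $\frac{1+\rho_i}{\rho_i}$ (for $\rho_i>0$), exactly as in the theorem.

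\textbf{Step 4 (handling Cobb-Douglas buyers).}\ \  Buyers with $\rho_i=0$ do not obey the mirror-descent picture that drives Steps 1--3; instead I invoke Lemma~\ref{lem::HGPRD::eq::0} to get $\KL(\bbb_i^*\|\bbb_i^t)\le 2^{-t}\KL(\bbb_i^*\|\bbb_i^0)$. Since $\sigma<2$, the geometric sum yields
$\sum_{t=0}^{T-1}\sigma^t\cdot 2\KL(\bbb_i^*\|\bbb_i^t)\le 2\KL(\bbb_i^*\|\bbb_i^0)\sum_{t\ge 0}(\sigma/2)^t=\frac{4}{2-\sigma}\KL(\bbb_i^*\|\bbb_i^0)$,
which is precisely the Cobb-Douglas contribution in the theorem.

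\textbf{Step 5 (conclusion).}\ \  The saddle-point property of $(\bbb^*_{>0},\bbb^*_{<0})$ for $\Phi(\cdot,\bbb^*_{=0},\cdot)$ gives $\Phi(\bbb^{t+1}_{>0},\bbb^*_{=0},\bbb^*_{<0})-\Phi(\bbb^*_{>0},\bbb^*_{=0},\bbb^{t+1}_{<0})\ge 0$ for every $t$, so the weighted sum on the left is at least $\sigma^{T-1}$ times its final term. Dividing by $\sigma^{T-1}$ gives the claimed linear rate.

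\textbf{Main obstacle.}\ \  The subtle point is Step~3: the per-buyer contraction ratios are distinct, and using a single $\sigma=\min_i\sigma_i$ leaves, for buyers with $\sigma_i>\sigma$, a negative residual $\sigma^{t-1}(\sigma A_i^t-B_i^t)$ after re-indexing the telescope; one must verify that these residuals really are nonpositive (which they are, by the choice of $\sigma$) so they can be discarded to produce a clean closed-form upper bound. The Cobb-Douglas buyers then need to be merged in via Lemma~\ref{lem::HGPRD::eq::0}, but because that lemma gives their decay \emph{independently} of the rest of the market, it plugs in without further interaction with the mirror-descent machinery.
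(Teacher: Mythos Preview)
Your proposal is correct and follows essentially the same route as the paper's proof: retain the per-buyer terms $B$ and $D$ in \eqref{ineq::sadd::pos::rhs} and \eqref{ineq::sadd::neg::rhs} to obtain the per-step inequality with coefficients $\tfrac{1+\rho_i}{\rho_i}$ and $\tfrac{2\rho_i-1}{\rho_i}$, multiply by $\sigma^t$ and sum (telescoping on the strict-CES terms via the choice of $\sigma$), handle the Cobb-Douglas contribution with the geometric decay from Lemma~\ref{lem::HGPRD::eq::0}, and then use nonnegativity of the saddle gap to isolate the $t=T$ term. One small wording slip: in Step~1 you invoke the \emph{uniform} bounds $B\ge\sigma_X\sum\cdots$ and $-D\ge\sigma_Y\sum\cdots$, but the displayed inequality (and the claimed ``simplification'' $(2-\sigma_X)/\rho_i=(1+\rho_i)/\rho_i$) actually requires keeping $B$ and $D$ \emph{per buyer}, exactly as the paper does---just drop the reference to $\sigma_X,\sigma_Y$ there and your argument goes through verbatim.
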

\begin{proof}
In this case, on combining \eqref{ineq::sadd::pos::lhs} and \eqref{ineq::sadd::pos::rhs}, \eqref{ineq::saddle::pos::upper} is changed to:
\begin{align*}\label{ineq::saddle::pos::strong}
&\Phi (\mathbf{b}^{t+1}_{>0}, \mathbf{b}^{*}_{=0}, \mathbf{b}^{t+1}_{<0}) + \sum_{i: \rho_i > 0} \frac{1}{\rho_i} \KL(\bbb_{i}^{t+1} || \bbb_{i}^t)  - \sum_{i: \rho_i = 0} \KL(b^*_i || b^t_i)  \\
&~~~~~~~~~\leq \Phi (\mathbf{b}^{*}_{>0}, \mathbf{b}^{*}_{=0}, \mathbf{b}^{t+1}_{<0}) + \sum_{i: \rho_i < 0} \frac{\rho_i - 1}{\rho_i}\KL(\bbb_{i}^{t+1} || \bbb_{i}^t) + \sum_{i: \rho_i > 0} \frac{1 + \rho_i}{\rho_i}  \KL(\bbb_{i}^* || \bbb_{i}^t)  - \sum_{i: \rho_i > 0} \frac{2}{\rho_i}  \KL(\bbb_{i}^* || \bbb_{i}^{t+1}). \numberthis
\end{align*}

Also, on combining \eqref{ineq::sadd::neg::lhs} and \eqref{ineq::sadd::neg::rhs}, \eqref{ineq::saddle::neg::upper} is changed to:
\begin{align*}\label{ineq::saddle::neg::strong}
&- \Phi(\mathbf{b}^{t+1}_{>0}, \mathbf{b}^{*}_{=0}, \mathbf{b}^{t+1}_{<0}) + \sum_{i: \rho_i < 0} \frac{\rho_i - 1}{\rho_i} \KL(\bbb_{i}^{t+1} || \bbb_{i}^t) \\
&\hspace*{0.2in}\leq  -\Phi(\mathbf{b}^{t+1}_{>0}, \mathbf{b}^{*}_{=0}, \mathbf{b}^{*}_{<0}) + \sum_{i: \rho_i > 0} \frac{1}{\rho_i}  \KL(\bbb_{i}^{t+1} || \bbb_{i}^t) + \sum_{i: \rho_i = 0} \KL(\bbb_{i}^* || \bbb_{i}^t) \\
&\hspace*{0.2in}+ \sum_{i: \rho_i < 0} \frac{2\rho_i - 1}{\rho_i}  \KL(\bbb_{i}^* || \bbb_{i}^t)  - \sum_{i: \rho_i < 0} \frac{2(\rho_i - 1)}{\rho_i}  \KL(\bbb_{i}^* || \bbb_{i}^{t+1}).\numberthis
\end{align*}

Combining \eqref{ineq::saddle::pos::strong} and \eqref{ineq::saddle::neg::strong} yields:
\begin{align*}
\Phi(\mathbf{b}^{t+1}_{>0}, \mathbf{b}^{*}_{=0}, \mathbf{b}^{*}_{<0}) - \Phi (\mathbf{b}^{*}_{>0}, \mathbf{b}^{*}_{=0}, \mathbf{b}^{t+1}_{<0}) &\leq 2  \sum_{i: \rho_i = 0} \KL(\bbb_{i}^* || \bbb_{i}^t) \\
&~~~~~~~~~+ \sum_{i: \rho_i < 0} \frac{2\rho_i - 1}{\rho_i}  \KL(\bbb_{i}^* || \bbb_{i}^t)  - \sum_{i: \rho_i < 0} \frac{2(\rho_i - 1)}{\rho_i}  \KL(\bbb_{i}^* || \bbb_{i}^{t+1}) \\
&~~~~~~~~~+ \sum_{i: \rho_i > 0} \frac{1+\rho_i}{\rho_i}  \KL(\bbb_{i}^* || \bbb_{i}^t)  - \sum_{i: \rho_i > 0} \frac{2}{\rho_i}  \KL(\bbb_{i}^* || \bbb_{i}^{t+1}). 
\end{align*}

Recall that $\sigma = \min \left\{\min_{i:\rho_i > 0} \left\{\frac{2}{1+\rho_i}\right\}\hhsp,\hhsp\min_{i:\rho_i < 0} \left\{\frac{2(\rho_i - 1)}{2\rho_i - 1}\right\}\right\}$; then,
\begin{align*}
&\sum_{t = 0}^{T-1} \sigma^t \Big( \Phi(\mathbf{b}^{t+1}_{>0}, \mathbf{b}^{*}_{=0}, \mathbf{b}^{*}_{<0}) - \Phi (\mathbf{b}^{*}_{>0}, \mathbf{b}^{*}_{=0}, \mathbf{b}^{t+1}_{<0})\Big) \\
&\hspace*{1in}\leq 2  \sum_{t = 0}^{T-1} \sum_{i: \rho_i = 0} \sigma^t \cdot \KL(\bbb_{i}^* || \bbb_{i}^t) \\
&\hspace*{1in}~~~~~~~~~+ \sum_{i: \rho_i < 0} \frac{2\rho_i - 1}{\rho_i}  \KL(\bbb_{i}^* || \bbb_{i}^0)  + \sum_{i: \rho_i > 0} \frac{1+\rho_i}{\rho_i}  \KL(\bbb_{i}^* || \bbb_{i}^0). 
\end{align*}

By \eqref{ineq::saddle::zero} and $1< \sigma < 2$,
\begin{align*}
&\sum_{t = 0}^{T-1} \sigma^t \Big( \Phi(\mathbf{b}^{t+1}_{>0}, \mathbf{b}^{*}_{=0}, \mathbf{b}^{*}_{<0}) - \Phi (\mathbf{b}^{*}_{>0}, \mathbf{b}^{*}_{=0}, \mathbf{b}^{t+1}_{<0})\Big) \\
&\hspace*{1in}\leq \frac{4}{2 - \sigma}   \sum_{i: \rho_i = 0} \KL(\bbb_{i}^* || \bbb_{i}^0) \\
&\hspace*{1in}~~~~~~~~~+ \sum_{i: \rho_i < 0} \frac{2\rho_i - 1}{\rho_i}  \KL(\bbb_{i}^* || \bbb_{i}^0)  + \sum_{i: \rho_i > 0} \frac{1+\rho_i}{\rho_i}  \KL(\bbb_{i}^* || \bbb_{i}^0). 
\end{align*}

Therefore,
\begin{align*}
 \Phi(\mathbf{b}^{T}_{>0}, \mathbf{b}^{*}_{=0}, \mathbf{b}^{*}_{<0}) - \Phi (\mathbf{b}^{*}_{>0}, \mathbf{b}^{*}_{=0}, \mathbf{b}^{T}_{<0}) &\leq \frac{1}{\sigma^{T-1}} \Bigg[ \frac{4}{2 - \sigma}   \sum_{i: \rho_i = 0} \KL(\bbb_{i}^* || \bbb_{i}^0) \\
&~~~~~~~~~+ \sum_{i: \rho_i < 0} \frac{2\rho_i - 1}{\rho_i}  \KL(\bbb_{i}^* || \bbb_{i}^0)  + \sum_{i: \rho_i > 0} \frac{1+\rho_i}{\rho_i}  \KL(\bbb_{i}^* || \bbb_{i}^0)\Bigg]. 
\end{align*}
\end{proof}
\section{Relationship between the Eisenberg-Gale Program and our Potential Function}\label{sec:Egand-ourPotFn}

In this section, we show a relationship between our potential function and the objective functions in the Eisenberg-Gale convex program and its dual program.

Let $u_i(\bbx_i)$ be the utility of buyer $i$ when the allocation is $\bbx_i$. Note that $u_i(\bbx_i) = (\sum_j a_{ij}\cdot x_{ij}^{\rho_i})^{\frac{1}{\rho_i}}$ for $1 \geq \rho_i > 0$ and $0 > \rho_i > -\infty$.  For $\rho_i = 0$, $u_i(\bbx_i) = \prod_j x_{ij}^{a_{ij}}$ with $\sum_j a_{ij} = 1$. For $\rho_i = -\infty$, $u_i(\bbx_i) = \min_j \Big\{\frac{x_{ij}}{c_{ij}}\Big\}$.
Our potential function is:
\begin{align*}
&p_j(\mathbf{b}) = \textstyle{\sum_{i}} b_{ij}, \\
&\Phi(\mathbf{b}) = - \sum_{i: \rho_i \neq \{0, -\infty\}} \frac{1}{\rho_i} \sum_{j} b_{ij} \log \frac{a_{ij} b_{ij}^{\rho_i - 1}}{[p_j(\mathbf{b})]^{\rho_i}} - \sum_{i: \rho_i = -\infty} \sum_j b_{ij} \log \frac{b_{ij}}{c_{ij}p_j(\mathbf{b})} + \sum_{i: \rho_i = 0}\sum_j b_{ij} \log p_j(\mathbf{b}).
\end{align*}
Recall that the goal is to minimize $\Phi(\mathbf{b})$ in the substitutes domain and maximize $\Phi(\mathbf{b})$ in the complementary domain.

The objective function for the Eisenberg-Gale program is:
\begin{align*}
\Psi(\mathbf{x}) = \sum_{i} e_i \log u_i(\bbx_i).  
\end{align*}
Recall that the goal is to maximize $\Psi(\mathbf{x})$.

The  objective function for the dual of the Eisenberg-Gale convex program is:
\begin{align*}
\Upsilon(\mathbf{p}) = \max_{\mathbf{x}} \left(\sum_i e_i \log u_i(\bbx_i) + \sum_j p_j \left(1 - \sum_i x_{ij}\right)\right).
\end{align*}
Recall that the goal is to minimize $\Upsilon(\mathbf{p})$.

\subsection{Substitutes Domain}
In the substitutes domain ($\rho_i \geq 0$), let $\mathbf{b}$ be the spending of the buyers; recall that it satisfies $\sum_j b_{ij} = e_i$ for all $i$.  We consider the corresponding allocation $\mathbf{x}(\mathbf{b})$ and the Eisenberg-Gale program, in which $x_{ij} = \frac{b_{ij}}{p_j(\mathbf{b})}$ and $p_j(\mathbf{b}) = \sum_h b_{hj}$. We have the following result.

\begin{theorem}
\begin{align*}
\Psi(\mathbf{x}(\mathbf{b}^*_{>0}, \mathbf{b}^*_{=0})) - \Psi(\mathbf{x}(\mathbf{b}_{>0}, \mathbf{b}^*_{=0})) \leq \Phi(\mathbf{b}_{>0}, \mathbf{b}^*_{=0}) -\Phi(\mathbf{b}^*_{>0}, \mathbf{b}^*_{=0}).
\end{align*}
\end{theorem}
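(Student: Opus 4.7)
The plan is to show that $\Phi(\mathbf{b}_{>0},\mathbf{b}^*_{=0}) + \Psi(\mathbf{x}(\mathbf{b}_{>0},\mathbf{b}^*_{=0}))$ is bounded below by a constant independent of $\mathbf{b}_{>0}$, with equality attained at $\mathbf{b}_{>0} = \mathbf{b}^*_{>0}$; the claimed inequality then falls out by subtracting the two resulting statements.

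First, for each strictly-substitutes buyer $i$ (with $\rho_i>0$), introduce
\[
q_{ij} \;:=\; \frac{a_{ij}\, b_{ij}^{\rho_i-1}}{p_j(\mathbf{b})^{\rho_i}},
\]
so that her contribution to $\Phi$ is $\Phi_i(\mathbf{b}) = -\tfrac{1}{\rho_i}\sum_j b_{ij}\log q_{ij}$. A direct substitution yields the key identity
\[
\sum_j b_{ij}\, q_{ij} \;=\; \sum_j a_{ij}\bigl(b_{ij}/p_j(\mathbf{b})\bigr)^{\rho_i} \;=\; u_i(\mathbf{x}_i(\mathbf{b}))^{\rho_i},
\]
which ties $\Phi_i$ to the $\Psi$-contribution $e_i\log u_i$. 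Applying Jensen's inequality to the concave $\log$, using the probabilities $b_{ij}/e_i$, and then multiplying by $e_i/\rho_i > 0$, gives
\[
\Phi_i(\mathbf{b}) \,+\, e_i\log u_i(\mathbf{x}_i(\mathbf{b})) \;\geq\; \tfrac{e_i}{\rho_i}\log e_i,
\]
with equality exactly when $q_{ij}$ is constant over $j \in \mathrm{supp}(b_i)$. For each Cobb-Douglas buyer, whose spending is held fixed at $b^*_{ij} = e_i a_{ij}$, a direct substitution (using $\sum_j a_{ij} = 1$) shows that her combined contribution $\Phi_i + e_i\log u_i$ equals the constant $e_i\log e_i + e_i\sum_j a_{ij}\log a_{ij}$. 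Summing over all buyers gives $\Phi(\mathbf{b}_{>0},\mathbf{b}^*_{=0}) + \Psi(\mathbf{x}(\mathbf{b}_{>0},\mathbf{b}^*_{=0})) \geq C$ for an explicit constant $C$.

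The remaining step is to verify that this lower bound is attained at $\mathbf{b}_{>0} = \mathbf{b}^*_{>0}$. At market equilibrium, each substitutes buyer's allocation maximizes $u_i$ subject to $\sum_j p^*_j x_{ij} = e_i$, so by KKT stationarity $\partial u_i/\partial x_{ij} = \lambda_i p^*_j$ on $\mathrm{supp}(b^*_i)$. A short calculation from $u_i = (\sum_k a_{ik} x_{ik}^{\rho_i})^{1/\rho_i}$ gives $\partial u_i/\partial x_{ij} = u_i^{1-\rho_i}\cdot p^*_j\cdot q^*_{ij}$, so $q^*_{ij} = \lambda_i u_i^{\rho_i - 1}$ is constant on the support, and Jensen's inequality is tight for each substitutes buyer. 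Hence $\Phi(\mathbf{b}^*_{>0},\mathbf{b}^*_{=0}) + \Psi(\mathbf{x}(\mathbf{b}^*_{>0},\mathbf{b}^*_{=0})) = C$, and subtracting the two relations proves the theorem.

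The main (mild) obstacle is the equality case at equilibrium: recasting the KKT condition $\partial u_i/\partial x_{ij} = \lambda_i p^*_j$ as ``$q^*_{ij}$ is constant on $\mathrm{supp}(b^*_i)$'' takes a line of calculation, but it is conceptually the heart of the argument---Jensen's inequality binds precisely when the buyer is in fact utility-optimizing against the equilibrium prices $\mathbf{p}^*$, which is exactly what characterizes $\mathbf{b}^*_{>0}$.
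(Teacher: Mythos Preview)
Your proof is correct and follows essentially the same route as the paper's: both apply Jensen's inequality to the concave $\log$ with weights $b_{ij}/e_i$ to obtain $\Phi_i(\mathbf{b}) + e_i\log u_i(\mathbf{x}_i(\mathbf{b})) \geq (e_i/\rho_i)\log e_i$, observe that the Cobb-Douglas contributions are identically constant, check that equality holds at equilibrium because $a_{ij}b_{ij}^{\rho_i-1}/p_j^{\rho_i}$ is constant on the support there, and subtract. Your framing via the auxiliary quantity $q_{ij}$ and the explicit KKT verification of the equality case are cosmetic repackagings of the paper's argument, not a different method.
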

\begin{proof}

First, as $\sum_j \bij= e_i$, using the concavity of the $\log$ function yields:
\begin{align*}
\Psi(\mathbf{x}(\mathbf{b})) &= \sum_{i : \rho_i > 0} \frac{e_i}{\rho_i} \log \Bigg(\sum_j a_{ij} \Big(\frac{b_{ij}}{\mathbf{p}(\mathbf{b})}\Big)^{\rho_i}\Bigg)  + \sum_{i: \rho_i = 0} \sum_j e_i a_{ij} \log \frac{b_{ij}}{\mathbf{p}(\mathbf{b})} \\
&\geq \sum_{i: \rho_i > 0} \frac{e_i}{\rho_i} \sum_j \frac{b_{ij}}{e_i} \log \frac{a_{ij}b_{ij}^{\rho_i - 1} e_i}{(\mathbf{p}(\mathbf{b}))^{\rho_i}} + \sum_{i: \rho_i = 0} \sum_j e_i a_{ij} \log \frac{b_{ij}}{\mathbf{p}(\mathbf{b})}\\  
&= \sum_{ij: \rho_i > 0} \frac{b_{ij}}{\rho_i} \log \frac{a_{ij}b_{ij}^{\rho_i - 1}}{(\mathbf{p}(\mathbf{b}))^{\rho_i}} + \sum_{i: \rho_i > 0} \frac{e_i}{\rho_i} \log e_i + \sum_{i: \rho_i = 0} \sum_j e_i a_{ij} \log \frac{b_{ij}}{\mathbf{p}(\mathbf{b})}.
\end{align*}
Note that if  for each $i$ such that $\rho_i > 0$, $a_{ij}\frac{b_{ij}^{\rho_i - 1}}{\sum_h b_{hj}^{\rho_i}}$ 
are the same for all $j$ with $b_{ij} > 0$, then the inequality above will become an equality. 
Also, at the market equilibrium $\mathbf{b}^*$, this condition holds.
Therefore,
\begin{align*}
&\Psi(\mathbf{x}(\mathbf{b}^*_{>0}, \mathbf{b}_{=0}^*)) - \Psi(\mathbf{x}(\mathbf{b}_{>0}, \mathbf{b}_{=0}^*)) \\
&\hspace*{0.2in}\leq \sum_{ij: \rho_i > 0} \frac{b^*_{ij}}{\rho_i} \log \frac{a_{ij}{b^*_{ij}}^{\rho_i - 1}}{(\mathbf{p}(\mathbf{b}^*_{>0}, \mathbf{b}_{=0}^*))^{\rho_i}} + \sum_{i: \rho_i > 0} \frac{e_i}{\rho_i} \log e_i + \sum_{i: \rho_i = 0} \sum_j e_i a_{ij} \log \frac{b^*_{ij}}{\mathbf{p}(\mathbf{b}^*_{>0}, \mathbf{b}_{=0}^*)} \\
&\hspace*{0.2in}\hspace*{0.2in}- \sum_{ij: \rho_i > 0} \frac{b_{ij}}{\rho_i} \log \frac{a_{ij}b_{ij}^{\rho_i - 1}}{(\mathbf{p}(\mathbf{b}_{>0}, \mathbf{b}_{=0}^*))^{\rho_i}} - \sum_{i: \rho_i > 0} \frac{e_i}{\rho_i} \log e_i - \sum_{i: \rho_i = 0} \sum_j e_i a_{ij} \log \frac{b^*_{ij}}{\mathbf{p}(\mathbf{b}_{>0}, \mathbf{b}_{=0}^*)}\\
&\hspace*{0.2in}=-\Phi(\mathbf{b}^*_{>0}, \mathbf{b}_{=0}^*) +\Phi(\mathbf{b}_{>0}, \mathbf{b}_{=0}^*).
\end{align*}
The last equality follows because $b^*_{ij} = e_i a_{ij}$ for $\rho_i = 0$.
\end{proof}

\subsection{Complementary Domain}
In the complementary domain ($\rho_i \leq 0$),  again $\mathbf{b}$ satisfies $\sum_j b_{ij} = e_i$ for all $i$.  Here we consider the corresponding price $\mathbf{p}(\mathbf{b})$ and the dual of the Eisenberg-Gale program, in which $p_j(\mathbf{b})= \sum_h b_{hj}$. We have the following result.
\begin{theorem}
\begin{align*}
\Upsilon(\mathbf{p}(\mathbf{b}_{=0}^*, \mathbf{b}_{<0})) - \Upsilon(\mathbf{p}(\mathbf{b}_{=0}^*, \mathbf{b}^*_{<0})) \leq \Phi(\mathbf{b}_{=0}^*, \mathbf{b}^*_{<0}) - \Phi(\mathbf{b}_{=0}^*, \mathbf{b}_{<0}).
\end{align*}
\end{theorem}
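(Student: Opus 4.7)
The plan mimics the substitutes-domain argument just above in the paper, where Jensen's inequality (concavity of $\log$) established $\Psi(\bbx(\bbb)) \ge -\Phi(\bbb) + \text{const}$ with equality at $\bbb^*$. Here I aim for the ``reverse'' inequality $\Upsilon(\bbp(\bbb_{=0}^*, \bbb_{<0})) \le -\Phi(\bbb_{=0}^*, \bbb_{<0}) + C$ for a constant $C$ independent of $\bbb_{<0}$, with equality at $\bbb_{<0} = \bbb_{<0}^*$; subtracting the equality case from the general inequality then yields the claim.

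The first step is to simplify $\Upsilon(\bbp)$. Because each CES utility $u_i$ is positively homogeneous of degree~$1$, the inner maximization over $\bbx$ in the definition of $\Upsilon$ decomposes across buyers; rescaling each $\bbx_i \mapsto s\bbz_i$ with $\bbp\cdot\bbz_i=1$ and then optimizing over $s>0$ gives
\begin{align*}
\sup_{\bbx_i \ge 0}\bigl[e_i \log u_i(\bbx_i) - \bbp\cdot\bbx_i\bigr] ~=~ e_i\log u_i^*(\bbp) + e_i(\log e_i - 1),
\end{align*}
where $u_i^*(\bbp) = \max\{u_i(\bbz): \bbp\cdot\bbz \le 1\}$ is the unit-budget indirect utility. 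The ``budget'' term $\sum_j p_j(\bbb) = \sum_i e_i$ is a constant independent of $\bbb_{<0}$.

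The second step is to upper-bound buyer $i$'s contribution by $-\Phi_i(\bbb) + c_i$ for each utility regime. For $-\infty < \rho_i < 0$, a standard Lagrangian computation gives $\log u_i^*(\bbp) = \frac{1-\rho_i}{\rho_i}\log\bigl(\sum_j a_{ij}^{1/(1-\rho_i)}\,p_j^{-\rho_i/(1-\rho_i)}\bigr)$, whose outer coefficient is \emph{negative}. Applying Jensen's inequality to the inner log with weights $w_j = b_{ij}/e_i$ (which sum to $1$) lower-bounds the log of the sum; multiplying by the negative coefficient flips the direction, producing an upper bound on $e_i\log u_i^*(\bbp(\bbb))$ which, after expanding the logs and collecting terms, equals $-\Phi_i(\bbb) + (e_i\log e_i)/\rho_i - e_i$. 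For Leontief buyers ($\rho_i=-\infty$), $u_i^*(\bbp) = 1/\sum_j p_j c_{ij}$, so the same Jensen trick applied to $\log(\sum_j p_j c_{ij})$ yields the analogous bound $-\Phi_i^{\text{Leontief}}(\bbb) - e_i$. For Cobb-Douglas buyers ($\rho_i=0$), whose equilibrium spending $b_{ij}^* = e_i a_{ij}$ is fixed independently of prices, the sup is attained at $x_{ij} = e_i a_{ij}/p_j$ and a direct computation gives $-\sum_j b_{ij}^* \log p_j(\bbb) + \text{const}_i$, which is exactly $-\Phi_i^{\text{CD}}(\bbb_{=0}^*,\bbb_{<0}) + \text{const}_i$ with no Jensen step needed.

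Summing the per-buyer bounds and absorbing $\sum_j p_j(\bbb) = \sum_i e_i$ into $C$ yields $\Upsilon(\bbp(\bbb_{=0}^*, \bbb_{<0})) \le -\Phi(\bbb_{=0}^*, \bbb_{<0}) + C$. The equality case of Jensen requires $e_i(a_{ij}/p_j^{\rho_i})^{1/(1-\rho_i)}/b_{ij}$ (or $e_i p_j c_{ij}/b_{ij}$ for Leontief) to be constant in $j$, and a direct check against the Proportional Response fixed-point equations shows this is exactly the equilibrium condition; hence the bound is tight at $\bbb_{<0} = \bbb_{<0}^*$. Subtracting the equality at equilibrium from the general inequality delivers the theorem. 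The main obstacle is the bookkeeping of per-buyer constants and signs so that every $\log e_i$ and $-e_i$ term cancels into a single $C$ that is shared between the equality and inequality cases; propagating the sign-flip when multiplying Jensen through by the negative coefficient $(1-\rho_i)/\rho_i$ is where careful algebra is most needed.
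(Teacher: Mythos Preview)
Your proposal is correct and follows essentially the same strategy as the paper: establish $\Upsilon(\bbp(\bbb_{=0}^*,\bbb_{<0})) \le -\Phi(\bbb_{=0}^*,\bbb_{<0}) + C$ with equality at $\bbb_{<0}=\bbb_{<0}^*$, then subtract. The only cosmetic difference is in how the per-buyer inequality is derived. The paper first uses (a cited lemma giving) the budget-equality condition to rewrite $\Upsilon(\bbp)$ as a maximization over spendings $\bbb$, observes that at the optimizer $\bbb(\bbp)$ the $\log$-of-sum equals the weighted sum of logs, and then invokes convexity of $\frac{1}{\rho_i}\sum_j b_{ij}\log\frac{a_{ij}b_{ij}^{\rho_i-1}}{p_j^{\rho_i}}$ in $\bbb_i$ (with minimum at $\bbb_i(\bbp)$) to obtain the upper bound for arbitrary $\bbb_i$. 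You instead compute the closed-form indirect utility $u_i^*(\bbp)$ via homogeneity, then apply Jensen directly to the $\log$-sum with weights $b_{ij}/e_i$ and flip the inequality via the negative outer coefficient. These are two presentations of the same inequality; your route is arguably more self-contained since it avoids the external lemma and the separate convexity check, at the cost of needing the explicit CES indirect-utility formula.
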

\begin{proof}
In the proof of Lemma~$5.1$ in \cite{Cole:2016:LMG:2940716.2940720},
 it was shown that the maximum point 
$\mathbf{x}$ in $\Upsilon(\mathbf{p})$ satisfies $\sum_j x_{ij} p_j = e_i$ for all $i$. Therefore,
\begin{align*}
\Upsilon(\mathbf{p}) = \max_{\mathbf{x}: \forall i(\sum_j x_{ij} p_j = e_i)} &\sum_{i: 0> \rho_i > -\infty} e_i \log \Big( \sum_j a_{ij} x_{ij}^{\rho_i}\Big)^{\frac{1}{\rho_i}} + \sum_{i: \rho_i = -\infty} e_i \log \min_j \Big\{\frac{x_{ij}}{c_{ij}}\Big\} \\
&\hspace*{1in}+ \sum_{i: \rho_i = 0} \sum_j e_i a_{ij} \log x_{ij}  + \sum_j p_j - \sum_i e_i.
\end{align*}
Let $b_{ij} = x_{ij} p_j$. Then, 
\begin{align*}
\Upsilon(\mathbf{p}) = \max_{\mathbf{b}: \forall i(\sum_j b_{ij} = e_i)} &\sum_{i: 0 > \rho_i > -\infty} e_i \log \Big( \sum_j a_{ij} \Big(\frac{b_{ij}}{p_j}\Big)^{\rho_i}\Big)^{\frac{1}{\rho_i}} + \sum_{i: \rho_i  = -\infty} e_i \log \min \Big\{\frac{b_{ij}}{p_j c_{ij}} \Big\} \\
&\hspace*{1in}+ \sum_{i: \rho_i = 0} \sum_j e_i a_{ij} \log \frac{b_{ij}}{p_{j}}+ \sum_j p_j - \sum_i e_i.
\end{align*}
Let $\mathbf{b}(\mathbf{p})$ be the spending that maximizes
 $\sum_{i: 0 > \rho_i > -\infty} e_i \log \Big( \sum_j a_{ij} \Big(\frac{b_{ij}}{p_j}\Big)^{\rho_i}\Big)^{\frac{1}{\rho_i}} + \sum_{i: \rho_i  = -\infty} e_i \log \min \Big\{\frac{b_{ij}}{p_j c_{ij}} \Big\}+ \sum_{i: \rho_i = 0} \sum_j e_i a_{ij} \log \frac{b_{ij}}{p_{j}}$ under the constraint $\forall i (\sum_j b_{ij} = e_i)$, so 
\begin{align*} \label{eq::dual::3}
\Upsilon(\mathbf{p}) = &\sum_{i: 0 > \rho_i > -\infty} e_i \log \Big( \sum_j a_{ij} \Big(\frac{b_{ij}(\mathbf{p})}{p_j}\Big)^{\rho_i}\Big)^{\frac{1}{\rho_i}} + \sum_{i: \rho_i  = -\infty} e_i \log \min \Big\{\frac{b_{ij}(\mathbf{p})}{p_j c_{ij}} \Big\} + \sum_{i: \rho_i = 0} \sum_j e_i a_{ij} \log \frac{b_{ij}(\mathbf{p})}{p_{j}} \\
&\hspace*{0.6in}+ \sum_j p_j - \sum_i e_i\\
= &\sum_{i: 0 > \rho_i > -\infty} e_i \log \Big( \sum_j a_{ij} \Big(\frac{b_{ij}(\mathbf{p})}{p_j}\Big)^{\rho_i}\Big)^{\frac{1}{\rho_i}} + \sum_{i: \rho_i  = -\infty} e_i \log \min \Big\{\frac{b_{ij}(\mathbf{p})}{p_j c_{ij}} \Big\} + \sum_{i: \rho_i = 0} \sum_j b_{ij}^* \log \frac{b^*_{ij}}{p_{j}} \\
&\hspace*{0.6in}+ \sum_j p_j - \sum_i e_i. \numberthis
\end{align*}
The second equality holds because, for those buyers with Cobb-Douglas utility functions, their optimal spending is always equal to $b_{ij}^* = e_i a_{ij}$ which is independent of the prices.
With a simple calculation, one can show the following:
\begin{enumerate}
\item For those $i$ such that $ 0 > \rho_i > -\infty$, $a_{ij} \frac{b_{ij}(\mathbf{p})^{\rho_i - 1}}{p_j^{\rho_i}}$ 
are the same for different $j$ by the definition of $\mathbf{b}(\mathbf{p})$. Therefore, 
\begin{align*}\label{eq::dual::1}
e_i \log \Big( \sum_j a_{ij} \Big(\frac{b_{ij}(\mathbf{p})}{p_j}\Big)^{\rho_i}\Big)^{\frac{1}{\rho_i}} &= \frac{e_i}{\rho_i} \sum_j \frac{b_{ij}(\mathbf{p})}{e_i} \log  e_i a_{ij} \frac{b_{ij}(\mathbf{p})^{\rho_i - 1}}{p_j^{\rho_i}} \\
&= \frac{1}{\rho_i} \sum_j b_{ij}(\mathbf{p}) \log  a_{ij} \frac{b_{ij}(\mathbf{p})^{\rho_i - 1}}{p_j^{\rho_i}} + \frac{e_i}{\rho_i} \log e_i. \numberthis
\end{align*}

For those $i$ such that $\rho_i = -\infty$, $\frac{b_{ij}(p)}{p_j c_{ij}}$ 
are the same for different $j$ again by the definition of $\mathbf{b}(\mathbf{p})$. Therefore,
\begin{align*} \label{eq::dual::2}
 e_i \log \min \Big\{\frac{b_{ij}(\mathbf{p})}{p_j c_{ij}} \Big\}  = \sum_j b_{ij}(\mathbf{p}) \log  \frac{b_{ij}(\mathbf{p})}{p_j c_{ij}}. \numberthis
\end{align*}
\item For those $i$ such that $0 > \rho_i > -\infty$, we focus on the function $\frac{1}{\rho_i}  b_{ij} \log  a_{ij} \frac{b_{ij}^{\rho_i - 1}}{p_j^{\rho_i}} $. 
By calculation, given $\mathbf{p}$, this function is a convex function. 
In addition, the minimal point $\mathbf{b}_i$ of the function under the constraint $\sum_j b_{ij} = e_i$ is $\mathbf{b}_i(\mathbf{p})$.  Therefore, combining with \eqref{eq::dual::1} yields
\begin{align*}\label{eq::dual::4}
e_i \log \Big( \sum_j a_{ij} \Big(\frac{b_{ij}(\mathbf{p})}{p_j}\Big)^{\rho_i}\Big)^{\frac{1}{\rho_i}} &\leq \frac{1}{\rho_i} \sum_j b_{ij} \log  a_{ij} \frac{b_{ij}^{\rho_i - 1}}{p_j^{\rho_i}} + \frac{e_i}{\rho_i} \log e_i. \numberthis
\end{align*}
The inequality becomes an equality if  $\mathbf{b}_i = \mathbf{b}_i(\mathbf{p})$.

For those $i$ such that $\rho_i = -\infty$, we focus on the function $\sum_j b_{ij}\log  \frac{b_{ij}}{p_j c_{ij}}$. 
Again, given $\mathbf{p}$, this function is a convex function, and the minimal point $\mathbf{b}_i$ of the function under the constraint $\sum_j b_{ij} = e_i$ is $\mathbf{b}_i(\mathbf{p})$. Therefore,
\begin{align*}\label{eq::dual::5}
e_i \log \min \Big\{\frac{b_{ij}(\mathbf{p})}{p_j c_{ij}} \Big\} \leq \sum_j b_{ij}\log  \frac{b_{ij}}{p_j c_{ij}}. \numberthis
\end{align*}
Also, the inequality becomes an equality if $\mathbf{b}_i = \mathbf{b}_i(\mathbf{p})$.
\end{enumerate}
 Combining \eqref{eq::dual::3}, \eqref{eq::dual::4} and \eqref{eq::dual::5}
\begin{align*}
\Upsilon(\mathbf{p}(\mathbf{b}^*_{=0}, \mathbf{b}_{<0}))&\leq \sum_{i : 0 > \rho_i > -\infty}  \Bigg( \frac{1}{\rho_i} \sum_j b_{ij} \log  a_{ij} \frac{b_{ij}^{\rho_i - 1}}{(p_j(\mathbf{b}^*_{=0}, \mathbf{b}_{<0}))^{\rho_i}}+ \frac{e_i}{\rho_i} \log e_i \Bigg) \\
&\hspace*{0.6in}+ \sum_{i: \rho_i = -\infty} \sum_j b_{ij}\log  \frac{b_{ij}}{p_j(\mathbf{b}^*_{=0}, \mathbf{b}_{<0}) c_{ij}} \\
&\hspace*{0.6in} + \sum_{i: \rho_i = 0} \sum_j b_{ij}^* \log \frac{b^*_{ij}}{p_j(\mathbf{b}^*_{=0}, \mathbf{b}_{<0})} + \sum_{j} p_{j}(\mathbf{b}^*_{=0}, \mathbf{b}_{<0}) - \sum_i e_i\\
&= -\Phi(\mathbf{b}^*_{=0}, \mathbf{b}_{<0}) +  \sum_{j} p_{j}(\mathbf{b}^*_{=0}, \mathbf{b}_{<0})  + \sum_{i : \rho_i > -\infty} \frac{e_i}{\rho_i} \log e_i + \sum_{i:\rho_i = 0} \sum_j b_{ij}^* \log b_{ij}^* - \sum_i e_i.
\end{align*}
Since we know $\mathbf{b}^* = \mathbf{b}(\mathbf{p}(\mathbf{b}^*_{=0}, \mathbf{b}^*_{<0}))$,  this leads to equality in \eqref{eq::dual::4} and \eqref{eq::dual::5} in this case. Therefore,
\begin{align*}
\Upsilon(\mathbf{p}(\mathbf{b}^*_{=0}, \mathbf{b}^*_{<0}))&= \sum_{i : 0 > \rho_i > -\infty}  \Bigg( \frac{1}{\rho_i} \sum_j b_{ij} \log  a_{ij} \frac{b_{ij}^{\rho_i - 1}}{(p_j( \mathbf{b}^*_{=0}, \mathbf{b}^*_{<0}))^{\rho_i}}+ \frac{e_i}{\rho_i} \log e_i \Bigg) \\
&\hspace*{0.6in}+ \sum_{i: \rho_i = -\infty} \sum_j b_{ij}\log  \frac{b_{ij}}{p_j(\mathbf{b}^*_{=0}, \mathbf{b}^*_{<0}) c_{ij}} \\
&\hspace*{0.6in} + \sum_{i: \rho_i = 0} \sum_j b_{ij}^* \log \frac{b^*_{ij}}{p_j(\mathbf{b}^*_{=0}, \mathbf{b}^*_{<0})} + \sum_{j} p_{j}(\mathbf{b}^*_{=0}, \mathbf{b}^*_{<0}) - \sum_i e_i \\
&= -\Phi(\mathbf{b}^*_{=0}, \mathbf{b}^*_{<0}) +  \sum_{j} p_{j}(\mathbf{b}^*_{=0}, \mathbf{b}^*_{<0})  + \sum_{i : \rho_i > -\infty} \frac{e_i}{\rho_i} \log e_i + \sum_{i:\rho_i = 0} \sum_j b_{ij}^* \log b_{ij}^* - \sum_i e_i.
\end{align*}
Since $\sum_{j} p_{j}(\mathbf{b}^*_{=0}, \mathbf{b}_{<0})  = \sum_{j} p_{j}(\mathbf{b}^*_{=0}, \mathbf{b}^*_{<0})  = \sum_i e_i$, the theorem follows.
\end{proof}
\section{Correspondence between Market Equilibrium and Minimal Point, Maximal Point and Saddle Point}
\label{correspondence}
\begin{theorem}
The minimal point in the substitutes domain, the maximal point in the complementary domain, and the saddle point in the mixed case all corresponds to the respective market equilibria.
\end{theorem}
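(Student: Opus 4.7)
The plan is to verify in each of the three settings that the first-order (KKT) conditions for an interior stationary point of $\Phi$ (subject to the budget constraints $\sum_j b_{ij}=e_i$ and $b_{ij}\ge 0$) coincide exactly with the characterization of a market equilibrium, and then note that market clearing is automatic. Throughout, prices are defined by $p_j(\mathbf{b})=\sum_h b_{hj}$, so once each buyer spends their full budget we have $\sum_j p_j=\sum_i e_i$ and $\sum_i x_{ij}=\sum_i b_{ij}/p_j=1$ whenever $p_j>0$, which is exactly the market-clearing condition. Thus the whole task reduces to matching \emph{buyer optimality}, i.e.\ showing that the stationarity condition on $\Phi$ says each buyer is purchasing a utility-maximizing bundle at the prices $p_j(\mathbf{b})$.

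First, I would handle the substitutes domain. Writing out $\nabla_{b_{ij}}\Phi=\tfrac{1}{\rho_i}\bigl(1-\log\tfrac{a_{ij}b_{ij}^{\rho_i-1}}{p_j^{\rho_i}}\bigr)$ (already derived in the paper), the KKT condition for the minimum of the convex function $\Phi$ subject to $\sum_j b_{ij}=e_i$ and $b_{ij}\ge 0$ states that there exists $\lambda_i$ such that $\nabla_{b_{ij}}\Phi=\lambda_i$ whenever $b_{ij}>0$ and $\nabla_{b_{ij}}\Phi\ge\lambda_i$ otherwise. Rearranging, this is equivalent to saying that $\tfrac{a_{ij}b_{ij}^{\rho_i-1}}{p_j^{\rho_i}}$ is constant across all $j$ in the support of $\bbb_i$ and no larger at any $j$ outside the support. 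A direct calculation from the Lagrangian of the single-buyer program $\max u_i(\bbx)$ s.t.\ $\bbx\cdot\bbp\le e_i$ shows that, for a CES buyer with $0<\rho_i<1$ and prices $\bbp$, this is exactly the condition characterizing the utility-maximizing demand (and $b_{ij}=x_{ij}p_j$). The case $\rho_i=1$ (linear) is the familiar ``spend only on bang-per-buck maximizers'' condition, which is the same statement with $\rho_i=1$.

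The complementary domain is symmetric: $\Phi$ is concave there, so a maximum under the same budget constraints is characterized by the same KKT equalities (up to signs); for finite $\rho_i<0$ one recovers the same ratio condition $\tfrac{a_{ij}b_{ij}^{\rho_i-1}}{p_j^{\rho_i}}=\text{const}$, which again is CES buyer optimality, while for $\rho_i=-\infty$ the gradient $-\log\tfrac{b_{ij}}{c_{ij}p_j}$ constant on the support is precisely the Leontief optimality condition $x_{ij}/c_{ij}$ equalized across $j$. The mixed (saddle-point) case combines these: at a saddle point, $\bbb_{>0}^*$ minimizes $\Phi(\cdot,\bbb_{<0}^*)$ and $\bbb_{<0}^*$ maximizes $\Phi(\bbb_{>0}^*,\cdot)$, and the two sets of KKT conditions decouple because the feasible set is a product of per-buyer simplices; so each buyer's first-order condition is exactly buyer optimality at the common prices $p_j^*=\sum_h b_{hj}^*$. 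For the converse direction in all three cases, starting from a market equilibrium $(\bbx^*,\bbp^*)$ one defines $b_{ij}^*=x_{ij}^*p_j^*$, verifies $\sum_j b_{ij}^*=e_i$ (budgets spent) and $p_j^*=\sum_h b_{hj}^*$ (market clearing), and reads off the stationarity from CES buyer optimality, producing a Lagrange multiplier that makes the KKT system hold.

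The main obstacle I anticipate is the boundary behavior: some coordinates $b_{ij}$ may be zero at equilibrium (e.g.\ for linear or Leontief utilities where the equilibrium is not spending-unique), and $\Phi$ contains logarithms of $b_{ij}$, so one must justify treating $\nabla_{b_{ij}}\Phi$ carefully. I would deal with this by observing that the terms $\tfrac{1}{\rho_i}b_{ij}\log b_{ij}^{\rho_i-1}$ extend continuously to $b_{ij}=0$ in the substitutes case (since $\rho_i>0$ makes the coefficient $(\rho_i-1)/\rho_i\le 0$ and $b_{ij}\log b_{ij}\to 0$), that the relevant KKT inequalities on zero coordinates need only the limit of the directional derivative from the interior of the simplex, and that a boundary zero-spending $b_{ij}^*=0$ in an equilibrium corresponds to $p_j^*>0$ being strictly above buyer $i$'s reservation ratio, which is exactly the inequality version of the KKT condition. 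A similar (but slightly more delicate) argument handles $\rho_i=-\infty$: the Leontief equilibrium always has $b_{ij}^*>0$ on all goods with $c_{ij}>0$, so the relevant KKT equality holds on the support. With this boundary case disposed of, the equivalence ``market equilibrium $\Leftrightarrow$ KKT point of $\Phi$'' is immediate in every domain.
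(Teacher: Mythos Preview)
Your proposal is correct and takes essentially the same approach as the paper: both verify that the first-order (KKT) optimality conditions for $\Phi$ under the budget constraints coincide with the buyer-optimality conditions characterizing a market equilibrium, with market clearing automatic from the definition $p_j=\sum_h b_{hj}$. Your treatment is in fact more thorough than the paper's short verification---you are explicit about the decoupling of the saddle-point conditions across buyers and about the boundary behavior for the linear and Leontief cases---but the underlying argument is the same.
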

\begin{proof}
Suppose we have a market equilibrium $(\mathbf{b}, \mathbf{p})$. At a market equilibrium, each buyer maximizes her utility function. With  some calculation, we obtain:
\begin{itemize}
\item for $\rho_i = 1$: $b_{ij} > 0$ only if $j$ maximizes $\left\{\frac{a_{ij}}{p_j} \right\}$;
\item for $\rho_i = 0$: $b_{ij} = \lambda_i a_{ij}$; 
\item for $\rho_i = -\infty$: $b_{ij} = \lambda_i c_{ij} p_j$;
\item for other $\rho_i$: $b_{ij} = \lambda_i a_{ij} \left( \frac{b_{ij}}{p_j} \right)^{\rho_i}$.
\end{itemize}

Also, note that for the function $\Phi(\cdot)$, for those $i$ for which $\rho_i \neq \{0, -\infty\}$,
\begin{align*}
\nabla_{b_{ij}} \Phi(\mathbf{b}) &= - \frac{1}{\rho_i} \log a_{ij} - \frac{\rho_i - 1}{\rho_i}(\log b_{ij} + 1) + \log p_j(\mathbf{b}) + \sum_{h}  b_{hj} \frac{1}{p_j(\mathbf{b})} \\
&= \frac{1}{\rho_i}\Big(1 - \log \frac{a_{ij} b_{ij}^{\rho_i - 1}}{[p_j(\mathbf{b})]^{\rho_i}}\Big);
\end{align*}
and for those $i$ for which $\rho_i = -\infty$,
\begin{align*}
\nabla_{b_{ij}} \Phi(\mathbf{b}) = - \log \frac{ b_{ij}}{c_{ij} p_j(\mathbf{b})}.
\end{align*}

Remember that  in both case prices are the sum of the spending. So it's easy to verify that the optimal condition for $\Phi(\cdot)$ will be a market equilibrium, and vice versa. 
The result follows.
\end{proof}
\end{document}